\documentclass[a4paper,twocolumn,USenglish]{article}

\ifx\directlua\undefined\ifx\XeTeXcharclass\undefined
 \else\RequirePackage[no-math]{fontspec}[2017/03/31]\fi 
 \else\RequirePackage[no-math]{fontspec}[2017/03/31]\fi

\usepackage{amsmath,amssymb,amsfonts}
\usepackage{amsthm}
\usepackage{hyperref}
\hypersetup{
 colorlinks=false,
 pdfborder={0 0 0},
 pdfborderstyle={/S/U/W 0},
}

 \usepackage[vmargin=2cm, hmargin=1.5cm]{geometry} 
 
 \usepackage{lmodern}

\usepackage{titling} 
 \pretitle{\begin{center} \sffamily\huge} 
 
 \posttitle{\par\end{center}\vspace{0em}}
 
 \preauthor{\begin{center}\sffamily\large} 
 \postauthor{\par\end{center}\vspace{0.5em}}
 \predate{} \date{} \postdate{}

 \usepackage{titlesec} 
 \titleformat{\section}{ \sffamily\Large\bfseries}{\thesection}{1em}{} 
 \titleformat{\subsection}{\sffamily\normalsize\bfseries}{\thesubsection}{1em}{}
 \titleformat{\subsubsection}{\sffamily\normalsize\itshape}{\thesubsubsection}{1em}{} 
\usepackage{fancyhdr}
\usepackage{caption}

 

\usepackage[T1]{fontenc}

\usepackage{enumitem}
\usepackage[numbers,square]{natbib}
\usepackage{color}

\usepackage{subfig}
\usepackage{tikz}
\usetikzlibrary{arrows.meta,bending,shadows,shapes.callouts}
\usetikzlibrary{calc} 
\usetikzlibrary{fit}
\usepackage{pgfplots}

\newtheoremstyle{dgdef_var}
 {.5\baselineskip}
 {.5\baselineskip}
 {\normalfont}
 {}
 {\sffamily\itshape}
 {.}
 {.5em}
 {\thmname{$\blacktriangleright$~#1}\thmnumber{~#2}\thmnote{~(#3)}} 

\theoremstyle{dgdef_var}

\newtheorem{definition}{Definition}[section]
\newtheorem{theorem}[definition]{Theorem} 
\newtheorem{remark}[definition]{Remark} 
\newtheorem{example}[definition]{Example} 
 
\newtheorem{lemma}[definition]{Lemma} 
\newtheorem{proposition}[definition]{Proposition} 
\newtheorem{assumption}[definition]{Assumption}

\newcommand{\level}[0]{\upsilon}

\newcommand{\fine}[0]{ \vspace{0pt}\hfill $\triangleleft$}

\newcommand{\absatz}[1]{
 \par\pagebreak[3]\bigskip 
 \noindent \textbf{\textsf{#1}} 
 \\*[0.25em] 
 \noindent\ignorespaces 
}

\begin{document}

\allowdisplaybreaks

\author{Tessina H. Scholl\thanks{\sffamily The author is with the Institute for Automation and Applied Informatics (IAI), Karlsruher Institute of Technology (KIT), Hermann-von-Helmholtz-Platz 1, 76344 Eggenstein-Leopoldshafen, Germany, Email: tessina.scholl@kit.edu}}
\title{A Lyapunov-Based Perspective on Absolute Stability}

\maketitle
\thispagestyle{fancy} 
\absatz{Abstract}
This article presents a unifying perspective on absolute stability concepts. In particular, it develops a Lyapunov-like explanatory framework for a nonscalar circle criterion with its small-gain and strict-passivity special cases. To this end, a general defining inequality for a Lyapunov-like function is proposed that avoids strict definiteness conditions, enabled by a strengthening of the sector constraint. We discuss different ways to derive a quadratic solution: via a linear matrix inequality (LMI), an algebraic Riccati equation, and a matrix equation. By exploiting the Kalman--Yakubovich--Popov (KYP) lemma, classical frequency-domain results are recovered. A passivity-index-based result is derived that simplifies the evaluation. Overall, the presented interrelations may be useful for both analysis and teaching. 
\\[2em]\noindent\textbf{\textsf{Keywords:}} {\itshape absolute stability, Lyapunov, robustness, nonlinear systems, sector constraints, KYP lemma}

\section{Introduction}

The main objective of this article is to present a unifying perspective on absolute stability results, emphasizing the interrelations between various concepts. To this end, a simple and transparent Lyapunov-like explanatory framework for a nonscalar circle criterion with its small-gain and strict-passivity special cases is developed. Explanatory frameworks for absolute stability criteria are nowadays predominantly Plancherel--Parseval-based approaches, which typically arise from an input-output-operator setting \cite{Megretski.1997,Desoer.2009, Willems.1971}. 
However, Lyapunov-based approaches are also possible \cite{Khalil.2002,Willems.1972b,Scherer.2022,Yakubovich.2004,Gusev.2006,Aizerman.1965,Yakubovich.1973,Yakubovich.1971}. Historically, both even emerged in parallel; see, e.g., \cite{Gusev.2006,Liberzon.2006} for extensive literature reviews. Plancherel--Parseval-based approaches including the concept of integral quadratic constraints (IQCs) \cite{Megretski.1997} are more far-reaching, but Lyapunov-based approaches offer a key advantage: regional stability results are still obtained if the derived sector constraints are only met locally~\cite{Khalil.2002}. 

We refer to the presented result as a Lyapunov-like function since the approach in this article does not insist on strict definiteness properties, which are, e.g., encountered in \cite[Thm.~7.1]{Khalil.2002}, \cite[Thm.~1]{Scherer.2022}, \cite[Thm.~2.2]{Yakubovich.2004}, \cite[eq.~11]{Gusev.2006}, and \cite[p.~29]{Aizerman.1965}. As a consequence, the defining inequality for the Lyapunov-like function is particularly simple, and unbounded regional attractivity statements can be obtained. The latter is demonstrated by an example at the end of this article. 
For the special case of linear norm bounds on the perturbation, a theorem that also relies on a {possibly} semidefinite Lyapunov-like function is given in \cite[Thm.~5.6.25]{Hinrichsen.2005}. In contrast, the approach in the present article is more general: it is not limited to a specific sector, nor to a quadratic solution, nor to a particular way of obtaining it, and (since LaSalle is not used) it remains applicable with time-varying perturbation terms. 
 
The key aspect of the proposed construction principle is the introduction of a possibly nonquadratic offset function in the sector description. This offset function induces a refinement of the sector constraint tailored to the desired properties of the derivative of the Lyapunov-like function. The article focuses on attractivity of an equilibrium, to which end \mbox{{class-$\mathcal K$}} functions are used as offset functions. The approach builds on the author's previous work \cite{Scholl.2025}, which, however, considered time-delay systems and relied on the infinite-dimensional KYP lemma. 
{Its relation to finite-dimensional theory is already presented in} \cite[Ch.~6]{Scholl.2024c}, {which served as a foundation for this article}.

The article is structured as follows. 
Sec.~\ref{sec:problem} introduces the considered problem, { and } 
Sec.~\ref{sec:sectorDescription} 
the sector constraints fundamental to the concept of absolute stability. Sec.~\ref{sec:idea} 
describes the construction principle for the Lyapunov-like function. 
As a main result, Sec.~\ref{sec:attractivityConclusion} shows its usage.
 Secs.~\mbox{\ref{sec:ApproachesV}-\ref{sec:FrequencyDomainResults}} explain how to compute such a function and under which conditions it exists. Finally, Sec.~\ref{sec:Example} presents an example before Sec.~\ref{sec:Conclusion} concludes the paper.

\textit{Notation.}
Given $x\in \mathbb R^n$, $\|x\|$ denotes an arbitrary norm on~$\mathbb R^n$ and $\|x\|_2$ the Euclidean norm. Moreover, $0_n\in \mathbb R^n$ and $0_{n\times m}\in \mathbb R^{n\times m}$ (in short $0$) are zero vector and matrix, $I_n\in \mathbb R^{n\times n}$ (in short $I$) is the identity matrix, and $1_{n\times m}$ a matrix of ones. 
The symmetric part of $A$ is $\mathrm{sym}(A)=\tfrac 1 2 (A+A^\top)$, {whereas $\mathrm{skw}(A)=\tfrac 1 2 (A-A^\top)$}. If $A\in \mathbb C^{n\times n}$, then $\mathrm{He}(A)=\tfrac 1 2 (A+A^H)$, where $A^H=\overline {A^\top}$ describes the conjugate transpose. Moreover, $\mu_2(A)=\lambda_{\max}(\tfrac 1 2 (A^H+A))$ is the logarithmic norm w.r.t.\ the spectral norm, which is $\|A\|_2=\sqrt{\lambda_{\max} (A^H A) }$. 
If $Q=Q^H$, $\lambda_{\min(\max)}(Q)$ is the smallest (largest) eigenvalue. 
Positive \mbox{(semi-)}\allowbreak definiteness is 
$Q\succ (\succeq) 0_{n\times n}$, 
implicitly requiring $Q=Q^H$. 
 The interior of a set $\Omega$ is $\mathrm{int}(\Omega)$. 
{The upper right Dini derivative of $V\colon \mathbb R^n\to\mathbb R$ at $x_0\in \mathbb R^n$ along a {classical} solution $x(t;t_0,x_0)$ of $\dot x(t)=f(t,x(t))$ with $x(t_0;t_0,x_0)=x_0$ is {denoted by} $D_{f(t_0,\cdot)}^+ V(x_0)=\limsup_{h \to 0^+} \frac{V(x(t_0+h;t_0,x_0))-V(x_0)}{h}$}. 
The set of {class-$\mathcal K$} functions is {defined by} $\mathcal K=\{\kappa\in C([0,\infty),\mathbb R_{\geq 0}): {\kappa(0)=0},\text{ strictly increasing}\}$.

\section{Problem description} \label{sec:problem}
We consider a nonlinear, possibly uncertain, system  $\dot x(t)=f(t,x(t))$. Well-posedness is addressed by  Assumption~\ref{asm:aLocLip} below. Assuming that   $f(t,0_n)=0_n$ for all   $t\in \mathbb R$, we are interested in properties of the zero equilibrium. The main result of this article concerns attractivity of this equilibrium and, in particular, the regional or global character of this property. In addition, a result on uniform asymptotic stability will also be derived.
\begin{definition}[Attractivity]
Let $x(t)=x(t;t_0,x_0)$ denote a {maximal} solution with initial value $x(t_0)=x_0\in \mathbb R^n$ of a system with an equilibrium at $x=0_n$. The zero equilibrium is \textit{locally attractive} for a given initial time $t_0\in \mathbb R$ if there exists a set $S\subseteq \mathbb R^n$ with $0_n\in \mathrm{int}(S)$ such that 
\begin{align}
\forall x_0\in S: \quad \|x(t;t_0,x_0)\|\to 0 \quad \text{ as } t\to \infty, 
\end{align} 
and 
\textit{globally attractive} if $S=\mathbb R^n$. 
\fine\end{definition}

As a first step, we decompose the system into
\begin{align}\label{eq:ODEg}
\dot{x}(t)=A x(t)+g(t,x(t)), 
\end{align} 
with a nominal linear part $\dot x=Ax$, $A\in \mathbb R^{n\times n}$, and a possibly nonlinear term {$g\colon [t_0,\infty)\times \mathbb R^n\to \mathbb R^n$} 
that is henceforth referred to as \textit{perturbation}. For instance, 
\begin{samepage}\begin{itemize}
\item $g(t,x)$ might be the remainder of a {time-invariant} linearization (otherwise differentiability is not assumed)
\item $g(t,x)$ might involve a saturation term or {related nonlinearities}, or
\item $g(t,x)=\Delta(t) x$, $\Delta(t)\in \mathbb R^{n\times n}$ might address additive uncertainties in the system matrix $A$. 
\end{itemize}\end{samepage}

\absatz{Incorporation of the perturbation structure} 
To incorporate knowledge on the structure of $g$, {two matrices}
\begin{align*}
B\in \mathbb R^{n\times m} \quad \text{and} \quad C\in \mathbb R^{p\times n} 
\end{align*} 
are to be chosen, with $m\leq n$ and $p\leq n$. The matrix $B$ is intended to encode which components of $\dot x$ are affected by the perturbation. The matrix $C$ is intended to encode on which components (or linear combinations of components) the perturbation relies. Based on the choice of these matrices (which is nonunique and forms a degree of freedom), we can focus on the corresponding nonlinear or uncertain core perturbation function $a\colon [t_0,\infty)\times \mathbb R^p\to \mathbb R^m$ from the decomposition\footnote{The negative sign in (\ref{eq:ga}) {is intended} to resemble a negative feedback $u=-a(t,Cx)$ as input to $\dot x=Ax+Bu$, which is a classical representation of Lur'e systems \cite{Khalil.2002}.}
\begin{align}\label{eq:ga}
g(t,x)=- Ba(t,Cx). 
\end{align} 

In the end, the
 objective is to obtain constraints on this specifically acting {function $a$} 
under which---with some mild assumptions---the perturbed system {(\ref{eq:ODEg}),} 
\begin{align} \label{eq:ABaC}
\dot{x}(t)=A x(t)-Ba(t,Cx(t)){,}
\end{align}  
can be proven to have an attractive zero equilibrium. 

If no structural information {is to} be incorporated, {choosing $B=C=I_n$ directly addresses} $a(t,x)=-g(t,x)$ from (\ref{eq:ODEg}). 
{The following example shows the non-uniqueness of the decomposition (\ref{eq:ga}). Moreover, it demonstrates {that, even in a structured case, one may choose $p=n$ and construct $C$ as a full-rank square matrix} 
(which will not be presumed in the following, but becomes relevant in Thm.~\ref{thm:ConvergenceFullRankC}).} 
\begin{example}\label{exmp:BaC}
We write $g(t,x)=\tilde g(x)$ ($a(t,\zeta)=\tilde a(\zeta)$) in the time-invariant case. For the example (with saturation)
\begin{align*}
\dot x=A x +\tilde g(x)\quad \text{with } \tilde g(x)=\begin{bmatrix} 0 & \cdots & 0 & -\mathrm{sat}(x_1) \end{bmatrix}^\top, 
\end{align*}
possible decompositions (\ref{eq:ga}) include, for instance, 
\begin{align*}
\tilde g(x)=\left[ \begin{smallmatrix} 0 \\\vdots \\ 0 \\ -\mathrm{sat}(x_1)\end{smallmatrix}\right]&= - \underbrace{\left[\begin{smallmatrix} 0 \\\vdots \\ 0 \\ 1\end{smallmatrix}\right]}_{B} \tilde a(\zeta), \quad 
\begin{aligned}[t]
\tilde a(\zeta)&=\mathrm{sat}(\zeta) \\
\text{with } \zeta &= \underbrace{\left[\begin{smallmatrix} 1 & 0 & \cdots & 0 \end{smallmatrix}\right]}_{C} x
\end{aligned}
\end{align*}
or, if a {rank-$n$} matrix $C$ is desired, 
\begin{align*}
\tilde g(x)
&= - \underbrace{\left[\begin{smallmatrix} 0 \\\vdots \\ 0 \\ 1\end{smallmatrix}\right]}_{B} \tilde a(\zeta), \quad 
\begin{aligned}[t]
\tilde a([\zeta_1,\ldots,\zeta_n]^\top)&=\mathrm{sat}(\zeta_1) 
\\
\text{with } \zeta &= \underbrace{\left[\begin{smallmatrix} 1 & & & \\
& \varepsilon & & \\
&&\ddots &\\
&&&\varepsilon \end{smallmatrix}\right]}_{C} x
\\[-3em]\nonumber
\end{aligned}
\end{align*}
with 
{$\varepsilon\neq 0$ arbitrarily small in magnitude}. 
\fine\end{example}
To ensure well-posedness of the system and to simplify the considerations, we {assume the following.} 
\begin{assumption}\label{asm:aLocLip}
The 
function $a\colon [t_0,\infty)\times \mathbb R^p\to \mathbb R^m$ is continuous in $(t,\zeta)$ and locally Lipschitz in $\zeta$ (uniformly w.r.t.\ $t$). Moreover, $a(t,0_p)=0_m$ for all $t\in [t_0,\infty)$, and $a(t,\zeta)$ is uniformly bounded in $t$ on bounded sets of $\zeta$. 
\fine\end{assumption}

\section{Sector description} \label{sec:sectorDescription}

In the field of absolute stability, constraints on the core perturbation {function $a$ in (\ref{eq:ABaC})} 
take the form of \textit{sector} conditions. 

To simplify the presentation, we focus in this section on time-invariant perturbations
\begin{align}\label{eq:tildea}
a(t,\zeta)=\tilde a(\zeta). 
\end{align}
For time-varying perturbations $a(t,\zeta)$, the constraints are required to hold uniformly for all times $t\geq t_0$.

{The sector describes a set (in fact, a cone) of $(\zeta,\alpha)$ pairs in $\mathbb R^p\times \mathbb R^m$ defined via a quadratic inequality}. 
{The non-gray regions in} {Fig.~\ref{fig:sectorLinearNormBound} and \ref{fig:reducedsector} {are examples of sectors} (in {these and all subsequent figures, we consider $p=m=1$}). 
In the end, a slight modification of this set {(unshaded region in Fig.~\ref{fig:reducedsector})} will characterize} admissible pairs of $\zeta$ 
and corresponding function value $\alpha=\tilde a(\zeta)$.
A key idea in the subsequent analysis is to understand 
{a nondegenerate sector} 
as the \textit{zero superlevel set} of {an} \textit{indefinite quadratic form}, { see Fig.~\ref{fig:plotw}. That is, the sector is the set of those $(\zeta,\alpha)$ for which }
\begin{align} \label{eq:sectorConditionw}
w(\zeta,\alpha)\geq 0, 
\end{align}
 and $w$ is a quadratic form 
\begin{align}\label{eq:wzetaalphaPi}
w(\zeta,\alpha):=\begin{bmatrix} \zeta \\ \alpha\end{bmatrix} ^\top \underbrace{\begin{bmatrix} \Pi_{\zeta\zeta} & \Pi_{\zeta\alpha} \\ \Pi_{\zeta\alpha}^\top & \Pi_{\alpha\alpha} \end{bmatrix}}_{\Pi=\Pi^\top} \begin{bmatrix} \zeta \\ \alpha\end{bmatrix}
\end{align}
{of} {a chosen indefinite 
{(or, as a degenerate case, semidefinite)} 
symmetric matrix} $\Pi\in \mathbb R^{(p+m)\times (p+m)}$. 
The graph of the function $\zeta\mapsto \tilde a(\zeta)$ lies in the sector if and only if
\begin{align}
w(\zeta,\tilde a(\zeta))\geq 0
\end{align} for all $\zeta\in \mathbb R^p$, see Fig.~\ref{fig:zerosuperlevelset}.

\begin{figure}
\centering

\subfloat[Indefinite quadratic form {from (\ref{eq:wLinearNormBound})}
 \label{fig:plotw}
]{
\includegraphics[]{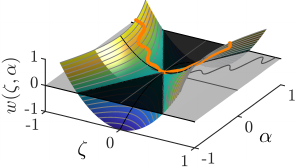}
}\hfill 
\subfloat[Sector {$[-\gamma,\gamma]$} \label{fig:sectorLinearNormBound}
]{ 
\includegraphics[]{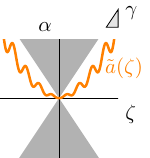}
}
\caption{ {Sector as a superlevel set $\{(\zeta,\alpha)\in \mathbb R^2: 
w(\zeta,\alpha)\geq 0\}$.} 
}
\label{fig:zerosuperlevelset}
\end{figure}

\begin{remark}[Relation to IQCs]
If $t\mapsto (\zeta_{L_2}(t),\alpha_{L_2}(t))$ is square integrable and meets the sector condition (\ref{eq:sectorConditionw}) for all $t\geq t_0$, then $\int_{t_0}^{\infty} w(\zeta_{L_2}(t),\alpha_{L_2}(t))\,\mathrm d t\geq 0$. The latter is an integral quadratic constraint (IQC) \cite{Megretski.1997} with $\Pi$ from (\ref{eq:wzetaalphaPi}) being a constant multiplier. Thus, as an alternative approach to the one in this article, the IQC theorem \cite{Megretski.1997} could be used---at least for global ($\zeta\in \mathbb R^p$) results. 
\fine\end{remark}

\absatz{Sector: Linear norm bound}
The simplest sector constraint {on the perturbation (\ref{eq:tildea}) in (\ref{eq:ABaC})} is a linear bound on the norm of $\tilde a(\zeta)$. That is, 
\begin{align}\label{eq:LinearNormBound}
\|\tilde a(\zeta)\|_2\leq \gamma \|\zeta\|_2 
\end{align}
with a parameter $\gamma\geq 0$. 
In short notation: the graph of $\tilde a$ is in the sector $[-\gamma,\gamma]$ (i.e., the sector bounds have a slope of $\pm\gamma$, see Fig.~\ref{fig:sectorLinearNormBound} for a scalar example). 
This linear norm bound (\ref{eq:LinearNormBound}) can be rewritten as 
\begin{align}
 \gamma^2 \zeta^\top \zeta - \tilde a^\top(\zeta)\tilde a(\zeta) &\geq 0 
\end{align}
or, equivalently (see Fig.~\ref{fig:plotw}), as
\begin{align}
 w(\zeta,\tilde a(\zeta))&\geq 0 \quad \text{with } \nonumber 
\\
 w(\zeta,\alpha)&= \begin{bmatrix} \zeta \\ \alpha \end{bmatrix}^\top
 \underbrace{\begin{bmatrix} \gamma^2 I_p & 0 \\
0 & -I_m
\end{bmatrix} }_{\Pi=\Pi^\top}
 \begin{bmatrix} \zeta \\ \alpha \end{bmatrix}. \label{eq:wLinearNormBound} 
\end{align}
\absatz{Sector: Strict output passivity}
Another important type of constraint is given by the strict output passivity of $\zeta\mapsto \tilde a(\zeta)$ 
with $p=m$, 
\begin{align}\label{eq:passivityineq}
\tilde a^\top\!(\zeta)\, \zeta \geq \rho \|\tilde a(\zeta)\|_2^2 
\end{align}
with output passivity index $\rho>0$. In short notation: the graph of $\tilde a$ is in the sector $[0,1/\rho]$ (i.e., the lower sector bound has a slope of zero and the upper sector bound has a slope of $1/\rho$, see Fig.~\ref{fig:sector_rho}). 
Since 
(\ref{eq:passivityineq}) is equivalent to 
\begin{align}
\tilde a^\top(\zeta) \big(\zeta-\rho \tilde a(\zeta)\big) \geq 0, 
\end{align}
this sector is described by the indefinite quadratic form 
\begin{align}
 w(\zeta,\tilde a(\zeta))&\geq 0 \quad \text{with } \nonumber 
\\*
 w(\zeta,\alpha)&= \begin{bmatrix} \zeta \\ \alpha \end{bmatrix}^\top
 \underbrace{\mathrm{sym}\left( \begin{bmatrix} 0 & 0 \\
I & -\rho I
\end{bmatrix} \right)}_{\Pi=\Pi^\top}
 \begin{bmatrix} \zeta \\ \alpha \end{bmatrix}. 
\label{eq:wOutputPassivity} 
\end{align}
The involved symmetrization relies on the fact that, for any possibly nonsymmetric real matrix~$M$, it holds
\begin{align}\label{eq:sym}
z^\top M z = z^\top \mathrm{sym}(M)z, \quad \mathrm{sym}(M)\smash{\stackrel{\mathrm{def}}= }\tfrac 1 2 (M+M^\top)
\end{align}
{since} $M=\mathrm{sym}(M)+\mathrm{skw}(M)$ and $z^\top \mathrm{skw}(M)z=0$.
\absatz{Sector: General sector constraints}
General sector bounds $[K_1,K_2]$ with lower slope $K_1\in \mathbb R^{m\times p}$ and upper slope $K_2\in \mathbb R^{m\times p}$, are described by 
\begin{align}\label{eq:condCircle} 
(-K_1\zeta + \tilde a(\zeta))^\top (K_2 \zeta- \tilde a(\zeta)) \geq 0, 
\end{align}
{see Fig.~\ref{fig:sector_k1k2} for the scalar case with $K_1=k_1<k_2=K_2$}. In terms of a quadratic form, (\ref{eq:condCircle}) becomes
\begin{align}
 w(\zeta,\tilde a(\zeta))&\geq 0 \quad \text{with } 
\nonumber \\
 w(\zeta,\alpha)&= \begin{bmatrix} \zeta \\ \alpha \end{bmatrix}^\top
 \underbrace{\mathrm{sym}\left(\begin{bmatrix} - K_1^\top K_2 & K_1^\top \\
 K_2 & -I
\end{bmatrix} 
\right) }_{\Pi=\Pi^\top}
 \begin{bmatrix} \zeta \\ \alpha \end{bmatrix}. 
\label{eq:wCircle} 
\\[-3em]\nonumber
\end{align}
\absatz{{Choice of the sector}}
{The sector type should be chosen that fits best to $\tilde a(\zeta)$, at least locally around $\zeta=0$. The parameter in the matrix $\Pi$, i.e., $\gamma>0$ in (\ref{eq:wLinearNormBound}), $1/\rho>0$ in (\ref{eq:wOutputPassivity}), or, typically, 
 $k_1\in \mathbb R$ in $K_1=k_1 I_m$ with $m=p$ in (\ref{eq:wCircle}), is left as a degree of freedom. Its admissible range depends on the nominal system's robustness under the perturbation structure~$B,C$. 

For instance, for the function $\tilde a$ in Fig.~\ref{fig:sector_rho}, the sector $[0, 1/\rho]$ is better suited than $[-\gamma,\gamma]$ used in Fig.~\ref{fig:sector_gamma}. The latter does not distinguish between $\tilde a$ and $-\tilde a$, which is why the maximum admissible $\gamma$ might be much smaller than $1/\rho$. 
For a saturation, shown in Fig.~\ref{fig:sector_k1k2}, we choose a fixed upper slope $k_2$ close to the slope of $\tilde a$ in the origin, and are interested in the smallest admissible $k_1$. }

{The presented approach will yield admissible upper bounds on $\gamma$, $1/ \rho$, and $-k_1$ in Sec.~\ref{sec:FrequencyDomainResults}.}

\section{The underlying idea} \label{sec:idea}
Our objective is to derive a {continuous}, time-invariant Lyapunov-like function $V\colon {\mathbb R^n\to \mathbb R;}\; x\mapsto V(x)$. 
{Its defining inequality condition} will be expressed in terms of $D_{f(t,\cdot)}^+ V(x)$, which denotes the {upper right} Dini derivative of $V$ along solutions of $\dot x=f(t,x)$. If $V$ is continuously differentiable (which will be the case in Sec.~\ref{sec:ApproachesV}), {{the latter} simplifies to the classical directional derivative}
\begin{align*}
{D_{f(t,\cdot)}^+ V(x)=(\nabla V(x))^\top f(t,x).}
\end{align*} 

\absatz{{The construction principle for $V$}}
{The guiding idea is as follows.} We seek a function $V$ for which the derivative is less than or equal to $-w$, because then the derivative is at most zero, wherever $w$ is at least zero---and the latter characterizes the sector. 
More precisely: if a continuous function $V$ can be found for which the derivative along solutions of $\dot x=Ax-Ba(t,Cx)$ satisfies 
\begin{align}\label{eq:DVw}
&\forall x\in \mathbb R^n, \forall t\geq t_0:\nonumber \\
& \qquad\qquad D_{(A\,\cdot\,-Ba(t,C\cdot))}^+ V(x) \leq -w(Cx,a(t,Cx)),
\end{align}
then, for any $x$ from the set 
\begin{align} \label{eq:wgeqzero}
\Omega_0=\{x\in \mathbb R^n: w(Cx,a(t,Cx)) \geq 0 \quad \text{for all }t\geq t_0\}, 
\\[-3em]\nonumber
\end{align}
 we immediately achieve that 
\begin{align}\label{eq:DVnonpos}
&\forall x\in \Omega_0, \forall t\geq t_0: \nonumber\\[-1em]
&\quad D_{(A\,\cdot\,-Ba(t,C\cdot))}^+ V(x) \stackrel{(\ref{eq:DVw})}\leq -w(Cx,a(t,Cx)) \stackrel{(\ref{eq:wgeqzero})}\leq 0. 
\end{align}

\begin{figure}
\footnotesize
\centering
 \subfloat[Sector {$[-\gamma,\gamma]$}\label{fig:sector_gamma} 
]{
\includegraphics[]{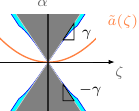}
}
\hfill
\subfloat[Sector {$[0,1/\rho]$}\label{fig:sector_rho} 
]{
\includegraphics[]{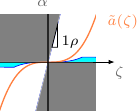}
}
\hfill
\subfloat[Sector {$[k_1,k_2]$}\label{fig:sector_k1k2} 
]{
\includegraphics[]{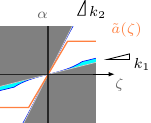}
}
\caption{ 
{The original sector is the non-gray region, cf.\ Fig.~\ref{fig:zerosuperlevelset}. The $\mathcal K$-reduced sector (plotted for $\ell$ from (\ref{eq:ell}) with $q=4$) is the turquoise marked reduction of this admissible region. Its relation to Fig.~\ref{fig:zerosuperlevelset} is explained in Fig.~\ref{fig:ell3dand2d}.} 
}
 \label{fig:reducedsector}
\end{figure}

\absatz{Some comments}
Concerning the construction of $V$ according to~(\ref{eq:DVw}): A function $V$ that satisfies (\ref{eq:DVw}) can easily be found, provided the chosen sector-describing quadratic form $w(\zeta,\alpha)$ from Sec.~\ref{sec:sectorDescription} is compatible with the system's robustness. Corresponding approaches will be discussed in Sec.~\ref{sec:ApproachesV}. In fact, these approaches do not only establish (\ref{eq:DVw}) for a special function $a(t,\zeta)$ of interest but (\ref{eq:DVw}) will hold simultaneously for arbitrary $a(t,\zeta)$. 

\begin{remark}[Relation to dissipativity]\label{rem:dissipativity}
Once (\ref{eq:DVw}) holds for any $u\in \mathbb R^m$ in place of $-a(t,\zeta)$, i.e., $D_{(A\,\cdot\,+Bu)}^+ V(x) \leq -w(Cx,-u)$, the latter can be interpreted as a QSR-dissipativity property of $(A,B,C)$ \cite{Willems.1972b}. Thus, as an alternative approach to the one in this article, it can be examined whether the combination (\ref{eq:ABaC}) with a sector-constrained nonlinearity is a neutral interconnection of dissipative elements \cite{Willems.1972b,Scherer.2022}. 
\fine\end{remark}

Concerning the set $\Omega_0$ from (\ref{eq:wgeqzero}): Note that $\Omega_0$ only depends on the sector defined by $w$ and on the nonlinearity $a$. It describes where the nonlinearity resides in the sector from Sec.~\ref{sec:sectorDescription} in terms of the points $x\in \mathbb R^n$ for which $\alpha=a(t,\zeta)$ with $\zeta=Cx$ satisfies the sector inequality $w(\zeta,\alpha)\geq 0$. 
If the graph of the nonlinear function $\zeta\mapsto a(t,\zeta)$ is globally within the sector (for any $t$), then \begin{align}
\Omega_0=\mathbb R^n.
\end{align} 
If the sector is chosen such that at least for small values of $\|\zeta\|$ the graph of $\zeta\mapsto a(t,\zeta)$ lies within the sector (for any $t$), then, by $\zeta=Cx$, the set $\Omega_0$ from (\ref{eq:wgeqzero}) also includes all states $x$ that exhibit a small enough norm value. Thus, 
\begin{align}
\Omega_0\subseteq \mathbb R^n 
\end{align} 
is indeed a domain around the 
zero equilibrium, as required when using (\ref{eq:DVnonpos}) in local Lyapunov arguments.
\begin{remark}[Relation to the S-procedure]
Let {$M_w$ and $Q=-M_d$} be symmetric matrices. {In} a simplified\footnote{{In fact, $z^\top Qz\geq 0$ holds for all $z$ for which $z^\top M_w z \geq 0$ if and only if $\exists\tau\geq 0: Q-\tau M_w\succeq 0$, assuming $\exists z_0: z_0^\top M_w z_0>0$ \cite{Boyd.1994}. This ``losslessness of the S-procedure'' \cite{Gusev.2006} has important optimality implications if a quadratic function is of interest that simultaneously yields a nonpositive derivative for all $(\zeta, a(t,\zeta))$ in the sector (note that with a free matrix $P$ in $V(x)=x^\top P x$, $\tau$ can be normalized).}} 
form, the S-procedure for LMIs states that $z^\top Q z\geq 0$ {(or $z^\top M_d z\leq 0$, cf.\ (\ref{eq:DVnonpos}))} holds for all $z$ for which $z^\top M_w z\geq 0$ if $Q-M_w\succeq 0$, \cite[Sec.~2.6.3]{Boyd.1994}. See also \cite{Gusev.2006,Yakubovich.2004,Aizerman.1965,Yakubovich.1973,Yakubovich.1971}. Thus, the underlying idea of (\ref{eq:DVw}) is {closely related to} the S-procedure. However, the further discussion will deviate from that concept. 
\fine\end{remark} 
Concerning the conclusion (\ref{eq:DVnonpos}): The semidefiniteness (\ref{eq:DVnonpos}) of the derivative {of $V$ along solutions} is not yet sufficient when attractivity of the zero equilibrium is to be addressed. {Thus, some strengthening is needed.}

\absatz{{Introduction of an offset function in the constraint}}
{A stronger conclusion than (\ref{eq:DVnonpos}) can be achieved in two ways: Either {by} replacing the defining inequality (\ref{eq:DVw}) by a more restrictive one (see the following remark) or {by} replacing the constraint (\ref{eq:wgeqzero}) by a more restrictive one. }
\begin{remark}[The alternative way: Strengthening the defining {inequality}] \label{rem:optionA}
In the construction of LMIs for asymptotic stability (derived from IQCs and the strict KYP lemma~\cite{Megretski.1997}, or derived via the strict S-procedure~\cite{Gusev.2006}) a quadratic term $e(x)=\varepsilon\|x\|_2^2$, with some small $\varepsilon>0$, is commonly introduced. In our setting, this corresponds to a replacement of the {defining inequality} (\ref{eq:DVw}) by \begin{align}\label{eq:DVwe}
D_{(A\,\cdot\,-Ba(t,C\cdot))}^+ V_e(x) \leq -w(Cx,a(t,Cx)) - e(x). 
\end{align} 
Then, the conclusion (\ref{eq:DVnonpos}) becomes negative definiteness
\begin{align}\label{eq:DVnonposE}
{\forall x\in \Omega_0, \forall t\geq t_0: \quad D_{(A\,\cdot\,-Ba(t,C\cdot))}^+ V_e(x) \leq - e(x)} 
\end{align} 
if $e(x)$ is chosen correspondingly. As a result, 
 proving asymptotic stability via $V_e$ 
becomes particularly straightforward. 
However, 
the question of existence of a function $V_e$ that satisfies (\ref{eq:DVwe}) depends on the respective choice of $e(x)$. In this article, we adopt a different approach that aims to keep the defining inequality for $V$ 
as simple as possible. 
\fine\end{remark}
{Instead of strengthening (\ref{eq:DVw}) to (\ref{eq:DVwe}), we strengthen the constraint $w(\zeta,\alpha)\geq 0$ to $w(\zeta,\alpha)\geq \ell(\zeta)$ with some \textit{offset function} $\ell:\mathbb R^p\to \mathbb R$. That is, we replace (\ref{eq:wgeqzero}) by}
\begin{align}\label{eq:OmegaEll}
\Omega_\ell = \{x\in \mathbb R^n: w(Cx,a(t,Cx)) \geq \ell(Cx) \text{ for all } t\geq t_0\}, 
\\[-3em]\nonumber 
\end{align} 
{
and the conclusion (\ref{eq:DVnonpos}) by the \textit{inequality chain} }
\begin{align}\label{eq:DVnonposL}
&{\forall x\in \Omega_\ell, \forall t\geq t_0:} \\*[-1em]
& D_{(A\,\cdot\,-Ba(t,C\cdot))}^+ V(x) \stackrel{(\ref{eq:DVw})}\leq -w(Cx,a(t,Cx)) \stackrel{(\ref{eq:OmegaEll})}\leq -\ell(Cx). \nonumber
\end{align} 
{Importantly, this strengthened conclusion still relies on the same $V$ satisfying the original defining inequality (\ref{eq:DVw}).} As offset function, we choose 
 an unspecified {class-$\mathcal K$} function 
\begin{align}\label{eq:ellK}
\ell(\zeta)=\kappa(\|\zeta\|), \quad \kappa\in \mathcal K . 
\end{align}

{Unless $C$ has rank $n$}, the inequality chain (\ref{eq:DVnonposL}) with (\ref{eq:ellK}) is still not a negative definiteness statement for the derivative of $V$. It only shows its partial negative definiteness w.r.t.\ $\zeta=Cx$ {instead of the full state $x$}. Nevertheless, {even if} {$\mathrm{rank}(C)\neq n$}, 
$V$ from~(\ref{eq:DVw}), which is referred to as \textit{Lyapunov-like function} in this article, will become useful for proving attractivity.

\absatz{{The $\mathcal K$-reduced sector}}
{For any {class-$\mathcal K$} offset function $\ell=\kappa(\|\cdot\|)$ from (\ref{eq:ellK}), the {above-described} strengthening of (\ref{eq:sectorConditionw}) to
\begin{align} \label{eq:wgeqell}
w(\zeta,\alpha)&\geq \ell(\zeta) 
\end{align}
results in a subset of the original sector, see Fig.~\ref{fig:reducedsector} for an example. 
If we simply choose a quadratic offset function
\begin{align}
\ell(\zeta)= \varepsilon \|\zeta\|^2, \quad \varepsilon>0 \label{eq:ellQuad}
\end{align}
{(turquoise line in Fig.~\ref{fig:ell} and \ref{fig:ell2}), then} inequality (\ref{eq:wgeqell}) 
{can be rewritten} as $\tilde w(\zeta,\alpha):=w(\zeta,\alpha)-\varepsilon \zeta^\top \zeta \geq 0$. Relying on a quadratic $\tilde w$, the latter still describes a cone. 
However, according to (\ref{eq:ellK}), $\ell(\zeta)$ can also be some nonquadratic {class-$\mathcal K$} function in $\|\zeta\|$ like} 
\begin{align}\label{eq:ell}
\ell(\zeta)= h\left\{\begin{array}{lr@{}l}(\tfrac 1 c \|\zeta\|)^q, & \|\zeta\|&<c \\ \tfrac 1 c \|\zeta\|, & \|\zeta\|&\geq c,
\end{array}\right. 
\end{align}
with $q\geq 2,h>0,c>0$ ({dark line} in Fig.~\ref{fig:ell}, {where} $q=4$). With such a nonquadratic offset function{,} the {\textit{{$\mathcal K$-reduced} sector}, i.e., the} set of $(\zeta,\alpha)$ pairs that satisfy (\ref{eq:wgeqell}), 
is no longer a cone, see {Fig.~\ref{fig:boundary} (top view on Fig.~\ref{fig:ellnonquad}) and} Fig.~\ref{fig:reducedsector}.

{As an advantage over quadratic offset functions (\ref{eq:ellQuad}), nonquadratic ones like (\ref{eq:ell}) } can achieve that the strengthened constraint (\ref{eq:wgeqell}) even holds for functions $a(t,\zeta)=\tilde a(\zeta)$ that, at $\zeta=0$, are tangential to the original sector bound, see Fig.~\ref{fig:sector_rho}. Moreover, in terms of global results, $\tilde a(\zeta)$ may asymptotically ($\|\zeta\|\to \infty$) become parallel to the sector bound (with a nonzero distance). {For such cases the proposed approach will still prove attractivity or asymptotic stability. In contrast, the usual small-gain theorem would suffer from the fact that such functions~$\tilde a$ are, in terms of their gain, not distinguishable from a linear perturbation $\tilde a(\zeta)=K\zeta$ on the original sector boundary. For the latter, however, only (\ref{eq:DVnonpos}), not yet (\ref{eq:DVnonposL}) is valid. }

\begin{figure}
\centering 
\subfloat[{$w(\zeta,\alpha)$ from (\ref{eq:wLinearNormBound}), {$\ell(\zeta)$} from (\ref{eq:ellQuad})} \label{fig:ell2}
]{ 
\includegraphics[]{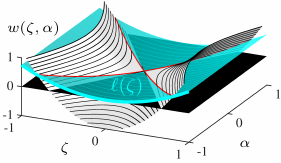}
}
\hfill
\subfloat[(\ref{eq:ellQuad}) vs.\ (\ref{eq:ell}){, $q=4$}\label{fig:ell}
]{ 
\includegraphics[]{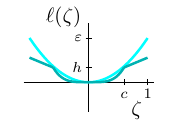}
}
\\
\subfloat[{$w(\zeta,\alpha)$ from (\ref{eq:wLinearNormBound}), {$\ell(\zeta)$} from~(\ref{eq:ell})}
\label{fig:ellnonquad}
]{ 
\includegraphics[]{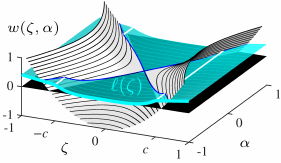}
}
\hfill
\subfloat[{Resulting boundary}\label{fig:boundary}]{ 

\begin{tikzpicture}
\begin{scope}

\path[clip] (-1.5,-1.4) rectangle (1.5,1.4);

\node[inner sep=0pt,anchor=center] (fig_pdf) 
at (0,0) 
{\includegraphics[]{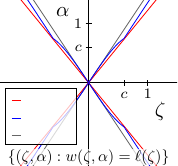}};

\definecolor{col}{rgb}{1,0,0}
\node[anchor=west,color=col,inner sep=2pt,scale=0.75] at (-1.15,-0.3) {Fig.~\ref{fig:ell2}};

\definecolor{col}{rgb}{0, 0, 1}
\node[anchor=west,color=col,inner sep=2pt,scale=0.75] at (-1.15,-0.6) {Fig.~\ref{fig:ellnonquad}};

\definecolor{col}{rgb}{0.4,0.4,0.4}
\node[anchor=west,color=col,inner sep=2pt,scale=0.75] at (-1.15,-0.9) {$\ell=0$};

\end{scope}
\end{tikzpicture}
}

\caption{
{The {$\mathcal K$-reduced} sector is $\{(\zeta,\alpha)\in \mathbb R^2: w(\zeta,\alpha)\geq \ell(\zeta)\}$ with $\ell(\zeta)=\kappa(\vert\zeta\vert)$ being 
some {class-$\mathcal K$} function in $\vert\zeta\vert$.}
}
\label{fig:ell3dand2d}
\end{figure}

\begin{remark}[{Not simply a strict inequality}]
\label{rem:strictsector}
Unless {$\zeta$ is restricted to a bounded domain,} 
the {above introduced} existence of a {class-$\mathcal K$} function $\kappa$, such that $w(\zeta,\alpha)\geq \kappa(\|\zeta\|)$, is a stronger requirement than a strict inequality ${w(\zeta,\tilde a(\zeta))>0}$ {for $\zeta\neq 0$}. The latter is too weak as it would also be satisfied by functions that become tangential to the sector bound as $\|\zeta\|\to \infty$.
\fine\end{remark}

\section{{Main result}} 
\label{sec:attractivityConclusion}
Before stating the main result 
in Thm.~\ref{thm:ConvergenceStatement} below, we discuss some nonrestrictive assumptions. 
\absatz{Assumptions} \\*[-3em]
\begin{assumption}[Knowledge about a stabilizing linear perturbation] \label{asm:stabilizing} 
For at least one $K_{\mathrm{stab}}\in \mathbb R^{m\times p}$, for which $\tilde a(\zeta)=K_{\mathrm{stab}}\zeta$ belongs to the interior of the sector, i.e., 
\begin{align} \label{eq:kKstab}
\exists k>0: w(\zeta,K_{\mathrm{stab}}\zeta)\geq k \|\zeta\|_2^2 \text{ for all } \zeta\in \mathbb R^p,
\end{align}
 it is known that 
\begin{align} \label{eq:Kstab}
 A-BK_{\mathrm{stab}}C \quad \text{ is Hurwitz}. 
\\[-2.5em]\nonumber
\end{align}
\fine \end{assumption}
Assumption~\ref{asm:stabilizing} is not restrictive since, for any linear perturbation $K\zeta$ from the interior of the sector, 
\begin{align} \label{eq:stabilized}
\dot x= Ax -B\tilde a(Cx) \stackrel{\tilde a(\zeta)=K\zeta}= (A-BKC) x 
\end{align}
{is among the systems for which attractivity of the zero equilibrium is to be established.} 
If $\tilde a(\zeta)\equiv 0_m$ lies in the interior of the sector, which is true whenever $\Pi_{\zeta\zeta}\succ 0$ in (\ref{eq:wzetaalphaPi}), then we can choose $K_{\mathrm{stab}}=0_{m\times p}$. As a consequence, in this case, Assumption \ref{asm:stabilizing} only means that $A$ must be Hurwitz. In contrast, if $\tilde a(\zeta)\equiv 0_m$ is not in the interior of the sector, then $A$ is not required to be Hurwitz.

\begin{assumption}[Generalized validity of (\ref{eq:DVw})]\label{asm:DVlin}
The defining {inequality} (\ref{eq:DVw}) from Sec.~\ref{sec:idea} holds not only for the specific nonlinearity of interest but also for $a(t,\zeta)=K_{\mathrm{stab}}\zeta$ with $K_{\mathrm{stab}}$ from Assumption~{\ref{asm:stabilizing}.} 
\fine\end{assumption}
This assumption is clearly satisfied if 
(\ref{eq:DVw}) holds for any arbitrary function $a(t,\zeta)$ (cf.\ Rem.~\ref{rem:dissipativity}), which will 
be the case in the approaches discussed in the next section. 

\begin{assumption}[Continuity and zero property of $V$] \label{asm:V0conv}
$V\colon \mathbb R^n\to \mathbb R$ is continuous, and $V(0_n)=0$. 
\fine\end{assumption}
Assumption~\ref{asm:V0conv} is clearly satisfied by quadratic forms $V(x)=x^\top P x$, $P\in \mathbb R^{n\times n}$. The latter will be used as ansatz for $V$ in the next section. 

\absatz{{Attractivity and asymptotic stability theorems}} 
Depending on whether the core perturbation function $a(t,\zeta)$ {from (\ref{eq:ABaC})} resides globally {(in the sense of for any $\zeta\in \mathbb R^p$)} or only locally within the {$\mathcal K$-reduced} sector, we obtain a global or local attractivity result for the zero equilibrium. The local one comes with a simple estimate of the domain of attraction. 
{To this end, we denote sublevel sets $L_{\leq\level}$ and strict sublevel sets $L_{<\level}$ of $V$ with level $\level>0$ by}
\begin{align}\label{eq:sublevelset}
L_{\leq \level}&=\{x\in \mathbb R^n: V(x)\leq \level\}, 
\\
 L_{<\level}&=\{x\in \mathbb R^n: V(x) <\level\}. \label{eq:strictSublevelset}
\end{align}

\begin{theorem}[Attractivity]\label{thm:ConvergenceStatement}
{Consider the perturbed system (\ref{eq:ABaC}).} Suppose Assumption \ref{asm:stabilizing} holds, by which $K_{\mathrm{stab}}$ exists. 
{Let $V$ be constructed according to~(\ref{eq:DVw}), i.e., for all $x\in \mathbb R^n$ and $t\geq t_0$ 
\begin{align*} 
D_{(A\,\cdot\,-Ba(t,C\cdot))}^+ V(x) \leq -w(Cx,a(t,Cx)), 
\end{align*}
satisfying Assumption~\ref{asm:DVlin} and \ref{asm:V0conv}. 
}
Consider $\Omega_{\kappa(\|\cdot\|)}\subseteq \mathbb R^n$ {from~(\ref{eq:OmegaEll})} with 
$\ell(\zeta)=\kappa(\|\zeta\|)$ for some $\kappa\in \mathcal K$. 
\begin{enumerate}[label=(\roman*)]
\item If $\Omega_{\kappa(\|\cdot\|)}=\mathbb R^n$, i.e., the perturbation $a(t,\zeta)$ is for all $\zeta\in \mathbb R^p$ and all $t\geq t_0$ in the {$\mathcal K$-reduced} sector, 
then 
\begin{align*}
\forall x_0\in \mathbb R^n: \quad \|x(t;t_0,x_0)\|\to 0 \text{ as } t\to \infty, 
\end{align*}
i.e., the zero equilibrium {of (\ref{eq:ABaC})} is globally attractive.
\item \label{item:locallyStab} 
Otherwise, 
let {$\level> 0$} be a level for which the sublevel set (\ref{eq:sublevelset}) satisfies {$L_{\leq \level} \subseteq \Omega_{\kappa(\|\cdot\|)}$}. Then, 
\begin{align*}
{x_0\in L_{<\level}} \quad \Longrightarrow \quad \|x(t;t_0,x_0)\|\to 0 \text{ as } t\to \infty, 
\end{align*}
i.e., $L_{<\level}$ is a subset of the domain of attraction {of the zero equilibrium of (\ref{eq:ABaC})}. 
\\[-2.75em]
\end{enumerate}
\fine\end{theorem}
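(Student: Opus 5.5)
The plan has two parts. First, extract from Assumptions~\ref{asm:stabilizing}--\ref{asm:V0conv} a quantitative lower bound on $V$ that replaces positive definiteness. Second, run a Barbalat-type argument on the loop rewritten around the Hurwitz matrix $A_K:=A-BK_{\mathrm{stab}}C$.

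\textbf{Step 1: $V$ is bounded below by an observability Gramian.} Apply (\ref{eq:DVw}) with $a(t,\zeta)=K_{\mathrm{stab}}\zeta$, which is allowed by Assumption~\ref{asm:DVlin}. Together with (\ref{eq:kKstab}), this gives $D^+_{A_K\cdot}V(x)\leq -k\|Cx\|_2^2$ along $\dot x=A_Kx$. Since $V$ is continuous, the comparison lemma for Dini derivatives yields
\begin{align*}
V(x)-V(e^{A_K T}x)\geq k\int_0^T \|Ce^{A_Ks}x\|_2^2\,\mathrm ds .
\end{align*}
Let $T\to\infty$. Because $A_K$ is Hurwitz and $V$ is continuous with $V(0_n)=0$, we obtain $V(x)\geq k\,x^\top W x$, where $W=\int_0^\infty e^{A_K^\top s}C^\top Ce^{A_Ks}\,\mathrm ds\succeq 0$. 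In particular $V\geq 0$.

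Next, take a Kalman observability decomposition of $(A_K,C)$ into coordinates $x=T[x_o^\top,x_u^\top]^\top$. In these coordinates $W$ is positive definite on the $x_o$-block, so $V(x)\le \level$ bounds $\|x_o\|$, and $Cx$ depends only on $x_o$. Since $A_K$ is block triangular in these coordinates, the unobservable block $A_{uu}$ is Hurwitz.

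\textbf{Step 2: monotonicity of $V$ and invariance.} In case (i), (\ref{eq:DVnonposL}) holds everywhere. Hence $t\mapsto V(x(t))$ is nonincreasing on the maximal interval, and
\begin{align*}
\int_{t_0}^t\kappa(\|Cx(s)\|)\,\mathrm ds\le V(x_0)-V(x(t))\le V(x_0),
\end{align*}
using $V\ge0$.

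In case (ii), we first show that $x_0\in L_{<\level}$ implies $x(t)\in L_{\leq\level}$ for all $t$. Suppose not, and let $t_1$ be the first time with $V(x(t_1))=\level$ after which $V$ exceeds $\level$. On $[t_0,t_1]$ the trajectory lies in $L_{\le\level}\subseteq\Omega_{\kappa(\|\cdot\|)}$, so $V$ is nonincreasing there, giving $V(x(t_1))\le V(x_0)<\level$. This is a contradiction. The same integral bound as in case (i) then follows.

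\textbf{Step 3: boundedness and global existence.} This is the step I expect to be the main obstacle, since $V$ is only semidefinite. Rewrite (\ref{eq:ABaC}) as $\dot x=A_Kx+Bv(t)$ with $v(t)=-(a(t,Cx)-K_{\mathrm{stab}}Cx)$. By Steps 1--2, $x_o(t)$ stays bounded, so $Cx(t)$ is bounded. By Assumption~\ref{asm:aLocLip}, $v$ is then bounded uniformly in $t$. The $x_u$-subsystem has the form $\dot x_u=A_{uu}x_u+(\text{bounded input from }x_o,v)$ with $A_{uu}$ Hurwitz, so $x_u$ is bounded as well. Hence $x$ is bounded, the solution exists on $[t_0,\infty)$, and $\dot x$ is bounded.

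\textbf{Step 4: convergence.} Since $\dot x$ is bounded, $Cx$ is uniformly continuous, and therefore so is $\kappa(\|Cx(\cdot)\|)$ on the compact range involved. It is also integrable by Step 2, so Barbalat's lemma gives $\kappa(\|Cx(t)\|)\to0$, hence $Cx(t)\to0$. By the local Lipschitz property of $a$, uniform in $t$, and $a(t,0)=0$, we get $v(t)\to 0$. Finally, the variation-of-constants formula
\begin{align*}
x(t)=e^{A_K(t-t_0)}x_0+\int_{t_0}^t e^{A_K(t-s)}Bv(s)\,\mathrm ds
\end{align*}
with $A_K$ Hurwitz and $v\to0$ yields $x(t)\to0$. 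This proves (i), and proves (ii) for $x_0\in L_{<\level}$.
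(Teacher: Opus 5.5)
Your proposal is correct and follows essentially the same route as the paper. The shared steps are the Gramian lower bound on $V$ (obtained from Assumption~\ref{asm:DVlin} and (\ref{eq:kKstab}) along $\dot x=(A-BK_{\mathrm{stab}}C)x$), invariance of the strict sublevel set, boundedness of the observable part and then of $x$ via the Hurwitz matrix $A-BK_{\mathrm{stab}}C$, Barbalat's lemma giving $Cx(t)\to 0$, and a converging-input-converging-state argument giving $x(t)\to 0$. The only differences are cosmetic: you use a Kalman observability decomposition and the Hurwitz unobservable block, whereas the paper uses the orthogonal splitting $x=x_\bot+x_\|$ and the BIBS property of $\dot x=A_{\mathrm{stab}}x+Bu$ directly.
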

\begin{proof}
 We denote by $\mathcal U$ the nullspace of the observability Gramian (equivalently, the unobservable subspace of $(A,C)$). Henceforth, for any $x\in \mathbb R^n=\mathcal U^\bot \oplus \mathcal U$,
\begin{align}\label{eq:xBot}
x=x_\bot+x_\| \quad \text{with} \quad x_\bot\in \mathcal U^\bot \text{ and } x_\| \in \mathcal U 
\end{align}
 refers to the decomposition of $x$ into $x_\bot$ in the observable and $x_\|$ in the unobservable subspace of $(A,C)$. 
The proof relies on {lemmas} that are presented in Appendix \ref{sec:ProofConv}.

\begin{enumerate}
\item \label{it:VLowerBoundObsv} 
Lemma~\ref{lem:VLowerBoundObsv} {proves} that $V(x)$ is partially\footnote{We use the description ''partially w.r.t. \ldots'' {to mark} that a property is only true on a special, known subspace.} positive definite and partially radially unbounded w.r.t.\ the $(A,C)$-observable part $x_\bot$ of the argument $x$ (defined in (\ref{eq:xBot})). 
To this end, {we exploit} the inequality chain (\ref{eq:DVnonposL}) 
for the derivative of $V$ along solutions of the stabilized system {(\ref{eq:stabilized}). Therefore,} 
$a(t,\zeta)=K_{\mathrm{stab}} \zeta$ from (\ref{eq:Kstab}) and $\ell(\zeta)=k\|\zeta\|_2^2$ from~(\ref{eq:kKstab}) { is used in (\ref{eq:DVKstabEllQuad}).}

\item \label{it:sublevelset} 
Lemma~\ref{lem:sublevelset} shows that any {strict} sublevel set {$L_{<\level}\subset L_{\leq \level} \subseteq \Omega_{0}$} is a positively invariant set of (\ref{eq:ABaC}). {To this end, we exploit the inequality chain (\ref{eq:DVnonposL}) for the derivative of $V$ along local solutions of (\ref{eq:ABaC}), which shows that this derivative is nonpositive.}

\newcounter{CounterConv}
\setcounter{CounterConv}{\value{enumi}} 
\end{enumerate}
{Note that $V$ is only proven to be partially positive definite in~(\ref{eq:partialPosDef}) and positively invariant sublevel sets might become unbounded. {Consequently}, classical arguments do not yet establish boundedness of {solutions $x(\cdot)$ of (\ref{eq:ABaC})}.}
\begin{enumerate} \setcounter{enumi}{\value{CounterConv}}

\item \label{it:boundednessBot} Lemma~\ref{lem:xBotBounded} shows that the $(A,C)$-observable projection~$x_\bot(\cdot)$ of solutions $x(\cdot)$ with {$x(t_0)\in L_{<\level}$} 
is bounded, using item~\ref{it:VLowerBoundObsv} and~\ref{it:sublevelset}.

\item \label{it:boundedness} Lemma~\ref{lem:boundedness} shows that also $x(\cdot)$ is bounded, using the \textit{bounded-input-bounded-state (BIBS)} property of $\dot x=(A-BK_{\mathrm{stab}}C)x+Bu$ and Assumption~\ref{asm:aLocLip}. 
\setcounter{CounterConv}{\value{enumi}} 
\end{enumerate}
{{Having boundedness of local solutions $x(\cdot)$ from (\ref{eq:ABaC}), we apply Barbalat's lemma to prove that $\|Cx(t)\|\to 0$ as $t\to\infty$. Note that, for the considered system class, maximal solutions $x(t)$ that are bounded are also guaranteed to exist on $[t_0,\infty)$ (forward completeness).}}
\begin{enumerate} \setcounter{enumi}{\value{CounterConv}}
\item\label{it:uniformlyContComposition} Lemma~\ref{lem:uniformlyContComposition} {states} that $b_\kappa\colon [t_0,\infty)\to \mathbb R,$
\begin{align}\label{eq:bkdef}
 t\mapsto b_\kappa(t):= \kappa(\|C x(t)\|),
\end{align}
is uniformly continuous {relying on} the boundedness of $x(\cdot)$ from item~\ref{it:boundedness} and the local Lipschitz property of $a$ from Assumption~\ref{asm:aLocLip}.

\item\label{it:Bbounded} Lemma~\ref{lem:Bbounded} shows that, given an initial value {$x(t_0)\in L_{<\level}\subset L_{\leq \level} \subseteq \Omega_{\kappa(\|\cdot\|)}$,} the resulting $B_\kappa\colon [t_0,\infty){\to \mathbb R},$
\begin{align}
 t\mapsto B_\kappa(t):= \int_{t_0}^t \kappa(\|C x(\tau )\|)\,\mathrm d \tau,
\end{align}
is bounded, using the positive invariance of {$L_{<\level}$} 
from item~\ref{it:sublevelset}, the inequality chain (\ref{eq:DVnonposL}) and the nonnegativity of $V$ from item~\ref{it:VLowerBoundObsv}.

\item\label{it:BLimit} Note that $B_\kappa(\cdot)$ has a nonnegative derivative $\kappa(\|C x(\cdot)\|) \geq 0$, and thus $B_\kappa(\cdot)$ is nondecreasing. Combined with the boundedness of $B_\kappa(\cdot)$ from item~\ref{it:Bbounded}, we conclude that the limit $\lim_{t\to\infty}B_\kappa(t)$ exists and is finite. 

\item \label{it:Barbalat}Using the existence of $\lim_{t\to\infty}B_\kappa(t)$ from item~\ref{it:BLimit}, and using the uniform continuity of its derivative $b_\kappa(\cdot)$ from item~\ref{it:uniformlyContComposition}, we exploit \textit{Barbalat's lemma} (Lemma~\ref{lem:Barbalat}) to conclude that $\lim_{t\to\infty} b_\kappa(t)=0$. 
As a result, $Cx(t)$ must converge to zero. 
\setcounter{CounterConv}{\value{enumi}} 
\end{enumerate}
{As a final step, we show that $\|x(t)\|\to 0$ as $t\to\infty$.}
\begin{enumerate} \setcounter{enumi}{\value{CounterConv}}

\item Lemma~\ref{lem:convOfx} proves that 
$x(t)$ converges to zero, using the \textit{converging-input-converging-state {(0-CICS)}} property of $\dot x={(A-BK_{\mathrm{stab}}C)x}+Bu$ and Assumption~\ref{asm:aLocLip}, by which ${\zeta(t)\to 0_p}$ implies $a(t,\zeta(t))\to a(t,0_p)=0_m$.\vspace{-2em}
\end{enumerate}\nopagebreak
\end{proof}
 
The above statement relies on $V$ being only a Lyapunov-like function. {Already if $(A,C)$ is observable, item~\ref{it:VLowerBoundObsv} establishes that $V$ is positive definite, while (\ref{eq:DVnonpos}) ensures that its derivative along {local} solutions of (\ref{eq:ABaC}) is nonpositive. Consequently, $V$ becomes a Lyapunov function that {proves} uniform stability. Combined with the above attractivity, asymptotic (though, in the time-varying case, not yet uniform asymptotic) stability can be concluded. } 

If $C$ is even chosen with {$\mathrm{rank}(C)=n$}, 
then the above proof simplifies considerably as $V$ even becomes a classical Lyapunov function for uniform asymptotic stability (i.e., uniform attractivity and uniform stability). Notably, the effect of adapting $C$ to {have {$\mathrm{rank}(C)=n$} 
(cf.~Example~\ref{exmp:BaC})} is the same as if {the strengthened defining inequality} 
(\ref{eq:DVwe}) with some positive definite $e(x)$ 
 would have been chosen. 
\begin{theorem}[{Uniform asymptotic stability}]\label{thm:ConvergenceFullRankC} {Let the decomposition (\ref{eq:ga}) be chosen such that} {$\mathrm{rank}(C)=n$.} 
If the conditions of Thm.~\ref{thm:ConvergenceStatement} {hold}, 
{then} not only global or local attractivity, but even global or local uniform asymptotic stability {w.r.t.\ all initial times $\tilde t_0\geq t_0$} is proven. 
\fine\end{theorem}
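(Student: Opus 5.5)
The plan is to show that, once $\mathrm{rank}(C)=n$, the function $V$ from Thm.~\ref{thm:ConvergenceStatement} is a classical Lyapunov function. It has a decrescent positive definite lower and upper bound and a negative definite derivative on $\Omega_{\kappa(\|\cdot\|)}$. Uniform asymptotic stability then follows from the standard Lyapunov theorem for time-varying systems with Dini derivatives.

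The steps are as follows.

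\emph{Positive definiteness.} If $\mathrm{rank}(C)=n$, then $(A,C)$ is trivially observable: the unobservable subspace $\mathcal U\subseteq\ker C=\{0_n\}$, so $x_\bot=x$. Item~\ref{it:VLowerBoundObsv} of the proof of Thm.~\ref{thm:ConvergenceStatement} (Lemma~\ref{lem:VLowerBoundObsv}) then gives positive definiteness and radial unboundedness of $V$ in the full state. Concretely, I expect a bound $V(x)\geq \alpha_1(\|x\|)$ with $\alpha_1\in\mathcal K$, radially unbounded. In the quadratic case this reads $V(x)\geq c\|x\|_2^2$. It is obtained by integrating $D^+V\leq -k\|Cx\|_2^2\leq -k\sigma_{\min}(C)^2\|x\|_2^2$ along the Hurwitz flow of $A-BK_{\mathrm{stab}}C$ from $x$ to $0$.

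\emph{Upper bound.} Continuity of $V$ and $V(0_n)=0$ (Assumption~\ref{asm:V0conv}) give $V(x)\leq\alpha_2(\|x\|)$ with $\alpha_2(r):=\max_{\|y\|\leq r}V(y)$, bumped to a class-$\mathcal K$ function if necessary, e.g.\ $\alpha_2(r)+r$. This bound is time-independent, so $V$ is decrescent.

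\emph{Negative definite derivative.} The inequality chain (\ref{eq:DVnonposL}) gives, for all $x\in\Omega_{\kappa(\|\cdot\|)}$ and $t\geq t_0$,
\[
D^+V(x)\leq-\kappa(\|Cx\|)\leq-\kappa(c_C\|x\|)=:-\alpha_3(\|x\|),
\]
where $c_C>0$ comes from $\|Cx\|\geq c_C\|x\|$, which holds by full column rank and norm equivalence. All bounds are uniform in $t\geq\tilde t_0\geq t_0$, since $V$ is time-invariant and the sector condition holds uniformly in $t$.

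\emph{Conclusion.} For case (i), $\Omega_{\kappa(\|\cdot\|)}=\mathbb R^n$, and the standard uniform asymptotic stability theorem (e.g.\ Khalil Thm.~4.9 in its Dini-derivative version) yields global uniform asymptotic stability, using radial unboundedness of $\alpha_1$. For case (ii), I would restrict to the region $L_{<\level}$. It is positively invariant by Lemma~\ref{lem:sublevelset}, and now bounded because $V\geq\alpha_1$. Inside it the comparison argument gives $V(x(t))\leq\beta(V(x(\tilde t_0)),t-\tilde t_0)$ with $\beta$ independent of $\tilde t_0$, via the scalar comparison $\dot v\leq-\alpha_3(\alpha_2^{-1}(v))$. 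Combined with the sandwich $\alpha_1\leq V\leq\alpha_2$, this yields a class-$\mathcal{KL}$ estimate for $\|x(t)\|$, hence local uniform asymptotic stability on a neighborhood of the origin.

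The main obstacle is the first step: extracting a genuine class-$\mathcal K$ lower bound on $V$ from Lemma~\ref{lem:VLowerBoundObsv}, which may only be stated as partial positivity. To get it, I would redo the comparison along the stabilized system. Along its trajectory $\phi(s)$ starting at $x$,
\[
V(x)\geq V(\phi(s))+k\int_0^s\|C\phi\|^2\,\mathrm d\tau .
\]
Letting $s\to\infty$ gives $V(\phi(s))\to V(0)=0$, and the integral is bounded below by $k\sigma_{\min}(C)^2\int_0^\infty\|e^{(A-BK_{\mathrm{stab}}C)\tau}x\|_2^2\,\mathrm d\tau\geq c\|x\|_2^2$. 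This bound is quadratic, hence radially unbounded. A secondary technical point is using the Dini-derivative comparison lemma rather than a $C^1$ chain rule.
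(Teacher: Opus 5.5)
Your proposal is correct and follows essentially the same route as the paper. With $\mathrm{rank}(C)=n$ one has $x_\bot=x$, so Lemma~\ref{lem:VLowerBoundObsv} gives $k_1\|x\|^2\le V(x)$; the inequality chain (\ref{eq:DVnonposL}) together with $\kappa(\|Cx\|_2)\ge\kappa(\sqrt{\lambda_{\min}(C^\top C)}\,\|x\|_2)$ gives a negative definite derivative on $\Omega_{\kappa(\|\cdot\|)}$; and the classical Dini-derivative Lyapunov theorem \cite{Khalil.2002,Yoshizawa.1966} then yields the claim. The ``obstacle'' you flag is already resolved directly by Lemma~\ref{lem:VLowerBoundObsv}, and your explicit decrescence bound and comparison argument for the local case only spell out what the paper delegates to those references.
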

\begin{proof}
The function $V$ satisfies the conditions of the classical Lyapunov theorem, {see \cite[Thm.~4.9]{Khalil.2002} if $V\in C^1$, or, more generally, \cite[Thm.~8.3 and 11.5]{Yoshizawa.1966} for a Dini-derivative-based formulation}: 
\begin{enumerate}
\item Positive definiteness and radial unboundedness of $x\mapsto V(x)$ is satisfied due to item~\ref{it:VLowerBoundObsv} in the above proof, once $(A,C)$ is observable, which is clearly the case if {$\mathrm{rank}(C)=n$}. 
To be more precise, it holds 
\begin{align*} 
\exists k_{1}>0,\forall x\in \mathbb R^n: \quad k_1\|x\|^2 \leq V(x) 
\end{align*}
by (\ref{eq:partialPosDef}), where $x_\bot=x$. 
\item Negative definiteness of the derivative of $V$ is met by
\begin{align*} 
&{\forall x\in \Omega_{\kappa(\|\cdot\|)}, \forall t\geq t_0: }
\\
&
\quad D_{(A\,\cdot\,-Ba(t,C\cdot))}^+ V(x)\leq -\tilde \kappa(\|x\|)
\end{align*}
 with some $\tilde \kappa\in \mathcal K$ 
since, if {$\mathrm{rank}(C)=n$}, 
the estimate
 $\|Cx\|_2^2=x^\top C^\top C x \geq k 
\|x\|_2^2$, 
{${k:=\lambda_{\min}(C^\top C)>0}$,} 
\begin{align*}
\ell(Cx)&=\kappa(\|C x\|_2) \geq 
\kappa\big(\sqrt{k} \|x\|_2\big)
=:\tilde \kappa(\|x\|_2) 
\end{align*}
{applies to (\ref{eq:DVnonposL}). }
\\[-2.75em]
\end{enumerate}
 \end{proof}

\section{Quadratic solutions} \label{sec:ApproachesV}
In order to find a function $V$ that satisfies the defining inequality (\ref{eq:DVw}) from Sec.~\ref{sec:idea}, a quadratic ansatz
\begin{align} \label{eq:ansatzV}
V(x)=x^\top P x, \quad P=P^\top \in \mathbb R^{n\times n}
\end{align} 
is used throughout the following approaches. 

Although the resulting conditions recover well-known concepts, 
the derivations are provided to make their relation to the proposed framework explicit. 

\absatz{Approach A: Linear matrix inequality (LMI)}
For the {quadratic} ansatz (\ref{eq:ansatzV}), the derivative appearing in {the defining inequality} (\ref{eq:DVw}) 
can be written as 
\begin{align}\label{eq:dVxPAx}
& D_{(A\,\cdot\,-Ba(t,C\cdot))}^+ V(x) 
= 2 x^\top P (Ax-Ba(t,Cx))
\\
&\quad = \begin{bmatrix} x \\ -a(t,Cx) \end{bmatrix}^\top
\underbrace{\begin{bmatrix} 2PA & 2 PB \\ 0 & 0 \end{bmatrix}}_{=:M_{d}(P)} \begin{bmatrix} x \\ -a(t,Cx) \end{bmatrix}
\nonumber \\
&\quad=
\begin{bmatrix} x \\- a(t,Cx) \end{bmatrix}^\top
\underbrace{\begin{bmatrix} PA+A^\top P & PB \\ B^\top P & 0 \end{bmatrix}}_{\mathrm{sym}( M_{d}(P))} \begin{bmatrix} x \\ -a(t,Cx) \end{bmatrix}\nonumber 
\label{eq:MdP}
\\[-1.5em]
\end{align}
with\footnote{A symmetrization {(\ref{eq:sym})} is applied since definiteness properties, in the present paper, are only defined for Hermitian matrices. } an in general indefinite\footnote{If $PB\neq 0$ the matrix in (\ref{eq:MdP}) cannot be semidefinite since then, {by Lemma~\ref{lem:semidefDecomp}, it would be writable as} $\pm Z^\top Z$, where $Z_2=0$ in $Z=[Z_1,Z_2]$ contradicts $Z_1^\top Z_2=PB\neq 0$.
} matrix $\mathrm{sym}( M_{d}(P))$. 
Similarly, $w(Cx,a(t,Cx))$ from (\ref{eq:wzetaalphaPi}) can be written as 
\begin{align}
&w(Cx,a(t,Cx)) \label{eq:Mw}
\\*
&=\begin{bmatrix} x \\ -a(t,Cx) \end{bmatrix}^\top
\underbrace{\begin{bmatrix} C^\top \Pi_{\zeta\zeta}C & -C^\top \Pi_{\zeta \alpha}\\ -\Pi_{\zeta \alpha}^\top C & \Pi_{\alpha\alpha} \end{bmatrix}}_{=:M_w} \begin{bmatrix} x \\ -a(t,Cx) \end{bmatrix}, 
 \nonumber
\end{align}
where the matrix $M_w$ is {typically} indefinite due to the underlying matrix $\Pi$ from Sec.~\ref{sec:sectorDescription}. 
In terms of these matrices, {the defining inequality} (\ref{eq:DVw}) becomes
\begin{align} 
&\forall x\in \mathbb R^n, {\forall t\geq t_0}: \label{eq:DVwMatrix} \\
& \begin{bmatrix} x \\ -a(t,Cx) \end{bmatrix}^\top
\Big( \mathrm{sym}\big(M_d(P)\big) + M_w\Big) \begin{bmatrix} x \\ -a(t,Cx) \end{bmatrix} 
\leq 0, \nonumber
\end{align}
which immediately leads to the LMI in (\ref{eq:LMI}) {below.}
\begin{theorem}[LMI]\label{thm:LMI}
Suppose $P=P^\top\in \mathbb R^{n\times n }$ satisfies 
\begin{align}\label{eq:LMI}
 \underbrace{\mathrm{sym}( M_{d}(P)) + M_w}_{=:M(P)}\preceq 0_{(n+m)\times (n+m)}, 
\end{align}
with $\mathrm{sym}( M_{d}(P))$ from (\ref{eq:MdP}) and $M_w$ from (\ref{eq:Mw}). Then $V(x)=x^\top P x$ satisfies (\ref{eq:DVw}) from Sec.~\ref{sec:idea} for any perturbation function $a(t,\zeta)$. 
\fine\end{theorem}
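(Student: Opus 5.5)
The plan is a direct verification: I rewrite both sides of the defining inequality (\ref{eq:DVw}) as quadratic forms in the stacked vector $z:=\begin{bmatrix} x^\top & -a(t,Cx)^\top\end{bmatrix}^\top\in\mathbb R^{n+m}$ and then apply the LMI (\ref{eq:LMI}) pointwise. The argument never uses any property of $a$ beyond the fact that $a(t,Cx)\in\mathbb R^m$ is some vector. This is exactly why the conclusion holds for an arbitrary perturbation function.

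First, I justify that the Dini derivative equals the classical directional derivative. The ansatz $V(x)=x^\top P x$ is continuously differentiable with $\nabla V(x)=2Px$. Along any classical solution through $(t,x)$, which exists by Assumption~\ref{asm:aLocLip}, the upper right Dini derivative therefore reduces to the limit
\begin{align*}
(\nabla V(x))^\top f(t,x)=2x^\top P\big(Ax-Ba(t,Cx)\big).
\end{align*}
This is the first line of (\ref{eq:dVxPAx}). Writing it as $z^\top M_d(P)z$ and then as $z^\top \mathrm{sym}(M_d(P))z$ via (\ref{eq:sym}) is a routine block-matrix expansion.

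Second, I check the identity (\ref{eq:Mw}) by expansion. The sign flips in the off-diagonal blocks $-C^\top\Pi_{\zeta\alpha}$ cancel against the sign of $-a$, giving
\begin{align*}
z^\top M_w z = (Cx)^\top\Pi_{\zeta\zeta}Cx+2(Cx)^\top\Pi_{\zeta\alpha}a+a^\top\Pi_{\alpha\alpha}a = w(Cx,a(t,Cx)).
\end{align*}

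Third, I combine the two. For every $x\in\mathbb R^n$ and $t\geq t_0$,
\begin{align*}
D^+V(x)+w(Cx,a(t,Cx))=z^\top M(P) z\leq 0,
\end{align*}
because $M(P)\preceq 0$ means $z^\top M(P)z\leq 0$ for every vector $z$, in particular for this one. This is (\ref{eq:DVwMatrix}), which is equivalent to (\ref{eq:DVw}).

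There is no real obstacle. The only point needing care is the first step: the Dini derivative is defined along solutions, so I must note that a solution through $(t,x)$ exists locally, and then use differentiability of $V$ and of $t\mapsto x(t)$ to replace the limsup by the chain-rule derivative.
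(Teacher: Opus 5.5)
Your proof is correct and is the paper's own argument: you write the derivative and $w(Cx,a(t,Cx))$ as quadratic forms in $[x^\top,\,-a(t,Cx)^\top]^\top$ using $\mathrm{sym}(M_d(P))$ and $M_w$, so that (\ref{eq:DVw}) becomes (\ref{eq:DVwMatrix}), which holds for every $a$ once $M(P)\preceq 0$. Your extra justification that the Dini derivative reduces to $2x^\top P(Ax-Ba(t,Cx))$ for the $C^1$ ansatz matches the paper's remark in Sec.~\ref{sec:idea}.
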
 
 {A corresponding matrix $P$ can be found via a semidefinite programming solver.} 
\absatz{Approach B: Algebraic Riccati equation (ARE)}
An alternative (but strongly related, see Sec.~\ref{sec:solvabilityEquivalence}) approach to find an appropriate matrix $P$ in (\ref{eq:ansatzV}) relies on an~ARE. 
The following assumption ensures that $(-\Pi_{\alpha\alpha})^{-1/2}$, needed below, exists and is real.
\begin{assumption} \label{asm:PIaaNeg}
In (\ref{eq:wzetaalphaPi}), let $\Pi_{\alpha\alpha}\prec 0_{m\times m}$. 
\fine\end{assumption}
{Thus, in contrast to the LMI discussed above, {the} subsequent} 
{considerations do not include the limit case $\Pi_{\alpha\alpha}= 0_{m\times m}$.} 
\begin{theorem}[ARE]\label{thm:ARE}
Consider the abbreviation $\tilde B(P):=(PB - C^\top \Pi_{\zeta\alpha})(-\Pi_{\alpha\alpha})^{-1/2}$. 
Suppose $P=P^\top\in \mathbb R^{n\times n }$ satisfies the ARE 
\begin{align}\label{eq:ARE}
 &PA+A^\top P = - C^\top \Pi_{\zeta\zeta} C - \tilde B(P)\, \tilde B^\top\!(P). 
\end{align}
Then $V(x)=x^\top P x$ satisfies (\ref{eq:DVw}) for any $a(t,\zeta)$.
\fine\end{theorem}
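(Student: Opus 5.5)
The plan is to reduce Thm.~\ref{thm:ARE} to Thm.~\ref{thm:LMI}. I will show that whenever $P$ solves the ARE (\ref{eq:ARE}), the matrix $M(P)=\mathrm{sym}(M_d(P))+M_w$ from (\ref{eq:LMI}) is negative semidefinite. Thm.~\ref{thm:LMI} then gives (\ref{eq:DVw}) for every perturbation $a(t,\zeta)$. Concretely, I will exhibit $-M(P)$ as a Gram matrix $Z^\top Z$, which is a completion-of-squares argument (a Schur-complement view works equally well).

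First I write out the blocks of $M(P)$ using (\ref{eq:MdP}) and (\ref{eq:Mw}):
\begin{align*}
M(P)=\begin{bmatrix} PA+A^\top P + C^\top\Pi_{\zeta\zeta}C & PB-C^\top\Pi_{\zeta\alpha} \\ B^\top P-\Pi_{\zeta\alpha}^\top C & \Pi_{\alpha\alpha}\end{bmatrix}.
\end{align*}
Set $R:=(-\Pi_{\alpha\alpha})^{1/2}\succ 0$, which is real, symmetric and invertible by Assumption~\ref{asm:PIaaNeg}. Then the off-diagonal block is $PB-C^\top\Pi_{\zeta\alpha}=\tilde B(P)R$, and the lower-right block is $\Pi_{\alpha\alpha}=-R^2=-R^\top R$. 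Substituting the ARE (\ref{eq:ARE}) into the upper-left block turns it into $-\tilde B(P)\tilde B^\top(P)$.

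With these three identities,
\begin{align*}
M(P)=-\begin{bmatrix} \tilde B(P) \\ -R\end{bmatrix}\begin{bmatrix} \tilde B^\top(P) & -R\end{bmatrix},
\end{align*}
using $R^\top=R$. It remains to check the off-diagonal term: $-\tilde B(P)(-R)=\tilde B(P)R$ matches the upper-right block, and the lower-left block is its transpose. Hence $M(P)=-Z^\top Z$ with $Z=[\tilde B^\top(P),\,-R]$, so $M(P)\preceq 0$ and Thm.~\ref{thm:LMI} applies. Alternatively, one can verify (\ref{eq:DVwMatrix}) directly: the quadratic form equals $-\|\tilde B^\top(P)x+Ra(t,Cx)\|_2^2\le 0$ for any value of $a$.

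The argument is purely algebraic, so the only step needing care is the bookkeeping of the sign convention. The vector in (\ref{eq:Mw}) contains $-a$, not $a$, so the off-diagonal blocks carry $-C^\top\Pi_{\zeta\alpha}$, and these must combine consistently with $PB$ into $\tilde B(P)R$. The symmetry of the matrix square root is also essential, since it is what makes the factorization close.
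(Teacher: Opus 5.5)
Your proof is correct and uses the same completion-of-squares identity as the paper. The paper adds zero in the form $-\|\tilde b+\tilde a\|_2^2+\tilde b^\top\tilde b+2\tilde b^\top\tilde a+\tilde a^\top\tilde a$ directly in the derivative, obtaining $D^+V=-w(Cx,a(t,Cx))-\|\tilde B^\top(P)x+(-\Pi_{\alpha\alpha})^{1/2}a(t,Cx)\|_2^2$, which is exactly your ``alternative'' verification. Your main route, $M(P)=-Z^\top Z$ with $Z=[\tilde B^\top(P),\,-(-\Pi_{\alpha\alpha})^{1/2}]$ followed by Thm.~\ref{thm:LMI}, is that identity at the matrix level: it is the matrix equation (\ref{eq:matrixEquation}) with $q=m$, $L=\tilde B^\top(P)$, $W=-(-\Pi_{\alpha\alpha})^{1/2}$.
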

\begin{proof}
The described {function $V$} 
has the derivative 
\begin{align*}
&D_{(A\,\cdot\,-Ba(t,C\cdot))}^+ V(x)\\
&= x^\top (P A+A^\top P)x-2 x^\top P Ba(t,Cx) 
\\
&\stackrel{\!\!(\ref{eq:ARE})\!\!}= x^\top \!\big(- C^\top\! \Pi_{\zeta\zeta} C - \tilde B(P)\,\tilde B^\top\!(P)\big)x-2 x^\top \!P Ba(t,Cx). 
\end{align*}
Adding $0=-\|\tilde b+ \tilde a\|_2^2+\tilde b^\top \tilde b +2 \tilde b^\top \tilde a + \tilde a^\top \tilde a$ where $\tilde b^\top:=x^\top \tilde B (P)$ and $\tilde a:= (-\Pi_{\alpha\alpha})^{1/2} a(t,Cx)$, i.e., {where}
\begin{align*}
\tilde b^\top \tilde b &= x^\top \tilde B(P)\, \tilde B^\top\!(P) x, \\*
2\tilde b^\top \tilde a & = 2 x^\top (PB - C^\top \Pi_{\zeta\alpha}) a(t,Cx), 
\\*
\tilde a^\top\tilde a &= a^\top (t,Cx) (-\Pi_{\alpha\alpha}) a(t,Cx),
\end{align*}
 gives 
\begin{align*}
&D_{(A\,\cdot\,-Ba(t,C\cdot))}^+ V(x)\\
&\quad = - x^\top C^\top \Pi_{\zeta\zeta} C x-2 x^\top C^\top \Pi_{\zeta \alpha} a(t,Cx) 
\\*
&\quad \hspace{2cm}-a^\top\! (t,Cx) \Pi_{\alpha\alpha} a(t,Cx) -\|\tilde b +\tilde a\|_2^2
\\
&\quad= - w(Cx,a(t,Cx)) -\|\tilde b +\tilde a\|_2^2. 
\\[-3em]
\end{align*}
\end{proof} 
Standard implementations (e.g., \texttt{icare} in MATLAB) can be used to obtain {a} (nonunique) solution of the ARE. 
\absatz{Approach C: Matrix equation} 
In the field of passivity (i.e., $\Pi_{\zeta\zeta}=\Pi_{\alpha\alpha}=0, 2\Pi_{\zeta\alpha}=I$), 
 $P$ is typically described via a matrix equation. 
Such a matrix equation can be formulated for any of the considered sectors, which thus provides a further (but again strongly related) approach. 
\begin{theorem}[Matrix equation] \label{thm:ME}
Suppose, for some $L\in \mathbb R^{q\times n}$ and $W\in \mathbb R^{q\times m}$, $q\in \mathbb N$, the matrix $P=P^\top\in \mathbb R^{n\times n }$ satisfies the matrix equation 
\begin{subequations}\label{eq:matrixEquation}
\begin{align}
\underbrace{PA+A^\top P + C^\top \Pi_{\zeta\zeta} C}_{=: M_{11}(P)}&= -L^\top L, 
\\
\underbrace{PB - C^\top \Pi_{\zeta\alpha}}_{=:M_{12}(P)}&= -L^\top W, 
\\
\Pi_{\alpha\alpha} \,&= -W^\top W. \label{eq:PiaaWW}
\end{align}
\end{subequations}
Then $V(x)=x^\top P x$ satisfies (\ref{eq:DVw}) for any perturbation function $a(t,\zeta)$. 
\fine\end{theorem}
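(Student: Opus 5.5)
Since the matrix equation (\ref{eq:matrixEquation}) expresses $M(P)$ from (\ref{eq:LMI}) as a negative Gram matrix, the natural route is to reduce Thm.~\ref{thm:ME} to Thm.~\ref{thm:LMI}. We show that
\begin{align*}
M(P)=\begin{bmatrix} M_{11}(P) & M_{12}(P) \\ M_{12}^\top(P) & \Pi_{\alpha\alpha}\end{bmatrix} = -\begin{bmatrix} L^\top \\ W^\top\end{bmatrix}\begin{bmatrix} L & W\end{bmatrix}\preceq 0 ,
\end{align*}
and then invoke Thm.~\ref{thm:LMI}.

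The first step is to write out $M(P)=\mathrm{sym}(M_d(P))+M_w$ blockwise using (\ref{eq:MdP}) and (\ref{eq:Mw}). The upper-left block is $PA+A^\top P+C^\top\Pi_{\zeta\zeta}C=M_{11}(P)$. The off-diagonal block is $PB-C^\top\Pi_{\zeta\alpha}=M_{12}(P)$. The lower-right block is $0+\Pi_{\alpha\alpha}$. These are exactly the left-hand sides in (\ref{eq:matrixEquation}).

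Next we substitute the three equations. This gives $M(P)=-[L,\,W]^\top[L,\,W]$, which is the negative of a Gram matrix and hence negative semidefinite: for any $z=(x,v)$ we have $z^\top M(P)z=-\|Lx+Wv\|_2^2\le 0$. Thm.~\ref{thm:LMI} then yields that $V(x)=x^\top Px$ satisfies (\ref{eq:DVw}) for every perturbation $a(t,\zeta)$.

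As an optional direct check mirroring the proof of Thm.~\ref{thm:ARE}, one can evaluate $D^+V=2x^\top P(Ax-Ba)$ with $z=(x,-a(t,Cx))$. This gives $D^+V=-w(Cx,a(t,Cx))-\|Lx-Wa(t,Cx)\|_2^2$, which displays the slack term explicitly.

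The only point needing care is the sign bookkeeping in $M_w$. The off-diagonal entries carry $-C^\top\Pi_{\zeta\alpha}$ because the stacked vector contains $-a$ rather than $a$. This sign must match $M_{12}(P)$; once it does, no real obstacle remains.
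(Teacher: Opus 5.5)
Your proposal is correct and follows essentially the same route as the paper. The paper also identifies the left-hand sides of the matrix equation as the blocks of $M(P)=\mathrm{sym}(M_d(P))+M_w$, writes $M(P)=-[L\;W]^\top[L\;W]\preceq 0$ (via the Gram decomposition of Lemma~\ref{lem:semidefDecomp}), and concludes with Thm.~\ref{thm:LMI}; your explicit identity $D^+V=-w(Cx,a(t,Cx))-\|Lx-Wa(t,Cx)\|_2^2$ is a correct optional addition.
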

\begin{proof}
Observe that (\ref{eq:matrixEquation}) can be interpreted as block-componentwise evaluation of 
\begin{align} \label{eq:MLW}
\underbrace{\begin{bmatrix}
M_{11}(P) & M_{12}(P) \\
M_{12}^\top(P) & \Pi_{\alpha\alpha}
\end{bmatrix} }_{M(P)}
=
-\begin{bmatrix} L^\top \\ W^\top \end{bmatrix}
\begin{bmatrix} L & W \end{bmatrix}.
\end{align}
The left-hand side equals $\mathrm{sym}(M_d(P)) + M_w$ from (\ref{eq:MdP}) and (\ref{eq:Mw}). Recall that a matrix $N$ is positive semidefinite if and only if it can be decomposed as $N=Z^\top Z$ (see Lemma~\ref{lem:semidefDecomp} below). Thus, by $N=-M(P)$ and $Z=[L\;W]$, (\ref{eq:MLW}) {implies $M(P)\preceq 0$, and the assertion follows from Thm.~\ref{thm:LMI}.} 
\end{proof}

{\absatz{Advantages of each approach}
The main advantage of the LMI (\ref{eq:LMI}) is that a parameter optimization---in particular the search for the best possible sector describing parameter $\gamma$, $\rho$ or $k_1$ from Sec.~\ref{sec:sectorDescription} for which the LMI is feasible---can easily be included in the semidefinite programming problem. 

An advantage of the ARE (\ref{eq:ARE}) compared to an LMI is that an ARE can also be solved efficiently for very large systems. Such large-scale matrices arise, for example, from the discretization of infinite-dimensional problems \cite{Scholl.2024b}. Moreover, for very small problems like Example~\ref{ex:finalexample}, it can even be possible to find (parameter-dependent) analytical solutions, see \cite[Rem.~6.2.14]{Scholl.2024c}. 

Finally, the matrix equation is useful for recognizing the interrelations between the approaches, see below.} 

\section{Solvability equivalence}\label{sec:solvabilityEquivalence}
The aforementioned problems of an ARE, LMI, etc., are closely related, see, e.g., Willems' famous work \cite{Willems.1971c}. 
 
As already visible from (\ref{eq:MLW}), the matrix equation (\ref{eq:matrixEquation}) is equivalent to the LMI provided that the aggregated matrix $[\;L\;\;W\;]$ 
is allowed to be square.

Moreover, the LMI can be rewritten as an algebraic Riccati inequality (ARI), applying the following lemma to $N=-M(P)$ {from (\ref{eq:LMI})}, where $M_{22}(P)=\Pi_{\alpha\alpha}$.
\begin{lemma}[{\cite[Thm.~1.12]{Horn.2005}}]
Let $N=N^\top$ be a real matrix with a right-lower submatrix $N_{22}\succ 0$. Then 
\begin{align*}
N\succeq 0 \quad \Longleftrightarrow \quad N/N_{22}&\succeq 0, \\
\text{ where } N/N_{22}&=N_{11}-N_{12}N_{22}^{-1}N_{12}^\top 
\end{align*}
in terms of the Schur complement $N/N_{22}$ of $N$. 
\fine\end{lemma}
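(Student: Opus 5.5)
The plan is to prove the Schur complement lemma by a block congruence, so that semidefiniteness of $N$ reduces to semidefiniteness of a block-diagonal matrix. Since $N_{22}\succ 0$, the inverse $N_{22}^{-1}$ exists. I would introduce the invertible block upper-triangular matrix
\begin{align*}
T=\begin{bmatrix} I & -N_{12}N_{22}^{-1} \\ 0 & I\end{bmatrix},
\end{align*}
and verify by direct block multiplication, using $N_{21}=N_{12}^\top$ and $N_{22}=N_{22}^\top$, that
\begin{align*}
T N T^\top = \begin{bmatrix} N_{11}-N_{12}N_{22}^{-1}N_{12}^\top & 0 \\ 0 & N_{22}\end{bmatrix} = \begin{bmatrix} N/N_{22} & 0 \\ 0 & N_{22}\end{bmatrix}.
\end{align*}
The only care needed here is to check that the off-diagonal blocks cancel exactly, namely $N_{12}-N_{12}N_{22}^{-1}N_{22}=0$.

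Next I would use that congruence with an invertible matrix preserves positive semidefiniteness in both directions. Indeed, $z^\top TNT^\top z=(T^\top z)^\top N (T^\top z)$, and $z\mapsto T^\top z$ is a bijection of the underlying space. Hence $N\succeq 0$ holds if and only if the block-diagonal matrix $\mathrm{diag}(N/N_{22},N_{22})$ is positive semidefinite.

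Finally, a symmetric block-diagonal matrix is positive semidefinite if and only if each diagonal block is. For the forward direction, test with vectors $z=[z_1^\top,0]^\top$; for the converse, the quadratic form splits into the sum $z_1^\top (N/N_{22}) z_1 + z_2^\top N_{22} z_2$. Since $N_{22}\succ 0$ by hypothesis, the condition reduces to $N/N_{22}\succeq 0$, which gives the equivalence.

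No step is a genuine obstacle. The only point requiring attention is the bookkeeping of transposes in the congruence, and in particular confirming that the lower-left block of $N$ is $N_{12}^\top$ because $N$ is symmetric. An alternative route, if preferred, is to minimize the quadratic form over $z_2$ for fixed $z_1$. The minimizer is $z_2=-N_{22}^{-1}N_{12}^\top z_1$, and the minimum value is $z_1^\top (N/N_{22}) z_1$; this yields the same result.
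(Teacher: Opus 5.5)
Your proof is correct: the block congruence $TNT^\top=\mathrm{diag}(N/N_{22},N_{22})$ with invertible $T$, followed by the block-diagonal argument and $N_{22}\succ 0$, is the standard Schur-factorization argument behind this result. The paper gives no proof of its own and only cites \cite[Thm.~1.12]{Horn.2005}, where the equivalence is likewise obtained from this congruence (stated there with the pivot block in the upper-left position, which makes no difference).
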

Both considerations immediately lead to the conclusion that the LMI, ARI, and matrix equation are equivalent. 
\begin{proposition} \label{prop:LMIequivalences}
The following are equivalent:
\begin{enumerate}[label=(\alph*1)]
\item\label{it:LMI} LMI (\ref{eq:LMI}), i.e., $M(P)\preceq 0$,
\item \label{it:ARI} algebraic Riccati inequality (ARI) {(provided Asm.~\ref{asm:PIaaNeg} holds)}
\begin{align} \label{eq:ARI}
\underbrace{PA+A^\top P + C^\top \Pi_{\zeta\zeta} C}_{M_{11}(P)} + \underbrace{\tilde B(P)\, \tilde B^\top\!(P)}_{\hspace{-2cm}M_{12}(P) (-\Pi_{\alpha\alpha})^{-1}M_{12}^\top(P)\hspace{-2cm}} \preceq 0 
\end{align}
with $\tilde B(P)$ defined in Thm.~\ref{thm:ARE}, 
\item \label{it:ME} matrix equation (\ref{eq:matrixEquation}) with $q=n+m$ (i.e., the aggregated matrix $\begin{bmatrix} L & W\end{bmatrix}$ is square). 
\end{enumerate} 
\end{proposition}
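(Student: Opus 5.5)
The plan is to prove the two equivalences \ref{it:LMI}$\Leftrightarrow$\ref{it:ME} and \ref{it:LMI}$\Leftrightarrow$\ref{it:ARI} separately. Both use the block structure of $M(P)=\mathrm{sym}(M_d(P))+M_w$ already displayed in (\ref{eq:MLW}), namely $M_{11}(P)=PA+A^\top P+C^\top\Pi_{\zeta\zeta}C$, $M_{12}(P)=PB-C^\top\Pi_{\zeta\alpha}$ and $M_{22}(P)=\Pi_{\alpha\alpha}$. First I will check this block identification explicitly from (\ref{eq:MdP}) and (\ref{eq:Mw}). The off-diagonal block of $M_w$ is $-C^\top\Pi_{\zeta\alpha}$ and that of $\mathrm{sym}(M_d(P))$ is $PB$, so their sum is $M_{12}(P)$. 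The lower-right blocks are $0$ and $\Pi_{\alpha\alpha}$.

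For \ref{it:LMI}$\Leftrightarrow$\ref{it:ME}, I will apply Lemma~\ref{lem:semidefDecomp} (a real symmetric $N$ satisfies $N\succeq 0$ iff $N=Z^\top Z$ for some real $Z$) to $N=-M(P)$.
\begin{itemize}
\item If $M(P)\preceq 0$, take $Z=(-M(P))^{1/2}\in\mathbb R^{(n+m)\times(n+m)}$, which is square. Partitioning its columns as $Z=[\,L\;\;W\,]$ with $L\in\mathbb R^{(n+m)\times n}$ and $W\in\mathbb R^{(n+m)\times m}$ gives (\ref{eq:MLW}). Reading off the blocks yields (\ref{eq:matrixEquation}) with $q=n+m$.
\item Conversely, (\ref{eq:matrixEquation}) is the blockwise form of (\ref{eq:MLW}), so $-M(P)=[L\;W]^\top[L\;W]\succeq 0$, exactly as in the proof of Thm.~\ref{thm:ME}.
\end{itemize}
The only care needed is to confirm that the square-root factor has exactly $n+m$ rows, which fixes $q=n+m$.

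For \ref{it:LMI}$\Leftrightarrow$\ref{it:ARI} under Asm.~\ref{asm:PIaaNeg}, I will apply the Schur complement lemma above to $N=-M(P)$. Its lower-right block is $N_{22}=-\Pi_{\alpha\alpha}\succ 0$, so the lemma gives $N\succeq 0\iff N_{11}-N_{12}N_{22}^{-1}N_{12}^\top\succeq 0$. Substituting the blocks, the right-hand side reads
\[
-M_{11}(P)-M_{12}(P)(-\Pi_{\alpha\alpha})^{-1}M_{12}^\top(P)\succeq 0 ,
\]
which is (\ref{eq:ARI}) after a sign change. It remains to identify $M_{12}(P)(-\Pi_{\alpha\alpha})^{-1}M_{12}^\top(P)$ with $\tilde B(P)\tilde B^\top(P)$. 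This follows because $(-\Pi_{\alpha\alpha})^{-1/2}$ is symmetric and squares to $(-\Pi_{\alpha\alpha})^{-1}$.

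The main (mild) obstacle is sign bookkeeping: the lemma is stated for $N\succeq 0$, while the LMI and the ARI are $\preceq 0$ conditions. Transitivity then closes the cycle, with \ref{it:ARI} included only when Asm.~\ref{asm:PIaaNeg} holds, as stated in the proposition.
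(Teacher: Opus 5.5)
Your proposal is correct and follows the same route as the paper. The equivalence of \ref{it:LMI} and \ref{it:ME} comes from the block form (\ref{eq:MLW}) together with the factorization $-M(P)=Z^\top Z$; your explicit symmetric square root makes the square case $q=n+m$ concrete. The equivalence with \ref{it:ARI} comes from the Schur complement lemma applied to $N=-M(P)$ with $N_{22}=-\Pi_{\alpha\alpha}\succ 0$, and your sign bookkeeping and the identification $\tilde B(P)\tilde B^\top(P)=M_{12}(P)(-\Pi_{\alpha\alpha})^{-1}M_{12}^\top(P)$ are both right.
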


If, however, the aggregated matrix $[\;L\;\;W\;]$ 
from (\ref{eq:MLW}) is only rectangular since $W$ is supposed to be square, then the matrix equation becomes equivalent to the ARE (see (\ref{eq:M11Kres}) below). 

In terms of the LMI, such a row restriction on $[\;L\;\;W\;]$ comes along with a rank constraint, when the next lemma is applied to $N=-M(P)$ from (\ref{eq:MLW}). 
\begin{lemma}[{\cite[Ch.~5.5, Cor.~2]{Lancaster.1985}}]\label{lem:semidefDecomp}
There exists a decomposition $N=Z^\top Z$, where $Z$ has $q$ rows, if and only if $N\succeq 0$ and {$\mathrm{rank}(N)\leq q$}. 
\fine\end{lemma}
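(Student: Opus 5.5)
The claim is that $N\in\mathbb R^{k\times k}$ has a factorization $N=Z^\top Z$ with $Z\in\mathbb R^{q\times k}$ if and only if $N\succeq 0$ and $\mathrm{rank}(N)\leq q$. I would prove the two implications separately, using only the spectral theorem for real symmetric matrices.

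For necessity, suppose $N=Z^\top Z$ with $Z$ having $q$ rows. Then $N$ is symmetric, and for any $z$ we have $z^\top N z=\|Zz\|_2^2\geq 0$, so $N\succeq 0$. For the rank bound I would use
\begin{align*}
\mathrm{rank}(N)=\mathrm{rank}(Z^\top Z)\leq \mathrm{rank}(Z)\leq q .
\end{align*}
Equality $\mathrm{rank}(Z^\top Z)=\mathrm{rank}(Z)$ in fact holds, since $Z^\top Z z=0$ implies $\|Zz\|_2^2=0$, but the inequality is all that is needed.

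For sufficiency, suppose $N\succeq 0$ and $r:=\mathrm{rank}(N)\leq q$. By the spectral theorem, $N=U\Lambda U^\top$ with $U$ orthogonal and $\Lambda=\mathrm{diag}(\lambda_1,\ldots,\lambda_k)$. Since $N\succeq 0$, all $\lambda_i\geq 0$. Exactly $r$ of them are positive; order them as $\lambda_1,\ldots,\lambda_r>0$. Let $U_r$ collect the first $r$ columns of $U$ and set $\Lambda_r=\mathrm{diag}(\lambda_1,\ldots,\lambda_r)$. Then define
\begin{align*}
Z_0:=\Lambda_r^{1/2}U_r^\top\in\mathbb R^{r\times k},
\end{align*}
which is real because the $\lambda_i$ are nonnegative, and which satisfies $Z_0^\top Z_0=U_r\Lambda_rU_r^\top=N$.

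This $Z_0$ has $r$ rows, whereas $q$ rows are required. I would pad it with $q-r$ zero rows, $Z:=\begin{bmatrix}Z_0\\ 0_{(q-r)\times k}\end{bmatrix}$, which leaves $Z^\top Z=Z_0^\top Z_0=N$ unchanged. This step is exactly where the hypothesis $r\leq q$ is used. The case $r=0$, i.e.\ $N=0$, is covered by $Z=0_{q\times k}$.

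The argument has no real obstacle; the only point needing care is the bookkeeping of row counts in the padding step.
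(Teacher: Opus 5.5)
Your proof is correct in both directions. Necessity follows from $z^\top N z=\|Zz\|_2^2$ and $\mathrm{rank}(Z^\top Z)\le\mathrm{rank}(Z)\le q$; sufficiency follows from $N=U_r\Lambda_rU_r^\top$ and padding $\Lambda_r^{1/2}U_r^\top$ with $q-r$ zero rows, which is exactly where $\mathrm{rank}(N)\le q$ is used. The paper gives no argument of its own and only cites Lancaster--Tismenetsky for this standard fact, so your spectral-theorem proof is the standard self-contained route, and its real setting is what the paper needs for a real factor $Z=[L\;\;W]$.
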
 
Thus, the ARE is equivalent to constrained versions of the LMI and matrix equation. 
\begin{proposition}\label{prop:AREequivalences}
The following are equivalent:
\begin{enumerate}[label={(\alph*2)}]
\item \label{it:LMIc} LMI (\ref{eq:LMI}) with rank constraint 
{$\mathrm{rank}(M(P))\leq m$}, 
\item \label{it:ARE} ARE (\ref{eq:ARE}) (provided Asm.~\ref{asm:PIaaNeg} holds),
\item \label{it:MEc} matrix equation (\ref{eq:matrixEquation}) with $q=m$ (i.e., $W$ and $L$ have as many rows as $\Pi_{\alpha\alpha}$). 
\end{enumerate} 
\end{proposition}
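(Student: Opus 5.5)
The equivalences in Prop.~\ref{prop:AREequivalences} will be established by going around the cycle \ref{it:MEc} $\Rightarrow$ \ref{it:LMIc} $\Rightarrow$ \ref{it:MEc}, and separately proving \ref{it:MEc} $\Leftrightarrow$ \ref{it:ARE}. The first pair follows from Lemma~\ref{lem:semidefDecomp} applied to $N=-M(P)$, using the block identity (\ref{eq:MLW}).

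\textbf{\ref{it:MEc} $\Leftrightarrow$ \ref{it:LMIc}.} If (\ref{eq:matrixEquation}) holds with $q=m$, then $-M(P)=Z^\top Z$ with $Z=[L\;\;W]\in\mathbb R^{m\times(n+m)}$. Hence $M(P)\preceq 0$ and $\mathrm{rank}(M(P))\le m$.

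Conversely, if $M(P)\preceq 0$ and $\mathrm{rank}(M(P))\le m$, Lemma~\ref{lem:semidefDecomp} gives $-M(P)=Z^\top Z$ with $Z$ having $m$ rows. Partitioning $Z=[L\;\;W]$ conformally with the blocks of $M(P)$ and reading (\ref{eq:MLW}) blockwise yields (\ref{eq:matrixEquation}) with $q=m$.

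\textbf{\ref{it:MEc} $\Rightarrow$ \ref{it:ARE}.} Under Asm.~\ref{asm:PIaaNeg}, the identity $W^\top W=-\Pi_{\alpha\alpha}\succ 0$ with $W$ square ($m\times m$) forces $W$ to be invertible. The second equation gives $L^\top=-M_{12}(P)W^{-1}$. Substituting into the first equation yields
\begin{align}\label{eq:M11Kres}
M_{11}(P)=-L^\top L=-M_{12}(P)\,(W^\top W)^{-1}M_{12}^\top(P)=-M_{12}(P)(-\Pi_{\alpha\alpha})^{-1}M_{12}^\top(P).
\end{align}
The right-hand side equals $-\tilde B(P)\tilde B^\top(P)$, which is exactly the ARE (\ref{eq:ARE}).

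\textbf{\ref{it:ARE} $\Rightarrow$ \ref{it:MEc}.} Choose $W:=(-\Pi_{\alpha\alpha})^{1/2}$, which is symmetric and invertible, and set $L:=-W^{-1}M_{12}^\top(P)=-\tilde B^\top(P)$. Then $W^\top W=-\Pi_{\alpha\alpha}$, and $L^\top W=-M_{12}(P)W^{-1}W=-M_{12}(P)$. Finally, $-L^\top L=-\tilde B\tilde B^\top=M_{11}(P)$ by the ARE. So (\ref{eq:matrixEquation}) holds with $q=m$.

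The only delicate points are the invertibility of the square $W$, which relies on Asm.~\ref{asm:PIaaNeg}, and the conformal block partition in the rank-constrained direction. Everything else is direct substitution. Alternatively, one could argue \ref{it:LMIc} $\Leftrightarrow$ \ref{it:ARE} directly: the Schur complement with $N_{22}=-\Pi_{\alpha\alpha}\succ 0$ gives $\mathrm{rank}(N)=m+\mathrm{rank}(N/N_{22})$. Hence the rank constraint forces the Schur complement, i.e.\ the ARI expression in (\ref{eq:ARI}), to vanish, which is the ARE.
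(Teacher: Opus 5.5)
Your proposal is correct and follows essentially the paper's route. You obtain (a2)$\Leftrightarrow$(c2) by applying Lemma~\ref{lem:semidefDecomp} to $N=-M(P)$ together with the block identity (\ref{eq:MLW}). You obtain (c2)$\Leftrightarrow$(b2) from the invertibility of the square $W$ forced by $W^\top W=-\Pi_{\alpha\alpha}\succ 0$; the paper substitutes through $K=W^{-1}L$ and you through $L^\top=-M_{12}(P)W^{-1}$, which is the same computation. Your explicit choice $W=(-\Pi_{\alpha\alpha})^{1/2}$, $L=-\tilde B^\top(P)$ for (b2)$\Rightarrow$(c2) also writes out the converse direction, which the paper leaves implicit.
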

\begin{proof}
It only remains to provide a reasoning on \ref{it:MEc} $\Leftrightarrow$ \ref{it:ARE}: 
By (\ref{eq:PiaaWW}) {and Asm.~\ref{asm:PIaaNeg}, $W^\top W=-\Pi_{\alpha\alpha}\succ 0$, and thus $W$ is invertible. With $K=W^{-1}L$, the right-hand side of (\ref{eq:MLW}) equals}
\begin{align}\label{eq:LtoK}
-\begin{bmatrix} L^\top \\ W^\top \end{bmatrix}
\begin{bmatrix} L & W \end{bmatrix}
= -\begin{bmatrix} K^\top \\ I_m \end{bmatrix}
(-\Pi_{\alpha\alpha})
\begin{bmatrix} K & I_m \end{bmatrix}{.}
\end{align}
A componentwise comparison of the left-hand side of (\ref{eq:MLW}) and the result of (\ref{eq:LtoK}) yields
\begin{align}
M_{11}(P)&=-K^\top(-\Pi_{\alpha\alpha})K, \label{eq:M11K}
\\*
M_{12}(P) &= - K^\top (-\Pi_{\alpha\alpha}) \label{eq:M12K}
\\
&\Rightarrow \quad M_{11}(P) = M_{12}(P) \,K. \label{eq:M11Kres}
\end{align}
{Note that} (\ref{eq:M11Kres}) with $K=-(-\Pi_{\alpha\alpha})^{-1} M_{12}^\top(P)$ from (\ref{eq:M12K}), is (\ref{eq:ARI}) as an equality, i.e., the ARE (\ref{eq:ARE}). 
\end{proof}

Due to the additional constraints, solvability of the problems in Proposition~\ref{prop:AREequivalences} implies solvability of the weaker problems in Proposition~\ref{prop:LMIequivalences}. However, according to the following lemma, the converse is true as well. 
\begin{lemma}[Solvability equivalence, see, e.g., {\cite{Scherer.1995}}]\label{lem:solvabilityEquivalence}
Assume $(A,B)$ is stabilizable. There exists a $P=P^\top\in \mathbb R^{n\times n}$ that solves 
 \ref{it:LMIc}, \ref{it:ARE}, or equivalently \ref{it:MEc} in Prop.~\ref{prop:AREequivalences}
 if and only if there exists a $P=P^\top\in \mathbb R^{n\times n}$ that solves \ref{it:LMI}, \ref{it:ARI}, or equivalently \ref{it:ME} in Prop.~\ref{prop:LMIequivalences}. 
\fine\end{lemma}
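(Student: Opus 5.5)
The direction from Prop.~\ref{prop:AREequivalences} to Prop.~\ref{prop:LMIequivalences} is immediate: a solution of \ref{it:LMIc}, \ref{it:ARE} or \ref{it:MEc} also solves the corresponding weaker problem \ref{it:LMI}, \ref{it:ARI} or \ref{it:ME}. For the converse it suffices, by the two propositions, to start from a $P_0=P_0^\top$ satisfying the ARI (\ref{eq:ARI}) and to construct some $P$ satisfying the ARE (\ref{eq:ARE}). Throughout I work under Asm.~\ref{asm:PIaaNeg} and abbreviate $S:=-\Pi_{\alpha\alpha}\succ 0$.

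\emph{Step 1: reduction to a standard Riccati inequality.} I expand $M_{12}(P)S^{-1}M_{12}^\top(P)$ with $M_{12}(P)=PB-C^\top\Pi_{\zeta\alpha}$. This splits the Riccati expression $\mathcal R(P):=M_{11}(P)+M_{12}(P)S^{-1}M_{12}^\top(P)$ into
\begin{align*}
\mathcal R(P)=P\tilde A+\tilde A^\top P+PGP+\tilde Q,
\end{align*}
with $\tilde A:=A-BS^{-1}\Pi_{\zeta\alpha}^\top C$, $G:=BS^{-1}B^\top\succeq 0$ and $\tilde Q:=C^\top(\Pi_{\zeta\zeta}+\Pi_{\zeta\alpha}S^{-1}\Pi_{\zeta\alpha}^\top)C$. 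Substituting $X:=-P$ gives the standard control form
\begin{align*}
-\mathcal R(-X)=X\tilde A+\tilde A^\top X-XGX-\tilde Q=:\mathcal F(X).
\end{align*}
The ARI then reads $\mathcal F(X_0)\succeq 0$ for $X_0=-P_0$, and the ARE reads $\mathcal F(X)=0$. Since $\tilde A$ differs from $A$ only by the output feedback term $-B(S^{-1}\Pi_{\zeta\alpha}^\top C)$, stabilizability of $(A,B)$ transfers to $(\tilde A,B)$. Because $G=BS^{-1}B^\top$ has the same range as $B$, it also transfers to $(\tilde A,G)$.

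\emph{Step 2: existence of a maximal solution via Kleinman--Newton iteration.} This is the classical statement (Willems~\cite{Willems.1971c}, \cite{Scherer.1995}): stabilizability together with a solution of $\mathcal F(X)\succeq 0$ yields a solution $X_+\succeq X_0$ of $\mathcal F(X)=0$. I would prove it as follows.
\begin{enumerate}
\item Pick $F_0$ such that $\tilde A-GF_0$ is Hurwitz.
\item Define $X_k$ by the Lyapunov equation
\begin{align*}
(\tilde A-GF_k)^\top X_k+X_k(\tilde A-GF_k)=-F_k^\top GF_k+\tilde Q,
\end{align*}
and set $F_{k+1}:=X_k$.
\item Completing the square gives the comparison identity
\begin{align*}
(\tilde A-GF_k)^\top(X_k-X_0)+(X_k-X_0)(\tilde A-GF_k)=-\mathcal F(X_0)-(F_k-X_0)G(F_k-X_0)\preceq 0.
\end{align*}
Since $\tilde A-GF_k$ is Hurwitz, this yields $X_k\succeq X_0$.
\item An analogous identity shows that $\tilde A-GX_k$ is again Hurwitz. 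A similar identity for $X_k-X_{k+1}$ shows that the sequence is monotonically nonincreasing.
\end{enumerate}
Monotone and bounded from below, $X_k$ converges to some $X_+\succeq X_0$. Passing to the limit in the Lyapunov equation gives $\mathcal F(X_+)=0$, so $P:=-X_+$ solves (\ref{eq:ARE}).

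\emph{Main obstacle.} The delicate part is the stability bookkeeping inside Step 2: showing inductively that each $\tilde A-GX_k$ remains Hurwitz, so that the Lyapunov equations stay uniquely solvable and the comparison arguments apply. I would handle this with the identity
\begin{align*}
(\tilde A-GX_k)^\top(X_k-X_0)+(X_k-X_0)(\tilde A-GX_k)=-\mathcal F(X_0)-(X_k-X_0)G(X_k-X_0)-(X_k-F_k)G(X_k-F_k),
\end{align*}
combined with a standard Lyapunov/detectability-type argument. Where the right-hand side is only semidefinite and strict stability cannot be concluded, I would fall back on a perturbation trick: first replace $\mathcal F(X_0)\succeq0$ by a strict version (e.g.\ shift $\tilde Q$ by $-\varepsilon I$), run the argument there, and let $\varepsilon\to 0$. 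This uses monotonicity of the maximal solution in $\varepsilon$ and its boundedness from below by $X_0$, so the limit exists. In the limit the closed loop $\tilde A-GX_+$ is only guaranteed to have eigenvalues in the closed left half-plane. That is harmless, since the lemma asks only for some solution of the ARE, not a stabilizing one.
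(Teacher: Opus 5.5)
Your proof is correct, but it does not follow the paper, because the paper gives no argument for this lemma. It only notes the trivial direction: the constrained problems of Prop.~\ref{prop:AREequivalences} are special cases of those in Prop.~\ref{prop:LMIequivalences}. The converse it defers to the Riccati literature \cite{Scherer.1995}.

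You instead supply a self-contained proof of the one nontrivial step, ARI $\Rightarrow$ ARE. The following all check out:
\begin{itemize}
\item the rewriting $\mathcal F(X)=X\tilde A+\tilde A^\top X-XGX-\tilde Q$ with $X=-P$;
\item the transfer of stabilizability to $(\tilde A,G)$;
\item the Kleinman--Newton iteration with its comparison and monotonicity identities.
\end{itemize}

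Your route also gives more than the lemma states. The iterates depend only on $F_0$, and the comparison identity holds with $X_0$ replaced by any $\hat X$ with $\mathcal F(\hat X)\succeq 0$. Hence the limit $X_+$ dominates every ARI solution. So $P=-X_+$ is the smallest solution of the LMI/ARI with respect to $\preceq$, and it also solves the ARE. Moreover, $\tilde A-GX_+$ has its spectrum in the closed left half-plane, being a limit of Hurwitz matrices, and the iteration is a constructive algorithm. The citation, by contrast, keeps the paper short.

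Points on your write-up:
\begin{itemize}
\item \emph{The ``main obstacle'' is already resolved by your own identity.} No detectability argument or $\varepsilon$-regularization is needed. Set $W:=\mathcal F(X_0)+(X_k-X_0)G(X_k-X_0)+(X_k-F_k)^\top G(X_k-F_k)\succeq 0$. Suppose $(\tilde A-GX_k)v=\lambda v$ with $\mathrm{Re}\,\lambda\ge 0$. Then $2\,\mathrm{Re}(\lambda)\,v^H(X_k-X_0)v=-v^HWv$, and $X_k\succeq X_0$ makes the left side nonnegative while the right side is nonpositive. This forces $v^HWv=0$, hence $G(X_k-F_k)v=0$, so $(\tilde A-GF_k)v=\lambda v$. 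That contradicts $\tilde A-GF_k$ being Hurwitz. Your $\varepsilon$-fallback would also work, but it only adds a second limit.
\item \emph{Notation clash.} You use $X_0$ both for $-P_0$ and for the first iterate. Rename one of them, otherwise the comparison identities become ambiguous.
\item \emph{Scope.} Your argument needs Assumption~\ref{asm:PIaaNeg}, which is the setting of the paper's Riccati discussion. For singular $\Pi_{\alpha\alpha}$ (e.g., pure passivity), the implication \ref{it:LMI} $\Rightarrow$ \ref{it:LMIc} is not covered by your argument and would need a different tool, such as Lur'e-equation techniques.
\end{itemize}
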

Note that stabilizability of $(A,B)$ is a weaker condition than the existence of $K_{\mathrm{stab}}$ from Assumption~\ref{asm:stabilizing}, and thus not restrictive in the present context. 

As a result, there exists a (nonunique) real symmetric solution $P$ in one of the above approaches if and only if there exists a solution in any of the other approaches. The question of solvability, which thus is simultaneously the feasibility of all the discussed approaches, is addressed via the KYP lemma in the frequency domain. 
 
\section{Origin of a {frequency-domain} condition}\label{sec:FDI}
Before {revisiting} the KYP lemma 
in the next section, this section motivates the frequency-domain condition on which the KYP lemma relies. 

Consider again the transition from {the defining inequality} (\ref{eq:DVw}) from Sec.~\ref{sec:idea} to the LMI (\ref{eq:LMI}). According to (\ref{eq:MdP}) and (\ref{eq:Mw}), the original requirement (\ref{eq:DVw}) with $V(x)=x^\top P x$ becomes
\begin{align}
&\forall x\in \mathbb R^n, u=-a(t,Cx): \nonumber \\*
& \underbrace{\begin{bmatrix} x \\ u \end{bmatrix} ^\top \mathrm{sym}( M_d(P)) 
\begin{bmatrix} x \\ u \end{bmatrix}}_{2x^\top P (Ax+Bu)}
\leq 
-\underbrace{\begin{bmatrix} x \\ u \end{bmatrix}^\top M_w
\begin{bmatrix} x \\ u \end{bmatrix}}_{w(Cx,-u)}. \label{eq:requirementWithAnsatzReal} 
\end{align}
The latter is strengthened by the LMI (\ref{eq:LMI}), which requires that $\mathrm{sym}( M_d(P))+M_w$ is negative semidefinite, i.e., (\ref{eq:requirementWithAnsatzReal}) shall hold for any $u\in \mathbb R^m$ instead of merely $u=-a(t,Cx)$. 
Even more, the underlying negative semidefiniteness in $\mathbb R$ of the real symmetric matrix 
\begin{align*}
\mathrm{sym}( M_d(P))+M_w \preceq 0
\end{align*}
 is also equivalent to its negative semidefiniteness in $\mathbb C$, in the sense of
\begin{align}
\forall z\in \mathbb C^{n+m}: \quad z^H\, \big(\mathrm{sym}( M_d(P))+M_w\big)\, z \leq 0. 
\end{align}
Consequently, the LMI (\ref{eq:LMI}) even means 
\begin{align}
&\forall x\in \mathbb C^n, \forall u\in \mathbb C^m: \nonumber \\
& 
\underbrace{\begin{bmatrix} x \\ u \end{bmatrix}^H \mathrm{sym}( M_d(P)) 
\begin{bmatrix} x \\ u \end{bmatrix}}_{\mathrm{Re}\big(2x^H P (Ax+Bu)\big)}
\leq 
-\underbrace{\begin{bmatrix} x \\ u \end{bmatrix}^H M_w
\begin{bmatrix} x \\ u \end{bmatrix}}_{=:w_{\mathbb C}(Cx,-u)}. \label{eq:LMImeaning} 
\end{align}
Concerning the underbrace on the left-hand side, note that the real part operator $\mathrm{Re}(\cdot)$ corresponds to the symmetrization\footnote{Let $N\in \mathbb R^{\nu\times \nu}$ and $z\in \mathbb C^\nu$. Then $z^H\mathrm{skw}(N)z\in \mathrm i \mathbb R$ and $z^H\mathrm{sym}(N) z \in \mathbb R$. Consequently, $\mathrm{Re}(z^H N z)=z^H\mathrm{sym}(N) z$.} of $M_d(P)$ from\footnote{Without the symmetrization in (\ref{eq:MdP}), a generalized definiteness term would have to be introduced. However, such generalizations for non-Hermitian matrices directly refer to the sign of the real part of the complex quadratic form, and thus yield the same result.} (\ref{eq:MdP}). 
The underbrace on the right-hand side introduces $w_{\mathbb C}:\mathbb C^p\times \mathbb C^m\to \mathbb R$, 
\begin{align}\label{eq:wcompl}
w_\mathbb C(\zeta,\alpha) &= \zeta^H \Pi_{\zeta\zeta}\zeta + \zeta^H\Pi_{\zeta\alpha} \alpha+\alpha^H\Pi_{\zeta\alpha}^\top \zeta+ \alpha^H \Pi_{\alpha\alpha}\alpha \nonumber\\
&=\mathrm{Re}(\zeta^H \Pi_{\zeta\zeta}\zeta + 2\zeta^H\Pi_{\zeta\alpha} \alpha+ \alpha^H \Pi_{\alpha\alpha}\alpha),
\end{align}
which is the complexification of $w:\mathbb R^p\times \mathbb R^m\to\mathbb R$ used in (\ref{eq:requirementWithAnsatzReal}), with $(\cdot)^\top$ replaced by $(\cdot)^H$, and again the real part unless being based on the symmetric form. 

The KYP lemma claims that, instead of considering all $x\in \mathbb C^n$ in (\ref{eq:LMImeaning}), it suffices to test whether (\ref{eq:LMImeaning}) holds for the special choice 
\begin{align} \label{eq:xf}
x_f:=(\mathrm i \omega I_n - A)^{-1} B u, \quad \omega\in \mathbb R 
\end{align}
 (assuming $\mathrm{det}(\mathrm i \omega I_n - A)\neq 0$ for all $\omega$, which, however, will again be weakened below). 
Notably, the left-hand side of (\ref{eq:LMImeaning}) with $P=P^H$ becomes zero if $x=x_f$ since 
\begin{align} \label{eq:xfBu}
Bu \stackrel{(\ref{eq:xf})}= (\mathrm i \omega I_n - A) x_f 
\end{align}
yields, in the left underbrace of (\ref{eq:LMImeaning}),
\begin{align}
\mathrm{Re}(x_f ^H P (A x_f + Bu))
\stackrel{(\ref{eq:xfBu})} = 
-\mathrm{Im} (x_f^H P\, x_f \,\omega) \stackrel{P=P^H}=0. \nonumber
\end{align} 
Altogether, (\ref{eq:LMImeaning}) restricted to $x=x_f$ from (\ref{eq:xf}) is 
\begin{align}
&\forall \omega\in \mathbb R, \forall u\in \mathbb C^m, x=(\mathrm i \omega I_n - A)^{-1} B u : 
\nonumber \\*
& \hspace{4.25cm}
0 \leq - w_{\mathbb C}(Cx,-u) \label{eq:KYPmotivation}
\end{align}
 and does not rely on the unknown $P$ anymore. The KYP lemma states (under weak assumptions) the following: {Inequality} 
(\ref{eq:KYPmotivation}) already ensures the existence of a real symmetric matrix $P$ that solves (\ref{eq:LMImeaning}) and thus the LMI (\ref{eq:LMI}). Due to Sec.~\ref{sec:solvabilityEquivalence}, {solvability} of the ARE (\ref{eq:ARE}) and the matrix equation (\ref{eq:matrixEquation}) are confirmed as well. 

The first argument of $w_\mathbb C$ in (\ref{eq:KYPmotivation}) can be written as
\begin{align}
Cx_f\stackrel{(\ref{eq:xf})}=& G(\mathrm i \omega) u
\quad \text{{with} }\label{eq:G}
\quad G(s)=C(sI-A)^{-1} B. 
\end{align}
Therefore, (\ref{eq:KYPmotivation}) is referred to as \textit{frequency-domain inequality} (FDI). 
 
\section{Solvability criterion: KYP}
The Kalman--Yakubovich--Popov (KYP) lemma relies on the FDI (\ref{eq:KYPmotivation}), requiring 
 \begin{align}
0&\leq -w_{\mathbb C}(C x_f , -u) = -w_{\mathbb C}(G(\mathrm i \omega) u , -u)
\nonumber \\
&\stackrel{(\ref{eq:wcompl})}=
- u^H \underbrace{ \begin{bmatrix}
G(\mathrm i \omega) \\ -I_m
\end{bmatrix}^H 
\begin{bmatrix} \Pi_{\zeta\zeta} & \Pi_{\zeta \alpha} \\
\Pi_{\zeta\alpha}^\top & \Pi_{\alpha\alpha}
\end{bmatrix}
\begin{bmatrix}
G(\mathrm i \omega) \\ -I_m
\end{bmatrix}
 }_{=:W_G^-(\mathrm i \omega)}
 u \label{eq:WGiw}
\end{align}
 for all $u\in \mathbb C^m$. It thus is a semidefiniteness requirement 
\begin{align}\label{eq:WGsemidefinite}
0\preceq -W_G^-(\mathrm i \omega), 
\end{align}
which is to be met for all (or almost all, see below) $\omega\in \mathbb R$. Based on the latter, the KYP lemma yields a statement on the solvability of the LMI, ARE, etc., from Sec.~\ref{sec:ApproachesV}.

Classical versions of the KYP lemma (see \cite{Gusev.2006} and references therein) assume controllability of $(A,B)$ (early versions even observability of $(A,C)$). In the present setting, $B$ and $C$ describe some perturbation structure, and---in contrast to a minimal realization of a given transfer function---there is no reason why $(A,B)$ should be controllable. We therefore use a version of the KYP lemma that comes with weaker assumptions. 
 
The first needed assumption is that, for at least one frequency point $\omega_0\in \mathbb R$, 
the definiteness in (\ref{eq:WGsemidefinite}) is even strict. 
\begin{assumption} \label{asm:WG}
$\exists \omega_0\in \mathbb R:  \lambda_{\max}(W_G^-(\mathrm i \omega_0))<0$.
\fine\end{assumption}
This assumption is weaker than $\Pi_{\alpha\alpha}\prec 0$ from Assumption~\ref{asm:PIaaNeg}: Due to the strict properness of $G(s)$, it holds $\lim_{\omega\to\infty} W_G^-(\mathrm i \omega)= \Pi_{\alpha\alpha}$, and thus Assumption \ref{asm:WG} is always satisfied if $\Pi_{\alpha\alpha}\prec 0$. 

A second assumption concerns {the matrix $A$ in (\ref{eq:ABaC})}. 
\begin{assumption} \label{asm:AeigImagCtrl}
$(A,B)$ does not have uncontrollable modes on the imaginary axis. 
\fine\end{assumption}
This assumption is weaker than Assumption \ref{asm:stabilizing} that ensures stabilizability due to the existence of $K_{\mathrm{stab}}$. Thus, it is also not restrictive in the present setting. 
Note that Assumption \ref{asm:AeigImagCtrl} does not rule out the possibility of controllable eigenvalues occurring on the imaginary axis. These lead to poles in $(\mathrm i \omega I_n -A)^{-1}B$ from (\ref{eq:xf}), and thus $W_G^-(\mathrm i \omega)$ is not defined on 
\begin{align} \label{eq:setGammaInf}
\Gamma_\infty:=\{\omega\in \mathbb R: \mathrm{det}(\mathrm i \omega I_n- A) = 0\}, 
\end{align}
even if only removable singularities occur in $W_G^-(\mathrm i \omega)$. 
It is already known from classical versions of the KYP lemma that it suffices\footnote{In contrast, $\Gamma_\infty$ cannot be ignored if it contains also uncontrollable eigenvalues, which is admissible in a KYP lemma for the strict LMI $M(P)\prec 0$, \cite[Lemma~1]{Balakrishnan.2003}. 
Therefore, in its most general form, the strict KYP lemma cannot be expressed only in terms of $G(\mathrm i \omega)$, but (\ref{eq:xfBu}) {is} used in place of (\ref{eq:xf}), cf.~\cite[Rem.~8]{Balakrishnan.2003}.} to consider only $\omega\in \mathbb R\setminus \Gamma_\infty$. 
\begin{lemma}[KYP lemma with weak assumptions] 
\label{lem:KYPwithoutctrl}
Suppose $W_G^-(\mathrm i \omega)$ from (\ref{eq:WGiw}) meets Assumption \ref{asm:WG}, and $A$ meets Assumption \ref{asm:AeigImagCtrl}. 
The solvability of the LMI, ARE, etc. described in Lemma~\ref{lem:solvabilityEquivalence} holds if and only if 
\begin{align} \label{eq:KYP}
\forall \omega\in \mathbb R\setminus \Gamma_{\infty}: \quad \lambda_{\max}(W_G^-(\mathrm i \omega))\leq 0.
\\*[-2em]\nonumber
\end{align}
\fine
\end{lemma}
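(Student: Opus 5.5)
The plan is to treat the two directions separately. Necessity is essentially the computation of Sec.~\ref{sec:FDI}, and sufficiency is reduced to a classical KYP lemma for a controllable subsystem.

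\emph{Necessity.} Suppose some $P=P^\top$ solves the LMI (\ref{eq:LMI}); by Lemma~\ref{lem:solvabilityEquivalence} this is equivalent to solvability of the ARE, etc. By the argument leading to (\ref{eq:LMImeaning}), the inequality then holds for all complex $x,u$. Fix $\omega\in\mathbb R\setminus\Gamma_\infty$ and $u\in\mathbb C^m$, and insert $x=x_f$ from (\ref{eq:xf}). By (\ref{eq:xfBu}) and $P=P^H$ the left-hand side vanishes, which leaves $0\le -w_{\mathbb C}(G(\mathrm i\omega)u,-u)=-u^HW_G^-(\mathrm i\omega)u$. Since $u$ is arbitrary, this is (\ref{eq:KYP}). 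Neither assumption is needed for this direction.

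\emph{Sufficiency, step 1: remove the uncontrollable part.} Apply a Kalman controllability decomposition $x=T\tilde x$. This gives $\tilde A=\left[\begin{smallmatrix}A_{11}&A_{12}\\0&A_{22}\end{smallmatrix}\right]$, $\tilde B=\left[\begin{smallmatrix}B_1\\0\end{smallmatrix}\right]$ and $\tilde C=[C_1\;C_2]$, with $(A_{11},B_1)$ controllable. By Assumption~\ref{asm:AeigImagCtrl}, $A_{22}$ has no eigenvalues on $\mathrm i\mathbb R$. Hence $\Gamma_\infty$ consists only of imaginary eigenvalues of $A_{11}$, and $G(s)=C_1(sI-A_{11})^{-1}B_1$. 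The hypothesis (\ref{eq:KYP}) is therefore a frequency-domain inequality for the controllable triple $(A_{11},B_1,C_1)$ alone. The classical non-strict KYP lemma for controllable pairs (e.g., in the form surveyed in \cite{Gusev.2006}) gives $P_{11}$ solving the reduced LMI. Assumption~\ref{asm:WG} is used here to rule out the degenerate situation and to guarantee, via Lemma~\ref{lem:solvabilityEquivalence}, that a solution with the rank property, i.e.\ an ARE solution, exists as well.

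\emph{Sufficiency, step 2: extend to the full state.} This is the main obstacle. We must find $P_{12},P_{22}$ such that the full block matrix $M(T^\top PT)$ is $\preceq0$. The terms in $x_2$ come from $C_2$ and $A_{12}$ and do not involve $u$ directly. My first attempt is to work with the reduced ARE solution $P_{11}$ and its closed-loop matrix $A_{11}+B_1K_{11}$, where $K_{11}$ is read off from (\ref{eq:M12K}). In the matrix-equation form (\ref{eq:matrixEquation}) one seeks $L=[L_1\;L_2]$ with $W$ fixed. This turns the off-diagonal requirement into a Sylvester equation in $P_{12}$ coupling $A_{22}$ with the closed-loop matrix, and it turns the $(2,2)$ block into a Lyapunov-type equation for $P_{22}$ in $A_{22}$.

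The Lyapunov-type equation is solvable because $A_{22}$ has no imaginary eigenvalues. To keep the $(2,2)$ block on the right side, I would split $A_{22}$ further into its stable and antistable parts and choose signs accordingly. The Sylvester equation, however, requires spectral separation between $A_{22}$ and the mirrored closed-loop spectrum. I would try to secure this by choosing the stabilizing or antistabilizing ARE solution appropriately, using the freedom in the nonunique ARE solutions. If this bookkeeping fails, the fallback is to invoke directly the KYP version without controllability in \cite{Scherer.1995} (cf.\ also \cite{Balakrishnan.2003}), for which Assumptions~\ref{asm:WG} and~\ref{asm:AeigImagCtrl} are precisely the hypotheses.
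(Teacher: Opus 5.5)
Your necessity argument is correct; it is exactly the computation of Sec.~\ref{sec:FDI} and uses neither assumption. The reduction in Step~1 is also fine. The gap is Step~2, which you leave open, and the repair you propose does not cover the lemma's hypothesis.

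Put $F_{11}:=(-\Pi_{\alpha\alpha})^{-1}(P_{11}B_1-C_1^\top\Pi_{\zeta\alpha})^\top$. This is $-K_{11}$ in the convention of (\ref{eq:M12K}), so the closed loop is $A_{11}-B_1K_{11}$, not $A_{11}+B_1K_{11}$. The $(1,2)$ block of the ARE is then the Sylvester equation
\[
(A_{11}+B_1F_{11})^\top P_{12}+P_{12}A_{22}=-N_{12},
\]
with $N_{12}$ built from $P_{11},A_{12},C_1,C_2$. It is solvable for every right-hand side iff $\sigma(A_{11}+B_1F_{11})\cap\sigma(-A_{22})=\emptyset$.

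\begin{itemize}
\item An extremal reduced ARE solution confines $\sigma(A_{11}+B_1F_{11})$ to one closed half-plane. It therefore guarantees separation only if $\sigma(A_{22})$ lies entirely in the opposite open half-plane.
\item Assumption~\ref{asm:AeigImagCtrl} allows uncontrollable eigenvalues in both half-planes. If $\pm\mu\in\sigma(A_{22})$ and $\mu$ is an eigenvalue of the reduced Hamiltonian, every reduced ARE solution has $\mu$ or $-\mu$ in its closed-loop spectrum. The Sylvester operator is then singular for every choice.
\item Your $(2,2)$ step has the same flaw. The Lyapunov \emph{equation} needs $\sigma(A_{22})\cap\sigma(-A_{22})=\emptyset$, not merely no imaginary eigenvalues.
\end{itemize}

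In fact the full ARE can be unsolvable under the lemma's hypotheses. Take $A=\mathrm{diag}(1,-1)$, $B=0_{2\times 1}$, $C=[1\;\,1]$, $\Pi=\mathrm{diag}(1,-1)$. Both assumptions and the FDI hold ($W_G^-\equiv -1$), and $P=\mathrm{diag}(-1,1)$ solves the LMI. Yet (\ref{eq:ARE}) becomes $PA+A^\top P=-1_{2\times 2}$, whose off-diagonal entry reads $0=-1$.

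So Step~2 must aim at the LMI, using a Lyapunov \emph{inequality} with indefinite $P_{22}$ for the $(2,2)$ block, as your sign-splitting remark suggests. The ARE is reached only afterwards via Lemma~\ref{lem:solvabilityEquivalence}, i.e.\ under stabilizability. Under stabilizability (in force in the paper through Assumption~\ref{asm:stabilizing}) your idea does close. Then $A_{22}$ is Hurwitz, and the reduced ARE solution with $\sigma(A_{11}+B_1F_{11})$ in the closed left half-plane (which exists for controllable pairs) makes both equations uniquely solvable. That even gives a full ARE solution directly, but it proves less than the stated lemma.

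The paper does not build the extension from a reduced Riccati solution. It invokes \cite[Thm.~1 and Rem.~1]{Churilov.1984}, which already allows noncontrollable $(A,B)$ but assumes $A$ has no imaginary eigenvalues at all. The paper then observes that this hypothesis is used there only for two things: unique solvability of a Sylvester equation, and solvability of a Lyapunov inequality. In both, the relevant coefficient is the uncontrollable block of the staircase form of $A-B(-\Pi_{\alpha\alpha})^{-1}\Pi_{\zeta\alpha}^\top C$. Since uncontrollable eigenvalues are invariant under feedback, Assumption~\ref{asm:AeigImagCtrl} suffices. This is the step your proposal lacks.

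Your fallback does not replace it. In the paper, \cite{Scherer.1995} is used only for the LMI/ARE equivalence under stabilizability, and \cite[Lemma~1]{Balakrishnan.2003} concerns the strict LMI. Claiming that either contains the nonstrict, noncontrollable lemma with exactly these hypotheses is an unverified citation, not a proof.

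Finally, Assumption~\ref{asm:WG} does not do what you attribute to it in Step~1. The controllable nonstrict KYP lemma needs no strict frequency point. The reduced ARE and $F_{11}$ need $\Pi_{\alpha\alpha}\prec 0$ (Assumption~\ref{asm:PIaaNeg}), which is stronger. If $\Pi_{\alpha\alpha}$ is singular, as Assumption~\ref{asm:WG} permits, your Step~2 cannot even be set up.
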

\begin{proof}
See \cite[Thm.~1 and Rem.~1]{Churilov.1984} with a slight modification\footnote{It is already suggested in \cite[Cond.~($1_b$)]{Gusev.2006} that such a generalization is possible (without proof).}: In \cite{Churilov.1984}, it is assumed that $A$ does not have purely imaginary eigenvalues at all. However, the absence of purely imaginary eigenvalues of $A$ is only used in the proof when 
the existence of a unique solution to a Sylvester equation and the existence of a solution of a Lyapunov inequality are discussed, see \cite[p.~864]{Churilov.1984}. 
In both cases, it is argued that the involved coefficient matrix must not have purely imaginary eigenvalues. This matrix is given by the uncontrollable part of the controllability staircase form of $A-B(-\Pi_{\alpha\alpha})^{-1}\Pi_{\zeta\alpha}^\top C$. 
Since uncontrollable eigenvalues of $A$ are invariant under any feedback transform $A-B K_1$ with $K_1\in \mathbb R^{m\times n}$, the weakened Asm.~\ref{asm:AeigImagCtrl} suffices to render all arguments of the proof in \cite{Churilov.1984} still applicable. 
\end{proof}

\begin{remark}[Different usage of the KYP lemma]
As a result, in Lyapunov-based approaches, the KYP lemma is used to get a frequency-domain inequality (\ref{eq:KYP}) that ensures the feasibility of a given LMI, ARE, etc. In contrast, in Plancherel-Parseval-based approaches (IQCs, input-output-operator setting), cf.\ \cite{Megretski.1997}, the KYP lemma is used in the converse direction: to get an LMI that ensures that a given frequency-domain inequality holds (which in view of (\ref{eq:LMImeaning}) $\Rightarrow$ (\ref{eq:KYPmotivation}) is the simpler direction).
\end{remark}

\begin{remark}[Positive vs.\ negative feedback definition]
Instead of $G(s)=G_{u \,\mapsto \zeta}(s)$ from (\ref{eq:G}), which in terms of $\alpha=a(t,Cx)$ from (\ref{eq:ABaC}) refers to the negative-feedback input~${u=-\alpha}$, 
the transfer function 
\begin{align}
 G_{\alpha\,\mapsto \zeta}(s):=-G_{u \,\mapsto \zeta}(s) =- G(s)
\end{align}
for the positive-feedback input $\alpha$ can alternatively be considered. Then $W_G^-(\mathrm i \omega)$ from (\ref{eq:WGiw}) becomes {(cf.\ \cite{Megretski.1997})}
\begin{align*}
\\[-1em]
\smash{\begin{bmatrix} G(\mathrm i \omega) \\ -I_m \end{bmatrix}^H 
\Pi 
\begin{bmatrix} G(\mathrm i \omega) \\ -I_m \end{bmatrix}
= 
\begin{bmatrix} G_{\alpha\,\mapsto \zeta}(\mathrm i \omega) \\ I_m \end{bmatrix}^H 
\Pi 
\begin{bmatrix} G_{\alpha\,\mapsto \zeta}(\mathrm i \omega) \\ I_m \end{bmatrix}} .
\end{align*} 
\fine\end{remark}

\section{Frequency-domain results}\label{sec:FrequencyDomainResults}
It remains to apply the frequency-domain criterion (\ref{eq:KYP}) to the concrete sector constraints from Sec.~\ref{sec:sectorDescription}. The results are bounds on the respective sector-slope parameters $\gamma$, $\rho$ or $k_1$. If the sector bound is met, a quadratic function $V=x^\top P x$ from (\ref{eq:ansatzV}) exists that satisfies the defining {inequality} (\ref{eq:DVw}) from Sec.~\ref{sec:idea}. To be more precise, such a $V$ can then constructively be found by computing $P$ from the LMI, ARE, or matrix equation from Sec.~\ref{sec:ApproachesV}. These explicit bounds on the slope parameters will recover classical results such as the small-gain theorem, passivity conditions, and the circle criterion. 

\absatz{Sector: Linear norm bound}
With the matrix $\Pi$ from (\ref{eq:wLinearNormBound}) that describes the linear norm bound, $W_G^-(\mathrm i \omega)$ from (\ref{eq:WGiw}) becomes 
\begin{align}
W_G^-(\mathrm i \omega) &= \begin{bmatrix} G(\mathrm i \omega)\\-I \end{bmatrix}^H 
\begin{bmatrix} \gamma^2 I & 0 \\ 0 & -I \end{bmatrix}
 \begin{bmatrix} G(\mathrm i \omega)\\-I\end{bmatrix}
\nonumber\\*[0.5em]
&= \gamma^2(G(\mathrm i \omega))^H G(\mathrm i \omega) - I_m. 
\end{align}
Its maximum eigenvalue, which is needed in the KYP criterion (\ref{eq:KYP}),
\begin{align}
\lambda_{\max} (W_G^-(\mathrm i \omega))
&= \gamma^2 \underbrace{\lambda_{\max}( (G(\mathrm i \omega))^H G(\mathrm i \omega))}_{ \|G(\mathrm i \omega)\|_2^2}-1
\end{align}
relies on the spectral norm $\|G(\mathrm i \omega)\|_2$. Therefore, the supremum over all frequencies 
\begin{align}\label{eq:smallGainSupLambda}
\sup_{\omega\in \mathbb R} \lambda_{\max} (W_G^-(\mathrm i \omega))
 =
\gamma^2 \underbrace{\sup_{\omega\in \mathbb R} \|G(\mathrm i \omega)\|_2^2 }_{\|G\|_{H_\infty}^2} -1 
\end{align}
relies on the $H_\infty$-norm\footnote{which exists since $G$ is strictly proper and $A$ is Hurwitz.} of $G$ (due to (\ref{eq:Kstab}), $A$ is assumed to be Hurwitz). 
To meet\footnote{Due to $\Gamma_\infty=\emptyset$ (as $A$ is Hurwitz) and $G$ being strictly proper, the inequality in terms of the supremum is equivalent to the pointwise inequality (\ref{eq:KYP}).} the KYP criterion (\ref{eq:KYP}), $\sup_{\omega\in \mathbb R} \lambda_{\max} (W_G^-(\mathrm i \omega))$ from (\ref{eq:smallGainSupLambda}) must be nonpositive, 
\begin{align} 
\gamma^2 \|G\|_{H_\infty}^2 -1 \leq 0. 
\end{align}
Consequently, the following bound on $\gamma$ is obtained, which is a well-known corollary of the KYP lemma.
\begin{theorem}[Maximum linear norm bound]
Assume $A$ is Hurwitz. If the sector slope $\gamma$ of the linear norm bound constraint (\ref{eq:wLinearNormBound}) satisfies
\begin{align} \label{eq:gammamax}
\gamma\leq \frac 1 {\|G\|_{H_\infty}},
\end{align}
where $G(s)=C(sI-A)^{-1} B$, then a real symmetric matrix $P$ exists that solves the corresponding LMI, ARE, etc.\ from Lemma~\ref{lem:solvabilityEquivalence}. 
\fine\end{theorem}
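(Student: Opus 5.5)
The plan is to verify the two hypotheses of Lemma~\ref{lem:KYPwithoutctrl} and then check the frequency-domain condition (\ref{eq:KYP}) directly, using the linear-norm-bound multiplier $\Pi$ from (\ref{eq:wLinearNormBound}). Once (\ref{eq:KYP}) is established, Lemma~\ref{lem:KYPwithoutctrl} gives solvability of the LMI, ARE, etc. Lemma~\ref{lem:solvabilityEquivalence} then transfers this to all the equivalent formulations; its stabilizability hypothesis on $(A,B)$ holds trivially because $A$ is Hurwitz.

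First, the hypotheses. Since $A$ is Hurwitz, it has no eigenvalues on the imaginary axis. Hence $\Gamma_\infty=\emptyset$ in (\ref{eq:setGammaInf}), and Assumption~\ref{asm:AeigImagCtrl} holds vacuously. For Assumption~\ref{asm:WG}, note that $\Pi_{\alpha\alpha}=-I_m\prec 0$, so Assumption~\ref{asm:PIaaNeg} holds. As remarked after Assumption~\ref{asm:WG}, $G$ is strictly proper, so $W_G^-(\mathrm i\omega)=\gamma^2G(\mathrm i\omega)^HG(\mathrm i\omega)-I_m\to -I_m$ as $\omega\to\infty$. Continuity of $\lambda_{\max}$ then yields an $\omega_0$ with $\lambda_{\max}(W_G^-(\mathrm i\omega_0))<0$.

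Second, the frequency-domain inequality. For every $\omega\in\mathbb R$ the computation preceding the statement gives $\lambda_{\max}(W_G^-(\mathrm i\omega))=\gamma^2\|G(\mathrm i\omega)\|_2^2-1$. Because $A$ is Hurwitz, $G$ has no poles on $\mathrm i\mathbb R\cup\{\infty\}$, so $\|G\|_{H_\infty}=\sup_\omega\|G(\mathrm i\omega)\|_2$ is finite. This gives
\begin{align*}
\gamma^2\|G(\mathrm i\omega)\|_2^2-1\leq \gamma^2\|G\|_{H_\infty}^2-1\leq 0
\end{align*}
by (\ref{eq:gammamax}). We work with the pointwise bound rather than the supremum, so it does not matter whether the supremum is attained.

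The main subtlety is the degenerate case $G\equiv 0$, where $\|G\|_{H_\infty}=0$ and (\ref{eq:gammamax}) should be read as $\gamma<\infty$. In that case $W_G^-(\mathrm i\omega)=-I_m$ for all $\omega$, so both assumptions and (\ref{eq:KYP}) hold for every $\gamma\geq0$. Apart from this, the argument is a direct application of Lemma~\ref{lem:KYPwithoutctrl}, so no step should be a genuine obstacle.
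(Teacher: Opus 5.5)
Your proposal is correct and follows essentially the paper's own argument: check Assumptions~\ref{asm:WG} and~\ref{asm:AeigImagCtrl} via $\Pi_{\alpha\alpha}=-I_m\prec 0$, strict properness of $G$ and $\Gamma_\infty=\emptyset$, use $\lambda_{\max}(W_G^-(\mathrm i\omega))=\gamma^2\|G(\mathrm i\omega)\|_2^2-1$, and invoke Lemma~\ref{lem:KYPwithoutctrl}. The only differences are cosmetic: you bound pointwise where the paper takes the supremum and justifies the equivalence in a footnote, and your separate treatment of $G\equiv 0$ is a harmless extra.
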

If $\gamma$ satisfies (\ref{eq:gammamax}), then the sector $[-\gamma,\gamma]$ is admissible in the sense that a Lyapunov-like function can be found that proves due to Thm.~\ref{thm:ConvergenceStatement} attractivity of the zero equilibrium if the nonlinearity resides in that sector. 
\begin{remark}[Scalar case] 
If $p=m=1$, $\|G\|_{H_\infty}=\max_\omega \vert G(\mathrm i \omega)\vert$. Thus, the sector $[-\gamma,\gamma]$ is admissible if the Nyquist plot does not exceed a zero-centered disc in the complex plane with radius $1/\gamma$, see Fig.~\ref{fig:GainDisk}. As a result, the more restricted the Nyquist plot, the more robust is the nominal system, and the larger becomes the admissible sector for perturbations. 
\fine\end{remark}

\begin{figure}
\centering

\subfloat[Sector constraint on the {perturbation} graph $(\zeta,\tilde a(\zeta))$ 
\label{fig:gainSector}
]{
\includegraphics[]{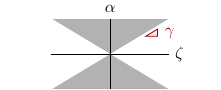}
}\hfill 
\subfloat[Corresponding constraint on the Nyquist plot of $G(\mathrm{i}\omega)$ 
\label{fig:GainDisk}
]{ 
\includegraphics[]{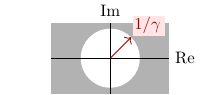}
}
\caption{Small-gain-theorem-like result 
}
\label{fig:NyquistGain}
\end{figure}

\begin{remark}[Relation to the small-gain theorem]\label{rem:smallGain}
If a quadratic offset function (\ref{eq:ellQuad}) is chosen in Thm.~\ref{thm:ConvergenceStatement}, the maximum gain of the perturbation must be strictly less than $\gamma$. Then, inequality (\ref{eq:gammamax}) means that the perturbation gain must be strictly less than the reciprocal of the $L_2$-{gain} of the system. The result thus mirrors the \textit{small-gain theorem}, cf.\ \cite[Sec.~III.2]{Desoer.2009}. It should be noted that this reciprocal relation can already be seen from comparing the LMI (\ref{eq:LMI}) for $\gamma$ with the LMI for the $L_2$-gain of the nominal system, before applying the KYP lemma, see \cite[Tab.~6.1]{Scholl.2024c}. For this LMI for the $L_2$-gain of $(A,B,C)$, the corollary of the KYP lemma, which correspondingly yields the $H_\infty$-norm of $G$ (or examines whether the gain equals $1$), is referred to as \textit{bounded-real lemma}~\cite{Boyd.1994}. \fine\end{remark} 
\begin{remark}[Relation to the complex stability radius]
The bound $1/\|G\|_{H_\infty}$ in (\ref{eq:gammamax}) coincides with the structured \textit{complex stability radius} of~$A$,~\cite{Hinrichsen.2005,Hinrichsen.1986b}. \fine\end{remark}

\absatz{Sector: Strict output passivity}
With the matrix $\Pi$ from (\ref{eq:wOutputPassivity}) that describes the strict output passivity sector, $W_G^-(\mathrm i \omega)$ from (\ref{eq:WGiw}) becomes 
\begin{align}
W_G^-(\mathrm i \omega) &= \begin{bmatrix} G(\mathrm i \omega)\\-I \end{bmatrix}^H 
\begin{bmatrix} 0 & \tfrac 1 2 I \\
\tfrac 1 2 I & -\rho I
\end{bmatrix} 
 \begin{bmatrix} G(\mathrm i \omega)\\-I\end{bmatrix}
\nonumber\\[0.5em]
&= -\tfrac 1 2 (G(\mathrm i \omega))^H -\tfrac 1 2 G(\mathrm i \omega) - \rho I . 
\end{align}
Its maximum eigenvalue
\begin{align}
\lambda_{\max} (W_G^-(\mathrm i \omega))
&= \underbrace{\lambda_{\max} \Big(\tfrac 1 2 \big((-G(\mathrm i \omega))^H -G(\mathrm i \omega)\big)\Big)}_{ \mu_2(-G(\mathrm i \omega))} - \rho \nonumber
\\[-2em]
\end{align}
relies on the logarithmic norm $\mu_2(-G(\mathrm i \omega))$. Therefore, the supremum over all frequencies (ignoring gaps from (\ref{eq:setGammaInf}))
\begin{align}\label{eq:PassivitySupLambda}
\sup_{\omega\in \mathbb R\setminus\Gamma_\infty} \lambda_{\max} (W_G^-(\mathrm i \omega))
 =
\underbrace{\sup_{\omega\in \mathbb R\setminus\Gamma_\infty} \mu_2(-G(\mathrm i \omega))}_{\stackrel{\mathrm{def}}=-\nu(G)} -\rho 
\end{align}
relies on the (thus formally defined) \textit{input passivity index}\footnote{cf.\ \texttt{nu=getPassiveIndex(sys,'input')} in Matlab} $\nu(G)$, \cite[eq.~(7)]{Zakeri.2022}. 
To meet 
the KYP criterion (\ref{eq:KYP}), $\sup_{\omega} \lambda_{\max} (W_G^-(\mathrm i \omega))$ from (\ref{eq:PassivitySupLambda}) 
must be nonpositive, i.e., 
\begin{align}\label{eq:passiveSupLambda}
-\nu(G) -\rho\leq 0, 
\end{align}
which yields the bound on $\rho$ in (\ref{eq:rhomin}) below. 
\begin{theorem}[Maximum upper bound]\label{thm:rhomin}
Let Assumption~\ref{asm:AeigImagCtrl} hold. If the upper sector slope $1/\rho$ of the strict output passivity constraint (\ref{eq:wOutputPassivity}) satisfies
\begin{align} \label{eq:rhomin}
\rho\geq -\nu(G),
\end{align}
where $-\nu(G)$ is defined in (\ref{eq:PassivitySupLambda}) and $G(s)=C(sI-A)^{-1} B$, then a real symmetric matrix $P$ exists that solves the corresponding LMI, ARE, etc.\ from Lemma~\ref{lem:solvabilityEquivalence}. 
\fine\end{theorem}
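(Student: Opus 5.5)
The plan is to apply Lemma~\ref{lem:KYPwithoutctrl} with $\Pi$ from (\ref{eq:wOutputPassivity}), i.e., $\Pi_{\zeta\zeta}=0$, $\Pi_{\zeta\alpha}=\tfrac12 I$, $\Pi_{\alpha\alpha}=-\rho I$. I will check its two hypotheses and then verify (\ref{eq:KYP}) directly from (\ref{eq:rhomin}).

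Assumption~\ref{asm:AeigImagCtrl} is assumed in the statement. For Assumption~\ref{asm:WG}, the sector requires $\rho>0$, so $\Pi_{\alpha\alpha}=-\rho I\prec 0$; this is Assumption~\ref{asm:PIaaNeg}. Since $G$ is strictly proper, $W_G^-(\mathrm i\omega)\to\Pi_{\alpha\alpha}$ as $|\omega|\to\infty$. By continuity of eigenvalues, $\lambda_{\max}(W_G^-(\mathrm i\omega_0))<0$ then holds for some sufficiently large $\omega_0\notin\Gamma_\infty$. This works because $\Gamma_\infty$ is finite, so large $\omega_0$ can always be chosen outside it.

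For the frequency-domain inequality, the computation preceding the statement gives, for each $\omega\in\mathbb R\setminus\Gamma_\infty$,
\begin{align*}
W_G^-(\mathrm i\omega)=\mathrm{He}\big(-G(\mathrm i\omega)\big)-\rho I_m,
\end{align*}
and hence $\lambda_{\max}(W_G^-(\mathrm i\omega))=\mu_2(-G(\mathrm i\omega))-\rho$. Taking the supremum gives (\ref{eq:PassivitySupLambda}), and each pointwise value is bounded by
\begin{align*}
\lambda_{\max}(W_G^-(\mathrm i\omega))\leq \sup_{\omega'\in\mathbb R\setminus\Gamma_\infty}\mu_2(-G(\mathrm i\omega'))-\rho=-\nu(G)-\rho\leq 0,
\end{align*}
where the last step uses (\ref{eq:rhomin}). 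This is exactly (\ref{eq:KYP}). Lemma~\ref{lem:KYPwithoutctrl} then yields solvability of the LMI, ARE, and matrix equation in the sense of Lemma~\ref{lem:solvabilityEquivalence}.

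The main subtlety is the case $\nu(G)=-\infty$, i.e., the supremum is infinite, for instance near poles on the imaginary axis. In that case no finite $\rho$ satisfies (\ref{eq:rhomin}), so the statement is vacuous and needs no separate argument. A second point is that, unlike the small-gain case, the pointwise inequality is required only off $\Gamma_\infty$. This is precisely the form in which Lemma~\ref{lem:KYPwithoutctrl} is stated, so no equivalence between the supremum condition and the pointwise condition is needed; the pointwise bound follows directly from the supremum bound. Finally, to invoke Lemma~\ref{lem:solvabilityEquivalence} I need stabilizability of $(A,B)$. I would note that it is available in the paper's setting through Assumption~\ref{asm:stabilizing}, or else regard the conclusion as referring to the LMI-type problems whose solvability is asserted by the KYP lemma itself.
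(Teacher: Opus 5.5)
Your proposal is correct and follows essentially the same route as the paper. You compute $W_G^-(\mathrm i\omega)=\mathrm{He}(-G(\mathrm i\omega))-\rho I_m$, bound $\lambda_{\max}$ on $\mathbb R\setminus\Gamma_\infty$ by $-\nu(G)-\rho\leq 0$, and then invoke Lemma~\ref{lem:KYPwithoutctrl}. Your explicit check of Assumption~\ref{asm:WG} is a useful detail that the paper only notes in passing after stating that assumption: it uses $\Pi_{\alpha\alpha}=-\rho I\prec 0$, the strict properness of $G$, and the finiteness of $\Gamma_\infty$.
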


\begin{figure}
\centering

\subfloat[Sector constraint on the {perturbation} graph $(\zeta,\tilde a(\zeta))$ 
\label{fig:rhoSector}
]{
\includegraphics[]{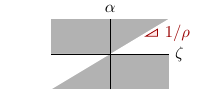}
}\hfill 
\subfloat[Corresponding constraint on the Nyquist plot of $G(\mathrm{i}\omega)$ 
\label{fig:rho}
]{ 
\includegraphics[]{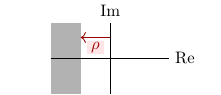}
}
\caption{Passivity-theorem-like result 
}
\label{fig:NyquistRho}
\end{figure}

Observe that $\lim_{\omega\to \infty} G(\mathrm i \omega)=0$ due to the strict properness of $G$ (no direct feedthrough in $(A,B,C,0)$). Thus, $\nu(G)\leq 0$, i.e., the right-hand side of (\ref{eq:rhomin}) is nonnegative.

\begin{remark}[Scalar case]
In the scalar case, $\mu_2(z)=\mathrm{Re}(z)$ for any $z\in \mathbb C$. Thus, if $p=m=1$, the input passivity index becomes $\nu(G)=\min_\omega \mathrm{Re}(G(\mathrm{i}\omega))$. A shortage of input passivity $\nu(G)<0$ of the nominal system $(A,B,C)$ means that the Nyquist plot extends into the left half plane. See Fig.~\ref{fig:rho}. Again, the more restricted the Nyquist plot, the more robust is the system and the larger becomes the admissible sector of perturbations. 
\end{remark}

\begin{remark}[Relation to passivity theorems]
Inequality (\ref{eq:rhomin}) means that the excess $\rho$ of output passivity of the nonlinearity must at least be the shortage $-\nu(G)$ of input passivity of the nominal system. The result thus mirrors a strict \textit{passivity theorem}. Similarly to Remark~\ref{rem:smallGain}, this relation can already be recognized from comparing the respective LMIs, see \cite[Tab.~6.1]{Scholl.2024c}. The corollary of the KYP lemma for the passivity of the nominal system (commonly for a vanishing $\nu(G)=0$) is referred to as \textit{positive-real lemma}~\cite{Boyd.1994}. 
\fine\end{remark}
\absatz{Sector: General sector constraints} 
For the general sector (\ref{eq:wCircle}), we focus on the case $p=m$ with $K_1=k_1 I_m$, and $K_2=k_2 I_m$, where {$k_1\leq k_2\in \mathbb R$}. We assume that $k_2$ is fixed and the smallest possible lower bound $k_1$ is of interest. 
 By (\ref{eq:wCircle}), the corresponding sector description relies on 
\begin{align*}
\Pi_{\zeta\zeta }=-k_1k_2 I_m , \quad \Pi_{\zeta\alpha}=\frac{k_1+k_2}{2}I_m, \quad \Pi_{\alpha\alpha }=-I_m. 
\end{align*} 

We seek a numerically tractable result for the lower sector bound $k_1$, similarly to Thm.~\ref{thm:rhomin}. To this end, we first transform the problem. Instead of the sector $[k_1,k_2]$ for $a(t,\zeta)$, we consider the anti-passive sector 
\begin{align}
[-(k_2-k_1),0]
\end{align}
 for the part $a^{\mathrm{II}}(t,\zeta)$ in
\begin{align*}
a(t,\zeta)=k_2 \zeta + a^{\mathrm{II}}(t,\zeta).
\end{align*}
 Correspondingly, the overall system (\ref{eq:ABaC}) becomes 
\begin{align}\label{eq:AII}
\dot x = \underbrace{(A-k_2BC)}_{=:A^{\mathrm{II}}} x - B a^{\mathrm{II}}(t,Cx), 
\end{align} 
and the sector $[-(k_2-k_1),0]$ for $a^{\mathrm{II}}(t,Cx)$ amounts to 
\begin{align*}
\Pi_{\zeta\zeta}^{\mathrm{II}}
=0_{m\times m},
\quad
 \Pi_{\zeta \alpha}^{\mathrm{II}}
=-\frac {k_2-k_1} 2 I_m , 
\quad
\Pi_{\alpha\alpha}^{\mathrm{II}}
=-I_m . 
\end{align*}
We confirm for the ARE formulation that the defining equation for the matrix $P$ for the transformed problem and for the original one are equivalent. 
\begin{lemma}[ARE equivalence]\label{lem:IIequivalence}
Replacing $A$ by $A^{\mathrm{II}}$ and $(\Pi_{\zeta\zeta},\Pi_{\zeta\alpha},\Pi_{\alpha\alpha})$ by $(\Pi_{\zeta\zeta}^{\mathrm{II}},\Pi_{\zeta\alpha}^{\mathrm{II}},\Pi_{\alpha\alpha}^{\mathrm{II}})$ does not alter the ARE (\ref{eq:ARE}). 
\fine\end{lemma}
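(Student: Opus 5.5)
The plan is a direct algebraic comparison: write out both versions of the ARE (\ref{eq:ARE}) explicitly and check that they coincide term by term. First, I will rewrite (\ref{eq:ARE}) in the form used in (\ref{eq:ARI}), i.e.\ as $M_{11}(P)+M_{12}(P)(-\Pi_{\alpha\alpha})^{-1}M_{12}^\top(P)=0$. Here $M_{11}(P)=PA+A^\top P+C^\top\Pi_{\zeta\zeta}C$ and $M_{12}(P)=PB-C^\top\Pi_{\zeta\alpha}$. This avoids the square root $(-\Pi_{\alpha\alpha})^{-1/2}$, because $\tilde B(P)\tilde B^\top(P)=M_{12}(P)(-\Pi_{\alpha\alpha})^{-1}M_{12}^\top(P)$. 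In both the original and the transformed problem we have $\Pi_{\alpha\alpha}=\Pi_{\alpha\alpha}^{\mathrm{II}}=-I_m$, so Assumption~\ref{asm:PIaaNeg} holds and the inverse factor is simply the identity.

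Second, I will expand the original ARE with $\Pi_{\zeta\zeta}=-k_1k_2I$ and $\Pi_{\zeta\alpha}=\tfrac{k_1+k_2}{2}I$. This gives
\begin{align*}
PA+A^\top P+PBB^\top P-\tfrac{k_1+k_2}{2}\big(PBC+C^\top B^\top P\big)+\Big(\tfrac{(k_1+k_2)^2}{4}-k_1k_2\Big)C^\top C=0.
\end{align*}
Third, I will expand the transformed ARE with $A^{\mathrm{II}}=A-k_2BC$, $\Pi^{\mathrm{II}}_{\zeta\zeta}=0$ and $\Pi^{\mathrm{II}}_{\zeta\alpha}=-\tfrac{k_2-k_1}{2}I$. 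This gives
\begin{align*}
PA+A^\top P-k_2\big(PBC+C^\top B^\top P\big)+PBB^\top P+\tfrac{k_2-k_1}{2}\big(PBC+C^\top B^\top P\big)+\tfrac{(k_2-k_1)^2}{4}C^\top C=0.
\end{align*}

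Finally, I will compare the two expansions coefficient by coefficient. The terms $PA+A^\top P$ and $PBB^\top P$ appear identically in both. For the cross terms, the coefficient in the transformed version is $-k_2+\tfrac{k_2-k_1}{2}=-\tfrac{k_1+k_2}{2}$, which matches the original. For the $C^\top C$ term, we have $\tfrac{(k_1+k_2)^2}{4}-k_1k_2=\tfrac{(k_2-k_1)^2}{4}$, which also matches. Hence the two equations are identical as matrix equations in $P$, and any $P$ solving one solves the other.

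The only step needing care is the bookkeeping of signs. Specifically, the sign convention $M_{12}=PB-C^\top\Pi_{\zeta\alpha}$ and the symmetric placement of the cross terms $PBC+C^\top B^\top P$ must be handled consistently. Beyond that, nothing is a genuine obstacle.
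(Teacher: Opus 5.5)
Your proposal is correct and follows essentially the same route as the paper, which also writes out both AREs, with $\Pi_{\alpha\alpha}=\Pi_{\alpha\alpha}^{\mathrm{II}}=-I_m$, and observes that they coincide after expansion. Your coefficient checks $-k_2+\tfrac{k_2-k_1}{2}=-\tfrac{k_1+k_2}{2}$ and $\tfrac{(k_1+k_2)^2}{4}-k_1k_2=\tfrac{(k_2-k_1)^2}{4}$ are exactly the details the paper leaves to ``expanding both expressions.''
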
 
\begin{proof}
In the former case, the ARE (\ref{eq:ARE}) 
\begin{align*} 
 &PA+A^\top P = - C^\top \Pi_{\zeta\zeta} C - \tilde B(P)\, \tilde B^\top\!(P) 
\end{align*}
with $\tilde B(P)=(PB - C^\top \Pi_{\zeta\alpha})(-\Pi_{\alpha\alpha})^{-1/2}$ becomes 
\begin{align*} 
 PA+A^\top P&= k_1k_2 C^\top C \\*
&\;\;\; -\big(PB -\tfrac {k_1+k_2}{2} C^\top\big)\, \big(PB -\tfrac {k_1+k_2}{2} C^\top\big)^\top. 
\end{align*}
In the latter case, 
\begin{align*} 
 PA+A^\top P &- k_2 P BC - k_2 C^\top B^\top P \\*
& = -(PB + \tfrac{k_2-k_1}{2}C^\top)\,(PB + \tfrac{k_2-k_1}{2}C^\top)^\top. 
\end{align*}
Expanding both expressions shows that they coincide.
\end{proof}

\begin{theorem}[Minimum lower sector bound]
\label{thm:k1min}
Assume \smash{$(A^{\mathrm{II}},B)$} from (\ref{eq:AII}) has no uncontrollable modes on the imaginary {axis} (cf.~Assumption~\ref{asm:AeigImagCtrl}). Consider (\ref{eq:wCircle}) with $K_1=k_1 I_m$ and $K_2=k_2 I_m$, {$k_1\leq k_2\in \mathbb R$}. 
{Let $\nu(-G^{\mathrm{II}})=-\sup_\omega\mu_2(G^{\mathrm{II}}(\mathrm i \omega))$, with $G^{\mathrm{II}}(s)=C(sI-A^{\mathrm{II}})^{-1} B$ relying on $A^{\mathrm{II}}$ from (\ref{eq:AII}).}
{Assuming $\nu(-G^{\mathrm{II}})<0$,} if {the lower sector slope \smash{$k_1$} satisfies}
\begin{align} \label{eq:k1bound}
k_1\geq k_2+\frac 1 {\nu(-G^{\mathrm{II}})},
\end{align}
 then a real symmetric matrix $P$ exists that solves the corresponding LMI, ARE, etc.\ from Lemma~\ref{lem:solvabilityEquivalence}. 
\fine\end{theorem}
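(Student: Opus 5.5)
The plan is to exploit the loop transformation introduced just before the statement and reduce the claim to the strict-output-passivity-type computation from Thm.~\ref{thm:rhomin}, applied to the transformed data. Concretely, I would work with the transformed system (\ref{eq:AII}) and the anti-passive sector description $(\Pi_{\zeta\zeta}^{\mathrm{II}},\Pi_{\zeta\alpha}^{\mathrm{II}},\Pi_{\alpha\alpha}^{\mathrm{II}})=(0,-\tfrac{k_2-k_1}{2}I_m,-I_m)$. I would apply the KYP lemma with weak assumptions (Lemma~\ref{lem:KYPwithoutctrl}) to the triple $(A^{\mathrm{II}},B,C)$ with transfer function $G^{\mathrm{II}}$. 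By Lemma~\ref{lem:IIequivalence}, any solution $P$ of the transformed ARE is then a solution of the original ARE (\ref{eq:ARE}). Propositions~\ref{prop:LMIequivalences} and \ref{prop:AREequivalences} then transfer this solvability to the LMI and the matrix equation for the original problem.

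\textbf{Verifying the hypotheses of Lemma~\ref{lem:KYPwithoutctrl}.} Assumption~\ref{asm:AeigImagCtrl} for $(A^{\mathrm{II}},B)$ is assumed in the statement. Assumption~\ref{asm:WG} holds automatically. Indeed, $\Pi_{\alpha\alpha}^{\mathrm{II}}=-I_m\prec 0$ and $G^{\mathrm{II}}$ is strictly proper, so $W_G^-(\mathrm i\omega)\to -I_m$ as $\omega\to\infty$, as remarked after Assumption~\ref{asm:WG}. I would also note that stabilizability, needed for Lemma~\ref{lem:solvabilityEquivalence}, is invariant under the feedback $A\mapsto A-k_2BC$. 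Hence the solvability statements for $(A^{\mathrm{II}},B)$ and for $(A,B)$ refer to the same kind of object, and in particular the transformed ARE is indeed solvable once the frequency-domain inequality holds.

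\textbf{Evaluating the frequency-domain condition.} Inserting $\Pi^{\mathrm{II}}$ into (\ref{eq:WGiw}), the $\zeta\zeta$-block vanishes and the cross terms give
\begin{align*}
W_G^-(\mathrm i\omega)=\tfrac{k_2-k_1}{2}\big(G^{\mathrm{II}}(\mathrm i\omega)+(G^{\mathrm{II}}(\mathrm i\omega))^H\big)-I_m .
\end{align*}
Since $k_2-k_1\ge 0$, this yields $\lambda_{\max}(W_G^-(\mathrm i\omega))=(k_2-k_1)\,\mu_2(G^{\mathrm{II}}(\mathrm i\omega))-1$. Taking the supremum over $\omega\in\mathbb R\setminus\Gamma_\infty$, with $\Gamma_\infty$ now formed from $A^{\mathrm{II}}$, the KYP criterion (\ref{eq:KYP}) becomes
\begin{align*}
(k_2-k_1)\big(-\nu(-G^{\mathrm{II}})\big)\le 1 .
\end{align*}
Because $\nu(-G^{\mathrm{II}})<0$, dividing gives $k_2-k_1\le -1/\nu(-G^{\mathrm{II}})$, which is exactly (\ref{eq:k1bound}). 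The degenerate case $k_1=k_2$ satisfies the criterion trivially.

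\textbf{Expected obstacle.} The main point requiring care is the passage from the pointwise condition (\ref{eq:KYP}) to the supremum form. This needs $\mu_2(G^{\mathrm{II}}(\mathrm i\omega))$ to be handled only on $\mathbb R\setminus\Gamma_\infty$, with the supremum in the definition of $\nu(-G^{\mathrm{II}})$ understood over the same set, as in (\ref{eq:PassivitySupLambda}). A second point is the sign bookkeeping that makes the nonnegative factor $k_2-k_1$ commute with $\lambda_{\max}$ and the supremum. Once these are settled, the rest is the routine chaining of Lemma~\ref{lem:KYPwithoutctrl}, Lemma~\ref{lem:IIequivalence} and the solvability equivalences of Sec.~\ref{sec:solvabilityEquivalence}.
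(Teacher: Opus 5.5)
Your proposal is correct and follows essentially the same route as the paper: it passes to the loop-transformed data $(A^{\mathrm{II}},\Pi^{\mathrm{II}})$ via Lemma~\ref{lem:IIequivalence}, computes $W_{G^{\mathrm{II}}}^-(\mathrm i\omega)=\tfrac{k_2-k_1}{2}\big(G^{\mathrm{II}}(\mathrm i\omega)+(G^{\mathrm{II}}(\mathrm i\omega))^H\big)-I_m$, and obtains the bound on $k_2-k_1$ from the KYP criterion (\ref{eq:KYP}) through the supremum of $\mu_2(G^{\mathrm{II}}(\mathrm i\omega))$ over $\mathbb R\setminus\Gamma_\infty$. Your extra checks are correct details that the paper leaves implicit: Assumption~\ref{asm:WG} follows from $\Pi_{\alpha\alpha}^{\mathrm{II}}=-I_m$, stabilizability is invariant under the feedback $A\mapsto A-k_2BC$, and the degenerate case $k_1=k_2$ is trivial.
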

\begin{proof} 
With 
$(\Pi_{\zeta\zeta}^{\mathrm{II}},\Pi_{\zeta \alpha}^{\mathrm{II}},\Pi_{\alpha\alpha}^{\mathrm{II}})$ used in Lemma~\ref{lem:IIequivalence}, $W_{G^{\mathrm{II}}}^-(\mathrm i \omega)$ from (\ref{eq:WGiw}) becomes 
 \begin{align*}
&W_{G^{\mathrm{II}}}^-(\mathrm i \omega)=\tfrac 1 2 (k_2-k_1)\big ((G^{\mathrm{II}}(\mathrm i\omega))^H+ G^{\mathrm{II}}(\mathrm i\omega)\big)- I_m. 
\end{align*}
According to the KYP lemma, its maximum eigenvalue 
\begin{align*}
\lambda_{\max}(W_{G^{\mathrm{II}}}^-(\mathrm i \omega))
= 
 (k_2-k_1)\lambda_{\max}(\mathrm{He}(G^{\mathrm{II}}(\mathrm i \omega)))-1 \leq 0
\end{align*}
must be nonpositive for all $\omega$ (ignoring gaps (\ref{eq:setGammaInf})), i.e.,
\begin{align*}
k_2-k_1 \leq \frac 1 {\sup_{\omega\in \mathbb R\setminus \Gamma_\infty }\lambda_{\max}(\mathrm{He}(G^{\mathrm{II}}(\mathrm i \omega)))}. 
\\[-3.5em] \nonumber
\end{align*}
\end{proof}

\begin{figure}
\centering
\subfloat[Sector, $k_1<0<k_2$
\label{fig:CircleSector}
]{
\includegraphics[]{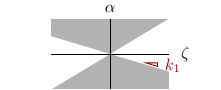}
}\hfill 
\subfloat[Nyquist, $k_1<0<k_2$
\label{fig:circleI}
]{
\includegraphics[]{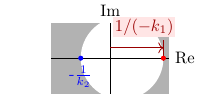}
}
\\[1em]
\hrule

\subfloat[Sector, $0<k_1<k_2$
\label{fig:CircleIISector}
]{
\includegraphics[]{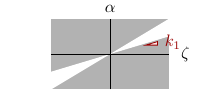}
}\hfill
\subfloat[Nyquist, $0<k_1<k_2$
\label{fig:circleII}
]{
\includegraphics[]{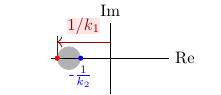}
}
\caption{Circle criterion 
}
\label{fig:CircleCriterion}
\end{figure}
\absatz{Graphical interpretation (Circle criterion)}
The criterion becomes, in the scalar case $p=m=1$, the standard circle criterion, see, e.g., \cite{Khalil.2002}. The latter can be recognized as follows. 
In terms of the real and imaginary part of $G(\mathrm i \omega)=\mathrm{Re}(G(\mathrm i \omega))+\mathrm i \, \mathrm{Im}(G(\mathrm i \omega))$, which are shown in the Nyquist plot, $W_G^-$ from (\ref{eq:WGiw}) is 
\begin{align*}
W_G^-(\mathrm i \omega)&=
\left[\begin{smallmatrix}
\mathrm{Re}(G(\mathrm i\omega)) \\ 
-1 \\
\mathrm{Im}(G(\mathrm i\omega)) 
\end{smallmatrix}\right]^{\!\top \!}\!
\left[\begin{smallmatrix} 
\Pi_{\zeta\zeta} 
& \Pi_{\zeta \alpha} 
&0\\
\Pi_{\zeta \alpha}^\top 
& \Pi_{\alpha\alpha} 
&0\\
0
&0
&\Pi_{\zeta\zeta} 
\end{smallmatrix} \right]
\left[\begin{smallmatrix}
\mathrm{Re}(G(\mathrm i\omega)) \\ 
-1 \\
\mathrm{Im}(G(\mathrm i\omega)) 
\end{smallmatrix}\right]. 
\end{align*} 
Let $p=m=1$, $k_1\neq 0$, and $k_2\neq 0$. Then the sector description (\ref{eq:condCircle}) can be rewritten as
\begin{align}
\tilde w(\zeta,\alpha)&:=\tfrac 1 {\pm k_1k_2} w(\zeta,\alpha)\nonumber \\
& =\pm (-\zeta+\tfrac 1 {k_1} \alpha)(\zeta-\tfrac 1 {k_2} \alpha) \geq 0, \label{eq:tildewCircleCrit}
\end{align}
with the sign $\pm$ such that 
\begin{align}\label{eq:CircleSign}
 \pm k_1k_2 > 0. 
\end{align}
 The coefficients in (\ref{eq:tildewCircleCrit}) (satisfying Assumption~\ref{asm:PIaaNeg}) are 
\begin{align}\label{eq:PiCircle}
\Pi_{\zeta\zeta}=\mp 1, \;\; \Pi_{\zeta\alpha}=\pm \frac{ \tfrac 1 {k_1}+ \tfrac{1}{k_2}} 2, \;\; \Pi_{\alpha\alpha}=
\mp \frac 1{k_1 k_2}. 
\end{align}
The resulting KYP inequality $W_G^-(\mathrm i \omega)\leq 0$ from (\ref{eq:KYP}) becomes in terms of $x=\mathrm{Re}(G(\mathrm{i}\omega))$ and $y=\mathrm{Im}(G(\mathrm{i}\omega))$ 
\begin{align} \label{eq:CircleCrit}
\left[\begin{smallmatrix}
x \\[0.25em]
-1 \\[0.25em]
y
\end{smallmatrix}\right]^{\!\top \!}\!
\left[\begin{smallmatrix} 
\Pi_{\zeta\zeta} 
& \Pi_{\zeta \alpha} 
&0\\
\Pi_{\zeta \alpha}^\top 
& \Pi_{\alpha\alpha} 
&0\\
0
&0
&\Pi_{\zeta\zeta} 
\end{smallmatrix} \right]
\left[\begin{smallmatrix}
x \\[0.25em] 
-1 \\[0.25em]
y
\end{smallmatrix}\right]
\leq 0. 
\end{align}
To see the circle criterion, compare the latter with the description of a disc in an $(x,y)\in \mathbb R^2$ plane. A closed disc with radius $r>0$ and shift $x_{\delta}^{}\in \mathbb R$ 
 can be described by $U(x,y)\leq 0$, 
\begin{align}
U(x,y)&= (x-x_\delta^{})^2+y^2-r^2 \nonumber
\\&= 
\left[\begin{smallmatrix} 
x\\-1\\y 
\end{smallmatrix}\right]^{\!\top} \!
\left[\begin{smallmatrix} 
1 & x_\delta^{} & 0 \\
x_\delta^{} & x_\delta^2-r^2 & 0 \\
0 & 0 & 1 
\end{smallmatrix}\right]
\left[\begin{smallmatrix} 
x\\-1\\y 
\end{smallmatrix}\right]. \label{eq:disc} 
\end{align} 

If $k_1k_2<0$, the lower right entry of (\ref{eq:CircleCrit}) is ${\Pi_{\zeta\zeta}=1}$ due to the sign convention from (\ref{eq:CircleSign}) in (\ref{eq:PiCircle}). Thus, if $k_1<0<k_2$, the KYP requirement (\ref{eq:CircleCrit}) means that the Nyquist plot must reside in the disc shown in Fig.~\ref{fig:circleI}. 

If $k_1k_2>0$, the lower right entry of (\ref{eq:CircleCrit}) is $\Pi_{\zeta\zeta}=-1$. Correspondingly, $-U(x,y)\leq 0$ from (\ref{eq:disc}) describes the closed complement of a disc.
Thus, if $0<k_1<k_2$, the KYP requirement (\ref{eq:CircleCrit}) means that the Nyquist plot must reside in the complement of a disc, as shown in Fig.~\ref{fig:circleII}. 

\absatz{Discussion} In contrast to the classical circle criterion, Thm.~\ref{thm:k1min} does not require a Nyquist plot. Even in the nonscalar case, only the maximum of the scalar real-valued function $\omega\mapsto \mu_2(G^{\mathrm{II}}(\mathrm i \omega))$ has to be determined (graphically/numerically, from implementations for the input passivity index $\nu(-G^{\mathrm{II}})$, or even analytically). If the LMI (\ref{eq:LMI}) is used, the bound on $k_1$ can be determined directly within the semidefinite programming {framework}.

\section{Example}\label{sec:Example}

To illustrate the procedure, we apply Thm.~\ref{thm:ConvergenceStatement} from Sec.~\ref{sec:attractivityConclusion} to a small example. If the sector constraint is not met globally, the theorem also provides a simple estimate of the domain of attraction. In the literature, related estimations are, e.g., proposed in \cite[p.~274]{Khalil.2002} for the circle criterion and \cite[p.~705]{Hinrichsen.2005} for the linear norm bound case. We construct a particularly well-behaved example. Nevertheless, also in more general cases, an improvement compared to the basic Lyapunov-equation-based estimate from \cite[p.~214]{Vidyasagar.2002}, \cite[p.~318]{Khalil.2002} (shown in Fig.~\ref{fig:Lyap}) can be expected, although the approach is similarly simple.

\begin{example}\label{ex:finalexample}
Fig.~\ref{fig:phaseportrait} shows the phase portrait of 
\begin{subequations}\label{eq:exampleg}
\begin{align}
\dot x_1&=x_2 \\*
\dot x_2&= x_1-2\mathrm{tanh}(x_1+x_2). 
\end{align}
\end{subequations}
The system has equilibria at $x_{e_1}=(0,0)$ and ${x_{e_{2,3}}=(\pm \beta,0)}$, where $\beta$ denotes the positive root of 
\begin{align}\label{eq:beta}
\beta-2\tanh(\beta)=0, \qquad \beta \approx 1.915. 
\end{align} 
A linearization of (\ref{eq:exampleg}) confirms that the zero equilibrium is locally asymptotically stable.

\begin{figure}
\footnotesize
\centering
 \subfloat[System (\ref{eq:exampleg}). Green trajectories~~~\newline belong to the domain of attraction. \label{fig:phaseportrait} 
]{
\includegraphics[]{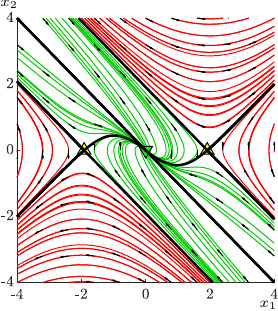}

}
 \subfloat[Basic approach for comparison \label{fig:Lyap}
]{
\includegraphics[]{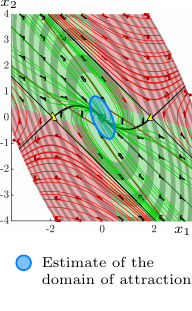}

}
\\
\subfloat[Sector \label{fig:exampleperturbation}
]{
\includegraphics[]{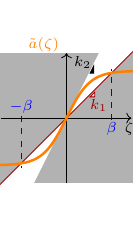}

}
 \subfloat[$V(x)$ \label{fig:ARE3d} 
]{
\includegraphics[]{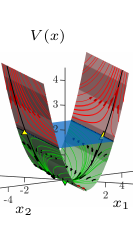}

}
 \subfloat[Result \label{fig:AREresult} 
]{
\includegraphics[]{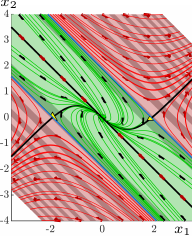}

}

\caption{ 
Example~\ref{ex:finalexample} is constructed such that {a} semidefinite Lyapunov-like function {can recover} the exact domain of attraction.
} \label{fig:example}
\end{figure}

As perturbation structure (\ref{eq:ABaC}) from Sec.~\ref{sec:problem} 
we choose
\begin{align}
\dot x= \underbrace{\begin{bmatrix} 0 & 1 \\ 1 & 0 \end{bmatrix}}_{A} x
-\underbrace{\begin{bmatrix} 0 \\ 1 \end{bmatrix}}_B
\tilde a\Big(\underbrace{\begin{bmatrix} 1 & 1 \end{bmatrix}}_{C} x\Big), \quad \tilde a (\zeta)= 2 \tanh(\zeta). 
\nonumber \\[-2em] \label{eq:exampleStructure}
\end{align} 
As sector constraint on $\tilde a$ from Sec.~\ref{sec:sectorDescription}, we choose $[k_1,k_2]$ with fixed upper sector bound $k_2>0$ and lower sector bound $k_1<k_2$ to be optimized, see Fig.~\ref{fig:exampleperturbation}. 
{Due to the concept of a $\mathcal K$-reduced sector from Sec.~\ref{sec:idea}}, we may choose $k_2=2$ as upper bound despite its coincidence with the tangent of $\tilde a$ at zero, cf.~(\ref{eq:ell}). Assumption~\ref{asm:stabilizing} is met as $K_{\mathrm{stab}}=2-\varepsilon$ with $0<\varepsilon\ll 1$ can be chosen.

From Thm.~\ref{thm:k1min}, we know that we will find a Lyapunov-like function $V(x)=x^\top P x$ whenever $k_1\geq k_1^{\mathrm{min}}$ from (\ref{eq:k1bound}). The latter relies on 
\begin{align*}
G^{\mathrm{II}}(s)=1/(s+1). 
\end{align*} 
Thus, $\nu(-G^{\mathrm{II}})=-\max_\omega\mathrm{Re}(G^{\mathrm{II}}(\mathrm i \omega))=-1$ yields 
\begin{align}
k_2=2 \quad \Rightarrow \quad k_1^{\mathrm{min}}\stackrel{(\ref{eq:k1bound}) }=k_2+1/\nu(-G^{\mathrm{II}})=1. 
\end{align} 
Let $k_1=k_1^{\mathrm{min}}$. 
{As shown in Fig.~\ref{fig:exampleperturbation}, the core perturbation function $\tilde a(\zeta)$ from (\ref{eq:exampleStructure}) is for any 
\begin{align*}
\zeta\in[-\beta,\beta]
\end{align*}
 in the sector $[k_1,k_2]$, relying on $\beta$ defined in (\ref{eq:beta})}. {Considering only $\zeta \in [-\tilde \beta,\tilde \beta]$ on a subinterval with $0<\tilde \beta<\beta$,} $\tilde a(\zeta)$ is even in the {$\mathcal K$-reduced} sector ({on a bounded domain, the graph of $\tilde a$ being in the interior of the sector already guarantees that some {class-$\mathcal K$} function $\ell(\zeta)=\kappa(\vert \zeta\vert)$ exists in (\ref{eq:ell}), cf.~Fig.~\ref{fig:boundary} and Rem.~\ref{rem:strictsector}}). Thus, {the set (\ref{eq:OmegaEll})} becomes 
\begin{align} \label{eq:Omegakappaexample}
\Omega_{\kappa(\vert\,\cdot\,\vert)}=\{ x\in \mathbb R^2: Cx \in[-\tilde \beta,\tilde \beta]\}, \quad 0<\tilde\beta<\beta.
\end{align}
{Due to the inequality chain (\ref{eq:DVnonposL}), this set already describes where the subsequently constructed Lyapunov-like function $V$ will have a partially negative definite derivative.}

According to Sec.~\ref{sec:ApproachesV}, we can use an LMI, ARE, or matrix equation to determine the (nonunique) matrix~$P$ from the quadratic ansatz (\ref{eq:ansatzV}). In this example, we choose the ARE (\ref{eq:ARE}),
\begin{align*}
&A^\top P + PA - (PB + S) R^{-1} (B^\top P + S^\top) + Q = 0 , \\
&
\begin{aligned}[t]
\Pi_{\zeta\zeta}&\stackrel{(\ref{eq:wCircle})}= -k_1k_2=-2 &\Rightarrow\; 
 Q&=C^\top \Pi_{\zeta\zeta}C=-2 \cdot 1_{2\times 2}, 
\\
\Pi_{\zeta\alpha}&\stackrel{(\ref{eq:wCircle})}=\tfrac{k_1+k_2} 2=\tfrac 3 2 &\Rightarrow\; 
S&=-C^\top \Pi_{\zeta\alpha}=-\tfrac 3 2 \cdot 1_{2\times 1}, 
\\ \Pi_{\alpha\alpha}&\stackrel{(\ref{eq:wCircle})}=-1 &\Rightarrow\; 
R&= \Pi_{\alpha\alpha}=-1. 
\end{aligned}
\end{align*}
The latter can be solved analytically from the Hamiltonian (cf. \cite[Rem.~6.2.14]{Scholl.2024c} with $k_1=k_1^{\min}+\varepsilon, \varepsilon \to 0^+$). We obtain $P=\tfrac 1 2 1_{2\times 2}$. {The resulting Lyapunov-like function
\begin{align*}
V(x)=x^\top P x= \tfrac 1 2 (x_1+x_2)^2 
\end{align*} 
is shown in Fig.~\ref{fig:ARE3d}.} 
{Notably, the set $\Omega_{\kappa(\vert\,\cdot\,\vert)}$ } from (\ref{eq:Omegakappaexample}), where any $\tilde \beta< \beta$ is admissible, 
already coincides with a sublevel set {$L_{\leq\vartheta}$} of $V$ with level $\vartheta=\tfrac 1 2 \tilde \beta^2$ since $V(x)=\tfrac 1 2 (Cx)^2$. {As a result of} Thm.~\ref{thm:ConvergenceStatement}~\ref{item:locallyStab}, any {strict sublevel set $L_{<\vartheta}$} with $\vartheta\leq \tfrac 1 2 \tilde \beta^2< \tfrac 1 2 \beta^2$ and thus {the union}
\begin{align*}
L_{\mathrm{res}}:=\bigcup_{\vartheta<\beta^2/2} L_{<\vartheta}=\{x\in \mathbb R^2: \vert Cx \vert < \beta\}
\end{align*} belongs to the domain of attraction, see {Fig.~\ref{fig:AREresult}}.

Since its boundaries coincide with the stable manifolds of the adjacent saddle equilibria, $L_{\mathrm{res}}$ is even the exact domain of attraction in this special example. 
\fine\end{example}

\section{Conclusion } \label{sec:Conclusion} 

The presented approach is based on a general defining {inequality} for the Lyapunov-like function $V$. Using a quadratic ansatz, a corresponding function $V$ can be computed via an LMI, an ARE, or a matrix equation. The approach relies on an inequality chain that ultimately leads to the sector-defining inequality. The introduction of the offset function $\ell$, which can be chosen as a general {class-$\mathcal K$} function{---thus giving rise to the concept of a $\mathcal K$-reduced sector---}is a key distinction from existing approaches to absolute stability. Moreover, a proof that relies on Barbalat's lemma combined with the BIBS and 0-CICS property of a stabilized system is presented. For the considered system class, it enables Lyapunov-like functions that may lack strict definiteness requirements. A passivity-index formulation is derived for the circle criterion. Furthermore, the article clarifies key interrelations between various concepts in the field of absolute stability.

\appendix
\section{Appendix} \label{sec:ProofConv}
The appendix contains {lemmas} used in Thm.~\ref{thm:ConvergenceStatement}.

 \begin{lemma}[Partial positive definiteness of $V$]\label{lem:VLowerBoundObsv}
Let {$V\colon \mathbb R^n\to \mathbb R$} be continuous and $V(0_n)=0$. 
If {its Dini derivative along solutions of a stabilized system $A-BK_{\mathrm{stab}}C$ (see Assumption~\ref{asm:stabilizing}) satisfies} 
\begin{align} \label{eq:DVKstabEllQuad}
{\exists k>0}, \forall x\in \mathbb R^n:\quad D_{A-BK_{\mathrm{stab}}C}^+V (x) \leq - k \|Cx\|_2^2, 
\end{align}
then $V$ admits the lower bound
\begin{align} \label{eq:partialPosDef}
\exists k_{1}>0,\forall x\in \mathbb R^n: \quad k_1\|x_\bot\|^2 \leq V(x) 
\end{align} 
 {in terms of} the $(A,C)$-observable part $x_\bot\in \mathcal U^\bot$ from~(\ref{eq:xBot}). 
\fine\end{lemma}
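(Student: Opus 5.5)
Let $A_s:=A-BK_{\mathrm{stab}}C$, which is Hurwitz by Assumption~\ref{asm:stabilizing}, and let $\varphi(t;x)=e^{A_s t}x$ be the solution of $\dot x=A_s x$ starting at $x$ at time $0$. The plan is to integrate the Dini-derivative inequality (\ref{eq:DVKstabEllQuad}) along this solution over $[0,T]$ and let $T\to\infty$. This gives
\begin{align*}
V(x)-\lim_{T\to\infty}V(\varphi(T;x))\geq k\int_0^\infty \|Ce^{A_s t}x\|_2^2\,\mathrm dt .
\end{align*}
Since $A_s$ is Hurwitz, $\varphi(T;x)\to 0_n$, and continuity of $V$ with $V(0_n)=0$ makes the limit vanish. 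Hence $V(x)\geq k\, x^\top W_s x$ with $W_s:=\int_0^\infty e^{A_s^\top t}C^\top C e^{A_s t}\,\mathrm dt$. Passing from a Dini-derivative inequality to an integrated inequality is the standard comparison step: if a continuous function $m(t)$ satisfies $D^+m(t)\leq g(t)$ with $g$ continuous, then $m(T)-m(0)\leq\int_0^T g$. Here $m(t)=V(\varphi(t;x))$, and because solutions are unique the Dini derivative of $m$ at $t$ is exactly $D^+_{A_s}V(\varphi(t;x))$. I would quote this result, for instance from Yoshizawa, rather than reprove it.

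Next I identify the kernel of $W_s$. We have $x^\top W_s x=0$ iff $Ce^{A_s t}x\equiv 0$, which holds iff $x$ lies in the unobservable subspace of $(A_s,C)$. Output feedback does not change the unobservable subspace: $A_s$ and $A$ agree on $\ker C$, so an induction shows that $CA_s^j x=0$ for all $j$ iff $CA^jx=0$ for all $j$. Therefore $\ker W_s=\mathcal U$. Since $W_s\succeq 0$ is symmetric, its range is $\mathcal U^\bot$, and $x^\top W_s x=x_\bot^\top W_s x_\bot\geq \lambda\|x_\bot\|_2^2$. Here $\lambda>0$ is the smallest eigenvalue of $W_s$ restricted to $\mathcal U^\bot$, which is positive because $W_s$ has no kernel in $\mathcal U^\bot$. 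The cross terms vanish because $W_s x_\|=0$. Norm equivalence then converts $\|\cdot\|_2$ to the arbitrary norm $\|\cdot\|$ and gives $k_1$.

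The main obstacle is the rigorous integration step for a merely continuous $V$ with an upper-right Dini derivative. I would apply the comparison lemma to $m(t)-k\int_0^t\|C\varphi(s;x)\|_2^2\,\mathrm ds$, whose upper-right Dini derivative is at most $0$ everywhere; such a continuous function is nonincreasing. A minor second point is that $\mathcal U$ is defined via $(A,C)$ while the Gramian involves $A_s$. The invariance argument above settles this, and it is also how I would read ``equivalently'' in the definition of $\mathcal U$.
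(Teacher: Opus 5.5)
Your proposal is correct and follows essentially the paper's proof: integrate the Dini inequality along the flow of $A_{\mathrm{stab}}$, use the Hurwitz property, continuity and $V(0_n)=0$ to obtain $V(x)\geq k\,x^\top W_{\mathrm o}x$, then bound the Gramian below by its smallest nonzero eigenvalue on $\mathcal U^\bot$, using that output feedback leaves the unobservable subspace unchanged. You also justify the Dini-to-integral step, the feedback invariance and the norm equivalence, which the paper only asserts; the one slip is that the auxiliary function with nonpositive Dini derivative must be $m(t)+k\int_0^t\|C\varphi(s;x)\|_2^2\,\mathrm ds$, not with a minus sign, since otherwise its monotonicity only gives $m(T)-m(0)\leq k\int_0^T\|C\varphi(s;x)\|_2^2\,\mathrm ds$, the useless direction.
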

\begin{proof}
{Let $t\mapsto x(t)$ be a solution of 
$\dot x =(A-BK_{\mathrm{stab}}C)x$. Because of (\ref{eq:DVKstabEllQuad}), for every $t_1>0$, 
\begin{align}\label{eq:Vcomp}
V(x(t_1))- V(x(0)) 
\leq -\int_0^{t_1} k \|Cx(t)\|_2^2\,\mathrm d t. 
\end{align}}
Take the argument $x_0\in \mathbb R^n$ of $V(x_0)$ as {the} initial value {$x(0)=x_0$}. By the Hurwitz assumption, $\|x(t)\|\to 0$ as $t\to\infty$. Thus, due to the continuity of $V$, $V(x(t))\to V(0)=0$ as ${t\to\infty}$. Hence, 
\begin{align} 
V(x_0)&=-\Big( \lim_{t_1\to \infty} \underbrace{V(x(t_1))}_{\to 0} - V(\underbrace{x(0)}_{x_0})\Big) 
\nonumber\\[-0.5em]
&\stackrel{(\ref{eq:Vcomp})}\geq \int_0^\infty k \|Cx(t)\|_2^2\,\mathrm d t
\nonumber \\
&= k x_0^\top \Big(\underbrace{ \int_0^\infty \mathrm e^{A_{\mathrm{stab}}^\top t}C^\top C\mathrm e^{A_{\mathrm{stab}}t}\,\mathrm d t}_{W_{\mathrm o}} \Big)x_0, 
\label{eq:GramianLowerBound} \\[-2em] \nonumber
\end{align}
where $A_{\mathrm{stab}}:=A-BK_{\mathrm{stab}}C$ and $W_{\mathrm o }$ is the observability Gramian of $(A_{\mathrm{stab}},C)$. 
By the Courant--Fischer (min-max) theorem, the lower bound on $V$ from (\ref{eq:GramianLowerBound}) satisfies
\begin{align} 
x_0^\top \, W_{\mathrm o } \,x_0 \geq \lambda_{\min,\neq 0}(W_{\mathrm o }) \, \|x_\bot\|_2^2, 
\end{align}
where $\lambda_{\min,\neq 0}(W_{\mathrm o })$ denotes the smallest nonzero eigenvalue of $W_{\mathrm o }$, and where $x_\bot$ is the orthogonal projection of $x$ onto the orthogonal complement of the nullspace of $W_{\mathrm o }$. Moreover, the unobservable subspaces of $(A,C)$ and $(A_{\mathrm{stab}},C)$ with $A_{\mathrm{stab}}=A-BK_{\mathrm{stab}}C$ coincide. 
\end{proof}

\begin{lemma}[Positive invariance of a {strict} sublevel set]\label{lem:sublevelset}
{Let the Dini derivative of a continuous function $V\colon \mathbb R^n\to \mathbb R$ along solutions of (\ref{eq:ABaC}) be nonpositive on $\Omega_0\subset \mathbb R^n$, i.e.,} 
\begin{align}
&\forall x\in \Omega_0, \forall t\geq t_0: 
D_{(A\,\cdot\,-Ba(t,C\cdot))}^+ V (x) \leq 0.\label{eq:DVnonposLemma}
\end{align}
If {a level $\level> 0$ is chosen such that the sublevel set 
$L_{\leq\level}$ defined in (\ref{eq:sublevelset}) satisfies $L_{\leq \level}\subseteq \Omega_0$,}
 then the {strict} sublevel set $L_{< \level}$ defined in (\ref{eq:strictSublevelset}) 
is positively invariant, i.e., solutions {$x\colon[t_0,t_{\max})\to \mathbb R^n$} of (\ref{eq:ABaC}) satisfy
{\begin{align*}
x(t_0)\in L_{< \level} \quad \Longrightarrow \quad \forall t\in [t_0,t_{\max}): x(t)\in L_{<\level}. \\[-3.75em] \nonumber
\end{align*}}
\fine\end{lemma}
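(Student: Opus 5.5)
The plan is a contradiction argument built on a first-exit time. Take a solution $x\colon[t_0,t_{\max})\to\mathbb R^n$ of (\ref{eq:ABaC}) with $V(x(t_0))<\level$, and suppose it leaves $L_{<\level}$. Let $t_1:=\inf\{t\in[t_0,t_{\max}): V(x(t))\geq \level\}$. Since $V\circ x$ is continuous (continuity of $V$ and of the classical solution), we get $t_1>t_0$, $V(x(t_1))=\level$, and $V(x(t))<\level$ for all $t\in[t_0,t_1)$. In particular, $x(t)\in L_{\leq\level}\subseteq\Omega_0$ on the closed interval $[t_0,t_1]$.

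Next I would show that $\varphi(t):=V(x(t))$ is nonincreasing on $[t_0,t_1]$. For each $s\in[t_0,t_1)$, the upper right Dini derivative of $\varphi$ at $s$ equals $D^+_{(A\,\cdot\,-Ba(s,C\cdot))}V(x(s))$. To see this, note that by Assumption~\ref{asm:aLocLip} solutions are unique, so $x(s+h)=x(s+h;s,x(s))$. The derivative is therefore $\leq 0$ by (\ref{eq:DVnonposLemma}), since $x(s)\in\Omega_0$. I then apply the standard monotonicity lemma: a continuous function on an interval whose upper right Dini derivative is $\leq 0$ everywhere except at the endpoint is nonincreasing. The usual proof of that lemma is an $\varepsilon$-perturbation $\varphi(t)-\varepsilon(t-t_0)$ together with a sup argument. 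This yields $\varphi(t_1)\leq\varphi(t_0)<\level$, which contradicts $\varphi(t_1)=\level$. Hence the solution never leaves $L_{<\level}$ on $[t_0,t_{\max})$.

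The main obstacle is the monotonicity step, because only the Dini derivative is available, not differentiability. Its hypothesis holds only along points that actually lie in $\Omega_0$ and at the corresponding time $s$. This is exactly why I use the first-exit time: it keeps the whole trajectory segment inside $L_{\leq\level}\subseteq\Omega_0$ up to and including $t_1$. The strict sublevel set matters for the same reason, since it makes $t_1>t_0$ and guarantees that containment.

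A subtle point is time-dependence: (\ref{eq:DVnonposLemma}) holds for all $t\geq t_0$ uniformly, so it does not matter that the Dini derivative is taken with the vector field frozen at time $s$. If needed, I would add a brief justification that the Dini derivative along the solution starting at $(s,x(s))$ coincides with that of $\varphi$ at $s$, which follows directly from uniqueness.
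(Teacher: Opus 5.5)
Your proposal is correct and follows the same argument as the paper. Both take a first time $t_1>t_0$ with $V(x(t_1))=\level$, keep the segment $x([t_0,t_1])$ inside $L_{\leq\level}\subseteq\Omega_0$, and conclude that $t\mapsto V(x(t))$ is nonincreasing there, which contradicts $V(x(t_1))=\level>V(x(t_0))$. You also make explicit two steps the paper leaves implicit: you identify $D^+(V\circ x)(s)$ with the Dini derivative along the solution of the time-varying system from $(s,x(s))$ via uniqueness, and you invoke the monotonicity lemma for continuous functions with nonpositive upper right Dini derivative. Under the paper's definition, $D^+_{(A\,\cdot\,-Ba(s,C\cdot))}V$ already means this derivative along the time-varying solution, so no ``frozen'' vector field is involved.
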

\begin{proof} 
{Let $x(t_0)\in L_{< \level}$, i.e., $V(x(t_0))< \level$. Suppose the solution leaves the strict sublevel set. Then, by continuity, there is a first time $t_1>t_0$ where $V(x(t_1))= \level$, while $V(x(t))< \level$ for $t\in [t_0,t_1)$. However, $V(x(t_1))>V(x(t_0))$ contradicts that $t\mapsto V(x(t))$ must be nonincreasing on $ [t_0,t_1]$ since the nonpositive Dini derivative $(\ref{eq:DVnonposLemma})$ applies to $x(t)\in L_{\leq \level}\subseteq \Omega_0$ for any $t\in [t_0,t_1]$.}
\end{proof}

\begin{lemma}[Bound on $x_\bot$]\label{lem:xBotBounded}
Let $V$ be a continuous function, and $\exists \kappa_1\in \mathcal K: \kappa_1(\|x_\bot\|)\leq V(x)$ with $\kappa_1(s)\to \infty$ as $s\to \infty$. 
If 
{$L_{< \level}$ defined in (\ref{eq:strictSublevelset})} is positively invariant, then for any initial value 
{$x(t_0)\in L_{< \level}$} 
the projection $x_\bot(t)$ is uniformly bounded for all $t\geq t_0$. 
\fine\end{lemma}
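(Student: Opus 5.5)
The argument is a direct chain of inequalities; it does not need the dynamics beyond positive invariance. Fix an initial value $x(t_0)\in L_{<\level}$ and let $x\colon[t_0,t_{\max})\to\mathbb R^n$ be the corresponding maximal solution of (\ref{eq:ABaC}).

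First, by the assumed positive invariance of $L_{<\level}$, we have $x(t)\in L_{<\level}$ for all $t\in[t_0,t_{\max})$, that is,
\begin{align*}
V(x(t))<\level \quad \text{for all } t\in[t_0,t_{\max}).
\end{align*}

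Second, we insert the assumed lower bound. Since $\kappa_1(\|x_\bot(t)\|)\leq V(x(t))$, this gives $\kappa_1(\|x_\bot(t)\|)<\level$ uniformly in $t$.

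The only point that needs care is inverting this inequality. The function $\kappa_1\in\mathcal K$ is continuous and strictly increasing, and $\kappa_1(s)\to\infty$ as $s\to\infty$. Hence $\kappa_1$ is a bijection from $[0,\infty)$ onto $[0,\infty)$, so the value $s^\ast:=\kappa_1^{-1}(\level)<\infty$ is well defined. Strict monotonicity then turns $\kappa_1(\|x_\bot(t)\|)<\level=\kappa_1(s^\ast)$ into $\|x_\bot(t)\|<s^\ast$. This bound holds for all $t$ in the existence interval, it does not depend on $t$, and it depends on the initial value only through the level $\level$.

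It remains to connect this to the way the lemma is used. In Thm.~\ref{thm:ConvergenceStatement}, the hypothesis comes from Lemma~\ref{lem:VLowerBoundObsv} with $\kappa_1(s)=k_1 s^2$, which is radially unbounded, and positive invariance comes from Lemma~\ref{lem:sublevelset}. The phrase ``for all $t\geq t_0$'' should be read as ``on the maximal interval of existence''. Global existence is obtained only later, in Lemma~\ref{lem:boundedness}, once full boundedness of $x(\cdot)$ is known. In case (i), where $\Omega_{\kappa(\|\cdot\|)}=\mathbb R^n$, we take $\level>V(x(t_0))$ arbitrary, so every initial value is covered.
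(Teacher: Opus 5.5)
Your proposal is correct and matches the paper's proof: positive invariance gives $V(x(t))<\level$, the lower bound gives $\kappa_1(\|x_\bot(t)\|)<\level$, and inverting $\kappa_1$ yields $\|x_\bot(t)\|<\kappa_1^{-1}(\level)$. Two of your extra remarks make explicit what the paper leaves implicit, and both are accurate: that $\kappa_1$ is a bijection of $[0,\infty)$, and that ``for all $t\geq t_0$'' should be read on the maximal interval of existence.
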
 
\begin{proof}
Positive invariance guarantees that {$x(t)\in L_{< \level}$,} 
and thus 
{$V(x(t))< \level$,} 
for all $t\geq t_0$. Hence, 
{$\kappa_1(\|x_\bot(t)\|)\leq V(x(t))< \level$ implies $\|x_\bot(t)\| < \kappa_1^{-1}(\level)$.} 
\end{proof}

\begin{lemma}[Bound on $x$]\label{lem:boundedness}
Assume there exists a {$K_{\mathrm{stab}}\in \mathbb R^{m\times p}$} such that $A-BK_{\mathrm{stab}}C$ is Hurwitz {(Assumption~\ref{asm:stabilizing})}. Assume that $a(t,\zeta)$ is bounded for bounded $\zeta$, uniformly in $t$ {(Assumption~\ref{asm:aLocLip})}. Consider solutions {$x(\cdot)$} {of (\ref{eq:ABaC}), i.e.,} $\dot x=Ax-Ba(t,Cx)$. 
If the observable part $x_\bot(\cdot)$ (see (\ref{eq:xBot})) is bounded, then $x(\cdot)$ is bounded. 
\fine\end{lemma}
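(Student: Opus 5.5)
Rewrite the system as a stabilized linear system driven by an input that depends only on $Cx$:
\begin{align*}
\dot x = A_{\mathrm{stab}}x + B u(t), \qquad u(t):= K_{\mathrm{stab}}Cx(t) - a(t,Cx(t)),
\end{align*}
where $A_{\mathrm{stab}}:=A-BK_{\mathrm{stab}}C$ is Hurwitz by assumption. Once $u(\cdot)$ is shown to be bounded, the BIBS property of $\dot x=A_{\mathrm{stab}}x+Bu$ gives boundedness of $x(\cdot)$. Explicitly, the variation-of-constants formula $x(t)=\mathrm e^{A_{\mathrm{stab}}(t-t_0)}x(t_0)+\int_{t_0}^t \mathrm e^{A_{\mathrm{stab}}(t-\tau)}Bu(\tau)\,\mathrm d\tau$ together with $\|\mathrm e^{A_{\mathrm{stab}}s}\|\leq c\,\mathrm e^{-\lambda s}$ yields
\begin{align*}
\|x(t)\|\leq c\|x(t_0)\|+\frac{c\|B\|}{\lambda}\sup_\tau\|u(\tau)\|
\end{align*}
on the whole existence interval.

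\textbf{Boundedness of $u$ reduces to boundedness of $Cx$.} Use the decomposition (\ref{eq:xBot}). Since $x_\|\in\mathcal U$ and the unobservable subspace of $(A,C)$ lies in $\ker C$, we have $Cx_\|=0$, and hence $Cx(t)=Cx_\bot(t)$. By hypothesis, $x_\bot(\cdot)$ is bounded, say $\|x_\bot(t)\|\leq r$. Then $\|Cx(t)\|\leq\|C\|r$ for all $t$, so $\zeta(t)=Cx(t)$ stays in a fixed bounded set.

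\textbf{Bounding the two terms of $u$.} By Assumption~\ref{asm:aLocLip}, $a(t,\zeta)$ is uniformly bounded in $t$ on this bounded set, so $\sup_t\|a(t,Cx(t))\|<\infty$. The term $K_{\mathrm{stab}}Cx(t)$ is trivially bounded by $\|K_{\mathrm{stab}}\|\,\|C\|\,r$. Therefore $u$ is bounded, and the BIBS estimate above closes the argument.

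\textbf{Main obstacle.} The only delicate point is the identity $Cx=Cx_\bot$, i.e.\ that the unobservable subspace (the nullspace of the observability Gramian) is contained in $\ker C$. I would justify it directly: if $W_{\mathrm o}v=0$, then $\int\|C\mathrm e^{A_{\mathrm{stab}}t}v\|^2\,\mathrm dt=0$, and evaluating at $t=0$ gives $Cv=0$. Lemma~\ref{lem:VLowerBoundObsv} already notes that $(A,C)$ and $(A_{\mathrm{stab}},C)$ have the same unobservable subspace. A minor additional remark is that the estimate is uniform on $[t_0,t_{\max})$, which is what is needed before forward completeness is invoked.
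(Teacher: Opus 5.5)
Your proposal is correct and matches the paper's proof: it rewrites the system as $\dot x = A_{\mathrm{stab}}x + Bu$ with $u = K_{\mathrm{stab}}Cx_\bot - a(t,Cx_\bot)$ using $Cx = Cx_\bot$, bounds $u$ via Assumption~\ref{asm:aLocLip}, and invokes the BIBS property of the Hurwitz system. Your explicit variation-of-constants estimate and your justification that $Cx_\| = 0$ fill in details the paper leaves implicit.
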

\begin{proof}
{Note that} only $C x=C(x_\bot + x_\|)=C x_\bot$ occurs in 
\begin{align}
\dot x&=Ax-Ba(t,Cx) \nonumber 
\\
&=(A-BK_{\mathrm{stab}}C) x +B (K_{\mathrm{stab}}Cx-a(t,Cx)), \label{eq:AsuCx}
\\
&= \underbrace{ (A-BK_{\mathrm{stab}}C)}_{A_\mathrm{stab}} x +B \underbrace{(K_{\mathrm{stab}}Cx_\bot-a(t,Cx_\bot))}_{=:u}. \nonumber 
\end{align}
Due to the Hurwitz assumption, $\dot x=A_\mathrm{stab} x+Bu$ shows the bounded-input bounded-state {(BIBS)} property \cite[Sec.~5.8.1]{Logemann.2014}. The input $u(\cdot)$ is bounded since $x_\bot(\cdot)$, and thus also $Cx_\bot(\cdot)$ is bounded, and $a(t,\zeta)$ is assumed to be bounded (uniformly w.r.t.\ $t$) for any bounded $\zeta$. 
\end{proof}

\begin{lemma}[Uniformly continuous $b_\kappa$]\label{lem:uniformlyContComposition}
Let $\kappa$ be a continuous function 
{and} $x\colon [t_0,t_{\max})\to \mathbb R^n; t\mapsto x(t)=x(t;t_0,x_0)$ be a solution of {(\ref{eq:ABaC}), i.e.,} $\dot x=Ax-Ba(t,Cx)$, where $a(t,\zeta)$ is locally Lipschitz in $\zeta$ (uniformly w.r.t.\ $t$). Assume $\|x(t)\|\leq c_x$ is upper bounded by some $c_x>0$ 
for all {$t\in [t_0,t_{\max})$}. 
Then $t\mapsto b_\kappa(t):= \kappa(\|C x(t)\|)$
 is uniformly continuous on $t\in [t_0,t_{\max})$. 
\fine\end{lemma}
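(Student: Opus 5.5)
The plan is to show first that $\zeta(t):=Cx(t)$ is Lipschitz continuous on $[t_0,t_{\max})$, and then to get uniform continuity of $b_\kappa$ by composing with functions that are uniformly continuous on a compact set.

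\emph{Step 1 (bounded derivative).} The solution satisfies $\|x(t)\|\leq c_x$, so $\zeta(t)=Cx(t)$ stays in the compact ball $\mathcal B:=\{\zeta:\|\zeta\|\leq \|C\|c_x\}$. By the local Lipschitz property of $a$ (uniform in $t$), applied on this compact set, there is $L_a$ with $\|a(t,\zeta)\|=\|a(t,\zeta)-a(t,0_p)\|\leq L_a\|\zeta\|$ for all $\zeta\in\mathcal B$ and $t\geq t_0$. Here we use $a(t,0_p)=0_m$. Alternatively, the uniform boundedness from Assumption~\ref{asm:aLocLip} gives the same conclusion directly. Hence
\begin{align*}
\|\dot x(t)\|\leq \|A\|c_x+\|B\|L_a\|C\|c_x=:c_v \quad \text{for all } t\in[t_0,t_{\max}).
\end{align*}

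\emph{Step 2 (Lipschitz in time).} Since $x$ is a classical solution, it is $C^1$. The mean value inequality then gives $\|x(t)-x(s)\|\leq c_v|t-s|$, and therefore $\|Cx(t)-Cx(s)\|\leq \|C\|c_v|t-s|$. So $t\mapsto Cx(t)$ is uniformly continuous. By the reverse triangle inequality, $\big|\|Cx(t)\|-\|Cx(s)\|\big|\leq\|Cx(t)-Cx(s)\|$, so $t\mapsto\|Cx(t)\|$ is Lipschitz as well, and it takes values in the compact interval $[0,\|C\|c_x]$.

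\emph{Step 3 (composition).} The function $\kappa$ is continuous on $[0,\infty)$, hence uniformly continuous on the compact interval $[0,\|C\|c_x]$ by Heine--Cantor. A composition of uniformly continuous maps is uniformly continuous. Concretely, given $\epsilon>0$, choose $\delta_\kappa$ from the uniform continuity of $\kappa$ on that interval, and set $\delta=\delta_\kappa/(\|C\|c_v)$. Then $|t-s|<\delta$ implies $|b_\kappa(t)-b_\kappa(s)|<\epsilon$.

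None of the steps is genuinely hard. The only points needing care are the following.
\begin{itemize}
\item Restrict $\kappa$ to a compact interval: $\kappa$ alone need not be uniformly continuous on $[0,\infty)$.
\item Bound $a(t,Cx(t))$ uniformly in $t$. The uniform-in-$t$ Lipschitz (or boundedness) hypothesis is exactly what is needed here, so that the bound $c_v$ on $\dot x$ does not depend on $t$.
\end{itemize}
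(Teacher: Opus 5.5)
Your proof is correct and follows the paper's argument. Both bound $\|\dot x\|$ by the uniform-in-$t$ Lipschitz constant of $a$ on the compact image $\{Cx:\|x\|\le c_x\}$ together with $a(t,0_p)=0_m$, conclude that $x(\cdot)$ is uniformly (indeed Lipschitz) continuous, and compose with a function made uniformly continuous by compactness. The only cosmetic differences are these: the paper composes $x\mapsto\kappa(\|Cx\|)$ on the ball $\{\|x\|\le c_x\}$ with $x(\cdot)$, whereas you compose $\kappa$ on $[0,\|C\|c_x]$ with $t\mapsto\|Cx(t)\|$; and your explicit choice $\delta=\delta_\kappa/(\|C\|c_v)$ needs the trivial separate case $\|C\|c_v=0$, where $b_\kappa$ is constant.
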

\begin{proof}
The proof is along the lines of {\cite[Thm.~8.4]{Khalil.2002}}. As
\begin{align}
k\colon \mathbb R^n\supset D \to \mathbb R; \quad x \mapsto k(x)=\kappa(\|Cx\|)
\end{align} is a continuous function on a compact set $D:=\{x\in \mathbb R^n:\|x\|\leq c_x\}$, it is uniformly continuous. Moreover, $t\mapsto x(t)$ has a uniformly bounded derivative 
\begin{align*}
\|\dot x(t)\| &\leq \|A\| \|x(t)\|+\|B\| \|a(t,Cx(t))\|
\\ &= (\|A\| + L \|B\| \|C\|) \, c_x,
\end{align*}
{with} $L$ {being} the (time-invariant) Lipschitz constant from Assumption~\ref{asm:aLocLip} on the compact set {$\{\zeta\in \mathbb R^p: \zeta=Cx, x\in D\}$}, 
where $\|a(t,\zeta)\|\leq L\|\zeta\|$. {Thus, $x(\cdot)$} is also uniformly continuous. Finally, the composition $k\circ x $ of two uniformly continuous functions is uniformly continuous. 
\end{proof}

\begin{lemma}[Bound on $B_\kappa$]\label{lem:Bbounded}
Consider {(\ref{eq:ABaC}), i.e.,} $\dot x=Ax-Ba(t,Cx)$. Let $S$ be a positively invariant set. Assume $V$ is a continuous function bounded from below {by $V(\cdot)\geq V_{\min}\in \mathbb R$}, and, for any $x\in S$, 
\begin{align}\label{eq:DaVleq}
D_{A\cdot-Ba(t,C\cdot)}^+ V(x)\leq - \kappa(\|C x\|)
\end{align}
 {(see (\ref{eq:DVnonposL}) and (\ref{eq:ellK}))}. Then, for any {solution $x(\cdot)$ of (\ref{eq:ABaC}) with initial value} $x(t_0)\in S$, $B_\kappa(t)=\int_{t_0}^t \kappa(\|C x(\tau )\|)\,\mathrm d \tau$ is bounded. 
\fine\end{lemma}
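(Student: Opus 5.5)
The plan is to integrate the Dini-derivative inequality (\ref{eq:DaVleq}) along the solution and then use the lower bound $V_{\min}$. Fix a solution $x(\cdot)$ with $x(t_0)\in S$, defined on $[t_0,t_{\max})$. By positive invariance of $S$, $x(t)\in S$ for all $t$ in this interval, so (\ref{eq:DaVleq}) holds at every point of the trajectory. Define the auxiliary function
\begin{align*}
\phi(t):=V(x(t))+B_\kappa(t), \quad t\in[t_0,t_{\max}).
\end{align*}
This function is continuous, since $V$ is continuous and $B_\kappa$ is an integral of the continuous function $\kappa(\|Cx(\cdot)\|)$.

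The key step is to show that $\phi$ is nonincreasing. The upper right Dini derivative of $t\mapsto V(x(t))$ at time $t$ is exactly $D^+_{A\cdot-Ba(t,C\cdot)}V(x(t))$ in the paper's notation: the solution starting at $x(t)$ at time $t$ coincides with $x(\cdot)$ by uniqueness, which follows from Assumption~\ref{asm:aLocLip}. The function $B_\kappa$ is $C^1$ with derivative $\kappa(\|Cx(t)\|)$. Since $\limsup(f+g)=\limsup f+\lim g$ when $g$ is differentiable, we obtain
\begin{align*}
D^+\phi(t)=D^+_{A\cdot-Ba(t,C\cdot)}V(x(t))+\kappa(\|Cx(t)\|)\leq 0 .
\end{align*}
A continuous function with nonpositive upper right Dini derivative everywhere is nonincreasing; this is the standard monotonicity lemma for Dini derivatives, cf.~\cite{Yoshizawa.1966}. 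The same argument was already used implicitly in Lemma~\ref{lem:sublevelset}. Hence $\phi(t)\leq\phi(t_0)=V(x(t_0))$.

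It follows that
\begin{align*}
0\leq B_\kappa(t)\leq V(x(t_0))-V(x(t))\leq V(x(t_0))-V_{\min}
\end{align*}
for all $t$. The lower bound $0$ comes from $\kappa\geq 0$. This gives a uniform bound that is independent of $t$ and of $t_{\max}$.

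The only delicate point is the monotonicity step from a Dini-derivative inequality, which requires continuity of $t\mapsto V(x(t))$ and validity of the inequality at every $t$ rather than almost every $t$. Both are guaranteed here: the first by the continuity of $V$ and of $x(\cdot)$, and the second by invariance of $S$. I would cite the classical lemma rather than reprove it; if needed, it can be shown via the usual $\varepsilon$-perturbation argument, applying the first-crossing reasoning to $\phi(t)-\varepsilon(t-t_0)$.
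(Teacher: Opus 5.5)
Your proposal is correct and follows the paper's route. You integrate (\ref{eq:DaVleq}) along the trajectory, which stays in $S$ by positive invariance, and use $V\geq V_{\min}$ to get $B_\kappa(t)\leq V(x(t_0))-V_{\min}$; the paper just states the integrated inequality, while you justify it via the Dini-derivative monotonicity of $V(x(t))+B_\kappa(t)$, a welcome but inessential refinement.
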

\begin{proof} 
{Let $t\mapsto x(t)$ be a solution of 
$\dot x =Ax-Ba(t,Cx)$.} Because of 
\begin{align*}
\underbrace{V(x(t))}_{\geq V_{\min}}-V(x(t_0)) 
& 
\stackrel{(\ref{eq:DaVleq})}\leq - \int_{t_0}^t\kappa(\|C x(\tau )\|)\,\mathrm d \tau, 
\end{align*} 
for all $t\geq t_0$, there is an upper bound
\begin{align*}
\forall t\geq t_0: \quad \int_{t_0}^t \kappa(\|C x(\tau )\|)\,\mathrm d \tau \leq V(x(t_0))-V_{\min}.
\\[-3em] \nonumber
\end{align*}
\end{proof}

\begin{lemma}[Barbalat's lemma, {\cite[Lemma~8.2]{Khalil.2002}}] \label{lem:Barbalat}
Let $b:[t_0,\infty) \to \mathbb R$ and $B(t):=\int_{t_0}^t b(\tau)\,\mathrm d \tau$. If $b(t)$ is uniformly continuous on $t\in[t_0,\infty)$, then the existence of a finite limit $B_\infty \in \mathbb R$ implies 
\begin{align*}
\lim_{t\to\infty} B(t)=B_\infty \quad \Longrightarrow \quad 
\lim_{t\to\infty} b(t)=0. 
\\[-4em] \nonumber
\end{align*}
\fine\end{lemma}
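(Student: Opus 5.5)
This is the classical Barbalat lemma. Suppose, for contradiction, that $b(t)\not\to 0$. Then I will pick $\varepsilon>0$ and a sequence $t_k\to\infty$ with $|b(t_k)|\geq 2\varepsilon$; passing to a subsequence, I can assume the $t_k$ are increasing and $t_{k+1}-t_k\geq 1$.

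Uniform continuity of $b$ on $[t_0,\infty)$ then gives a $\delta\in(0,1)$, independent of $k$, such that $|b(t)-b(t_k)|<\varepsilon$ whenever $|t-t_k|\leq\delta$. On each interval $[t_k,t_k+\delta]$ this yields $|b(t)|>\varepsilon$. Since $b$ is continuous and never vanishes there, it keeps the sign of $b(t_k)$ on the whole interval.

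Consequently
\begin{align*}
\big|B(t_k+\delta)-B(t_k)\big|=\Big|\int_{t_k}^{t_k+\delta} b(\tau)\,\mathrm d\tau\Big|\geq \varepsilon\delta>0
\end{align*}
for every $k$. On the other hand, $B(t)\to B_\infty$ means that $B$ satisfies the Cauchy criterion at infinity. In particular, $B(t_k+\delta)-B(t_k)\to 0$ as $k\to\infty$, because both arguments tend to infinity. This contradicts the uniform lower bound $\varepsilon\delta$, so $b(t)\to 0$.

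The only delicate point is that $\delta$ must be the same for all $k$. This is exactly what uniform, rather than merely pointwise, continuity provides, and I expect no further obstacle. The sign-constancy argument is not strictly needed: the estimate $\int_{t_k}^{t_k+\delta} b\geq \int_{t_k}^{t_k+\delta}(b(t_k)-\varepsilon)\,\mathrm d\tau$ (and the analogous bound for negative $b(t_k)$) gives the same bound directly.
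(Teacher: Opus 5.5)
Your proof is correct. It is the standard contradiction argument: uniform continuity gives a $k$-independent $\delta$, which yields the uniform bound $|B(t_k+\delta)-B(t_k)|\geq\varepsilon\delta$ and contradicts the Cauchy criterion for $B$. The paper gives no proof of its own and simply refers to \cite[Lemma~8.2]{Khalil.2002}, whose proof is essentially this same argument.
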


\begin{lemma}[Convergence of $x$]\label{lem:convOfx}
Assume there exists a $K_{\mathrm{stab}}\in \mathbb R^{m\times p}$ such that $A_{\mathrm{stab}}=A-BK_{\mathrm{stab}}C$ is Hurwitz {(Assumption~\ref{asm:stabilizing})}. Consider solutions {$x(\cdot)$} {of (\ref{eq:ABaC}), i.e.,} $\dot x=Ax-Ba(t,Cx)$. 
If $ \|\zeta(t)\|\to 0$ implies $ \|a(t,\zeta(t))\|\to 0$ as $t\to\infty$ {(Assumption~\ref{asm:aLocLip})}, then 
\begin{align*}
 \lim_{t\to\infty}\|Cx (t)\|=0 \quad \Longrightarrow \quad \lim_{t\to\infty}\|x (t)\|=0 {.}
\\[-3.75em] \nonumber
\end{align*} 
\fine\end{lemma}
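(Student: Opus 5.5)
The plan is to rewrite the closed loop exactly as in the proof of Lemma~\ref{lem:boundedness}, namely
\begin{align*}
\dot x(t) = A_{\mathrm{stab}}\, x(t) + B u(t), \qquad u(t):=K_{\mathrm{stab}}Cx(t)-a(t,Cx(t)),
\end{align*}
with $A_{\mathrm{stab}}=A-BK_{\mathrm{stab}}C$ Hurwitz. This form exploits that only $\zeta(t)=Cx(t)$ enters the perturbation. Since $A_{\mathrm{stab}}$ is Hurwitz, this linear system has the converging-input-converging-state (0-CICS) property: if $u(t)\to 0_m$, then $x(t)\to 0_n$. The proof therefore reduces to two steps. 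First, show $u(t)\to 0$. Second, verify the 0-CICS property by a direct variation-of-constants estimate.

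\textbf{Step 1.} By hypothesis, $\|Cx(t)\|\to 0$ implies $\|a(t,Cx(t))\|\to 0$. For $a$ satisfying Assumption~\ref{asm:aLocLip}, this follows on a bounded neighbourhood of $\zeta=0$ from $\|a(t,\zeta)\|\le L\|\zeta\|$, where $L$ is uniform in $t$ and $a(t,0_p)=0_m$. Trivially, $\|K_{\mathrm{stab}}Cx(t)\|\to 0$ as well, so $\|u(t)\|\to 0$. The solution is assumed to exist on $[t_0,\infty)$, as guaranteed in the main proof by boundedness and forward completeness. Also, $u$ is continuous, hence locally bounded, and bounded on $[t_0,\infty)$ because it converges.

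\textbf{Step 2 (0-CICS).} Use $\|\mathrm e^{A_{\mathrm{stab}}t}\|\le M\mathrm e^{-\lambda t}$ with $M\ge1$ and $\lambda>0$. Write
\begin{align*}
x(t)=\mathrm e^{A_{\mathrm{stab}}(t-t_0)}x(t_0)+\int_{t_0}^{t}\mathrm e^{A_{\mathrm{stab}}(t-\tau)}Bu(\tau)\,\mathrm d\tau .
\end{align*}
The first term decays exponentially. For the integral, fix $\varepsilon>0$ and choose $T$ such that $\|u(\tau)\|\le\varepsilon$ for all $\tau\ge T$. Then split the integral at $T$. The tail part is bounded by $M\|B\|\varepsilon/\lambda$. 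The head part is bounded by
\begin{align*}
M\|B\|\,\mathrm e^{-\lambda(t-T)}\int_{t_0}^{T}\|u\|\,\mathrm d\tau,
\end{align*}
which tends to $0$ as $t\to\infty$. Hence $\limsup_{t\to\infty}\|x(t)\|\le M\|B\|\varepsilon/\lambda$ for every $\varepsilon>0$, which gives $\|x(t)\|\to0$.

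The \textbf{main obstacle} is Step 1, specifically justifying that $a(t,Cx(t))\to0$ uniformly in $t$ when the perturbation is time-varying. Pointwise continuity in $\zeta$ alone would not suffice, because the time argument also moves. Here the uniform local Lipschitz bound from Assumption~\ref{asm:aLocLip} resolves the issue: since $\zeta(t)\to0$, eventually $\|\zeta(t)\|\le 1$, and then $\|a(t,\zeta(t))\|\le L_1\|\zeta(t)\|$ with $L_1$ independent of $t$. Everything else is the standard convolution estimate above.
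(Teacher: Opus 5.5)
Your proposal is correct and follows the paper's proof. Both use the rewriting $\dot x = A_{\mathrm{stab}}x + Bu$ from (\ref{eq:AsuCx}), note that $u(t)\to 0$, and apply the 0-CICS property of the Hurwitz system. The only difference is that you prove 0-CICS directly with the variation-of-constants splitting estimate, and you also check $a(t,\zeta(t))\to 0$ from the uniform local Lipschitz bound. The paper simply cites \cite[Sec.~5.8.1]{Logemann.2014} for 0-CICS and takes $a(t,\zeta(t))\to 0$ as the lemma's hypothesis.
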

\begin{proof}
Since $\|a(t,\zeta(t))\|\to 0$, we have $\|u(t)\|\to 0$ in (\ref{eq:AsuCx}). 
Due to the Hurwitz assumption, $\dot x=A_\mathrm{stab} x+Bu$ shows the converging-input-converging-state property (0-CICS) \cite[Sec.~5.8.1]{Logemann.2014}. Thus, $\|u(t)\|\to 0$ implies $\|x(t)\|\to 0$. 
\end{proof}

{\small \sffamily
\bibliographystyle{IEEEtran}
\bibliography{Literatur}
}

\end{document}